\newcommand{\p}{\partial}
\newcommand{\const}{\mathop{\rm const}\nolimits}
\newcommand{\Equiv}{\mathop{ \sim}}
\newcommand{\CV}{\mathop{\rm CV}\nolimits}
\newcommand{\CL}{\mathop{\rm CL}\nolimits}
\newcommand{\Ch}{\mathop{\rm Ch}\nolimits}
\newcommand{\Eop}{\mathop{\sf E}}
\newcommand{\Fder}{\mathop{\sf D}}
\newcommand{\prho}{p}
\newcounter{tbn}
\newcounter{mcasenum}
\renewcommand{\themcasenum}{{\rm\arabic{mcasenum}}}
\newtheorem{theorem}{Theorem}
\newtheorem{lemma}{Lemma}
\newtheorem{corollary}{Corollary}
\newtheorem{proposition}{Proposition}
{\theoremstyle{definition} \newtheorem{definition}{Definition}
\newtheorem{example}{Example}

\newtheorem{note}{Note}
\begin{document}

\par\noindent {\LARGE\bf
Group Analysis of Variable Coefficient\\ Diffusion--Convection Equations.\\ III. Conservation Laws 
\par}
{\vspace{4mm}\par\noindent {\bf N.M. Ivanova~$^\dag$, R.O. Popovych~$^\ddag$ and C. Sophocleous~$^\S$
} \par\vspace{2mm}\par}
{\vspace{2mm}\par\noindent {\it
$^\dag{}^\ddag$~Institute of Mathematics of NAS of Ukraine, 
3 Tereshchenkivska Str., 01601 Kyiv, Ukraine\\
}}
{\noindent \vspace{2mm}{\it
$\phantom{^\dag{}^\ddag}$~e-mail: ivanova@imath.kiev.ua, rop@imath.kiev.ua
}\par}

{\par\noindent\vspace{2mm} {\it
$^\ddag$~Fakult\"at f\"ur Mathematik, Universit\"at Wien, Nordbergstra{\ss}e 15, A-1090 Wien, Austria
} \par}

{\vspace{2mm}\par\noindent {\it
$^\S$~Department of Mathematics and Statistics, University of Cyprus,
CY 1678 Nicosia, Cyprus\\
}}
{\noindent {\it
$\phantom{^\S}$~e-mail: christod@ucy.ac.cy
} \par}

{\vspace{5mm}\par\noindent\hspace*{8mm}\parbox{140mm}{\small
The notions of generating sets of conservation laws of systems of differential equations with respect to symmetry groups and equivalence groups
are introduced and applied. This allows us to generalize essentially the procedure of finding potential symmetries
for the systems with multidimensional spaces of conservation laws.
A class of variable coefficient
(1+1)-dimensional nonlinear diffusion--convection equations of general form
$f(x)u_t=(g(x)A(u)u_x)_x+h(x)B(u)u_x$ is investigated.
Using the most direct method, we carry out two classifications of local conservation
laws up to equivalence relations generated by both usual and enhanced equivalence groups.
Equivalence with respect to~$\hat G^{\sim}$ and correct choice of gauge coefficients of equations
play the major role for simple and clear formulation of the final results.
The notion of contractions of conservation laws and one of characteristics of conservation laws are introduced and
contractions of conservation laws of diffusion--convection equations are found.
}\par\vspace{5mm}}





\section{Introduction}

The presented paper continues the series of works on modern trends of group analysis
illustrated on example of nonlinear variable coefficient diffusion--convection equations
\begin{equation} \label{eqDKfgh}
f(x)u_t=(g(x)A(u)u_x)_x+h(x)B(u)u_x
\end{equation}
started in~\cite{Ivanova&Popovych&Sophocleous2006Part1,Ivanova&Popovych&Sophocleous2006Part2}.
Here $f=f(x),$ $g=g(x),$ $h=h(x),$ $A=A(u)$ and $B=B(u)$ are arbitrary smooth functions of their variables,
$f(x)g(x)A(u)\!\neq\! 0$.

In the first two parts of the presented series we have shown an importance of detailed study of different kinds
of equivalence transformations for solving the group classification problem and some its applications.
In particular, we investigated equivalence groups (usual and extended ones), discussed their structure
and performed the complete group classification of class~\eqref{eqDKfgh}.
We determine that the complete usual equivalence group~$G^{\sim}$ of class~\eqref{eqDKfgh} consists of the transformations
\begin{gather*}
\tilde t=\delta_1 t+\delta_2,\quad
\tilde x=X(x), \quad
\tilde u=\delta_3 u+\delta_4, \\
\tilde f=\dfrac{\varepsilon_1\delta_1}{X_x} f, \quad
\tilde g=\varepsilon_1\varepsilon_2^{-1}X_x\, g, \quad
\tilde h=\varepsilon_1\varepsilon_3^{-1}h, \quad
\tilde A=\varepsilon_2A, \quad
\tilde B=\varepsilon_3B,
\end{gather*}
where $\delta_j$ $(j=\overline{1,4})$ and $\varepsilon_i$ $(i=\overline{1,3})$ are arbitrary constants,
$\delta_1\delta_3\varepsilon_1\varepsilon_2\varepsilon_3\not=0$, $X$ is an arbitrary smooth function of~$x$, $X_x\not=0$.

It appears that class~\eqref{eqDKfgh} admits further equivalence transformations which do not belong to~$G^{\sim}$
and depend on arbitrary elements in some non-fixed (possibly, nonlocal) way.
We have constructed the complete in this sense extended equivalence group~$\hat G^{\sim}$ of
class~\eqref{eqDKfgh}, using the direct method.
$\hat G^{\sim}$ is formed by the transformations
\begin{gather*}
\tilde t=\delta_1 t+\delta_2,\quad
\tilde x=X(x), \quad
\tilde u=\delta_3 u+\delta_4, \\
\tilde f=\dfrac{\varepsilon_1\delta_1\varphi}{X_x}f, \quad
\tilde g=\varepsilon_1\varepsilon_2^{-1}X_x\varphi\,g, \quad
\tilde h=\varepsilon_1\varepsilon_3^{-1}\varphi\,h, \quad
\tilde A=\varepsilon_2 A, \quad
\tilde B=\varepsilon_3 (B+\varepsilon_4 A),
\end{gather*}
where $\delta_j$ $(j=\overline{1,4})$ and $\varepsilon_i$ $(i=\overline{1,4})$ are arbitrary constants,
$\delta_1\delta_3\varepsilon_1\varepsilon_2\varepsilon_3\not=0$,
$X$ is an arbitrary smooth function of~$x$, $X_x\not=0$,
$\varphi=e^{-\varepsilon_4\int \frac{h(x)}{g(x)}dx}$.

It is shown that application of extended equivalence group~$\hat G^{\sim}$
is of vital importance for obtaining group classification in closed
explicit form~\cite{Ivanova&Popovych&Sophocleous2006Part1,Ivanova&Popovych&Sophocleous2004}.
Likewise we state below the necessity of application of equivalence transformations to investigation of conservation laws.
More precisely, it will be shown that classification of conservation laws of equations~\eqref{eqDKfgh}
with respect to the usual equivalence group $G^{\sim}$ can be formulated in an implicit form only.
At the same time, using the extended equivalence group~$\hat G^{\sim}$,
we can present the result of classification in a closed and simple form with a smaller number
of inequivalent equations having nontrivial conservation laws.

Using the gauge equivalence transformation $\tilde t=t$, $\tilde x=\int \frac{dx}{g(x)}$, $\tilde u=u$ we can reduce equation~(\ref{eqDKfgh}) to
\[
\tilde f(\tilde x)\tilde u_{\tilde t}= (A(\tilde u)
\tilde u_{\tilde x})_{\tilde x} + \tilde h(\tilde x)B(\tilde u)\tilde u_{\tilde x},
\]
where $\tilde f(\tilde x)=g(x)f(x)$, $\tilde g(\tilde x)=1$ and $\tilde h(\tilde x)=h(x)$.
That is why, without loss of generality we can restrict ourselves to investigation of the equation
\begin{equation} \label{eqDKfh}
f(x)u_t=\left(A(u)u_x \right)_x + h(x)B(u)u_x.
\end{equation}

Any transformation from~$\hat G^{\sim}$, which preserves the condition $g = 1$, has the form
\begin{equation} \label{EquivTransformationsDKfh}\arraycolsep=0em
\begin{array}{l}
\tilde t=\delta_1 t+\delta_2,\quad
\tilde x=\delta_5 \int e^{\delta_8\int\! h}dx+\delta_6, \quad
\tilde u=\delta_3 u+\delta_4,\\[1ex]
\tilde f=\delta_1\delta_5^{-1}\delta_9 fe^{-2\delta_8\int\! h}, \quad
\tilde h=\delta_9\delta_7^{-1} he^{-\delta_8\int\! h}, \\[1ex]
\tilde A=\delta_5\delta_9A, \quad
\tilde B=\delta_7(B+\delta_8A),
\end{array}
\end{equation}
where $\delta_i$ ($i=\overline{1,9}$) are arbitrary constants, $\delta_1\delta_3\delta_5\delta_7\delta_9\not=0$.
(Here and below $\int\! h=\int\! h(x)\,dx$.)
The set~$\hat G^{\sim}_1$ of such transformations
is a subgroup of~$\hat G^{\sim}$.

Result of group classification allows us to construct a number of exact solutions of different equations from class~\eqref{eqDKfgh}.
In the second part~\cite{Ivanova&Popovych&Sophocleous2006Part2} of the series we have used two different approaches for finding exact solutions:
direct application of Lie (conditional and generalized conditional) symmetries
and reconstruction of new solutions by acting on the known ones with additional equivalence transformations.

After analyzing the obtained results it appears that some of the previously classified equations can be regarded as limiting cases of
the other ones, that leads to the natural notion of contractions of (systems of) equations introduced
in the second part~\cite{Ivanova&Popovych&Sophocleous2006Part2} of the series. Ibid we investigated contractions of equations from class~\eqref{eqDKfgh}.

These ideas can be generalized for investigation of conservation laws of systems of differential equations.
Thus, e.g., there exists a close similarity between characteristics of conservation laws and symmetries that leads to the natural notions
of contractions of characteristics and conservation laws that will be considered below. These notions are illustrated by
example of conservation laws of equations~\eqref{eqDKfgh}.

Similarly to the group classification problem, we show that the complete description of the spaces of conservation laws
of equations~\eqref{eqDKfgh} is impossible without essential usage of (usual and extended) equivalence transformations.
Investigation of such equivalences allows us to simplify essentially technical calculations and to obtain results
in closed explicit form.
Equivalence with respect to the extended equivalence group and correct choice of gauge coefficients of equations
allow us to obtain clear formulation of the final results.
Note that for wide classes of nonlinear equations, this is the only way
to obtain the complete description of spaces of conservation laws
(see, e.g.,~\cite{Ivanova&Popovych&Sophocleous2004,Ivanova2006CLDifMultiDim,Popovych&Ivanova2004ConsLawsLanl}
for detailed analysis of application of such equivalences
for finding conservation laws of different classes of diffusion-type equations).

The notion of equivalence of conservation laws with respect to a group of transformations
which was introduced in~\cite{Popovych&Ivanova2004ConsLawsLanl} can be generalized in several directions:
classification of pairs ``system + space of conservation laws'';
classification of conservation laws for a given system with respect to its symmetry group;
classification of pairs ``system + a conservation law''.
Such classification schemes are helpful both for ordering and deeper understanding of the found conservation laws and
for further applications.
For instance, investigation of different generating sets of conservation laws of equations~\eqref{eqDKfgh} allows us to
close the ``blank spot'' in theory of potential symmetries.
Namely, previously, for construction of simplest potential systems in cases when the dimension of the space of local conservation laws
is greater then one,  only basis conservation laws were used.
However, the basis conservation laws may be equivalent with respect to groups of symmetry transformations,
or vice versa, the number of $G^{\sim}$-independent linear combinations of conservation laws
may be grater then dimension of the space of conservation laws.
The first possibility leads to an unnecessary, often cumbersome, investigation of equivalent systems,
the second one makes possible missing a great number of inequivalent potential systems.
Below we show how to choose conservation laws in order to obtain all possible inequivalent potential systems
associated to the given system.

In Section~\ref{SectionBasicDef} we adduce basic definitions and statements on conservation laws.
An important notion of characteristics of conservation laws is discussed in Section~\ref{SectionOnCharacteristicsOfConsLaws}.
Following the spirit of~\cite{Popovych&Ivanova2004ConsLawsLanl} we repeat a definition of equivalence of conservation laws
with respect to a transformation group in Section~\ref{SectionOnEquivOfConsLaws}.
In Section~\ref{SectionOnEquivOfConsLaws} we introduce also the notions of generating sets of conservation laws
with respect to symmetry groups and with respect to equivalence groups.
After that (Section~\ref{SectionOnConsLawsOfEqDKfgh}) two different classifications of conservation laws of equations~\eqref{eqDKfgh}
are presented, namely, classification with respect to the usual equivalence group and one with respect to the extended equivalence group.
Next (Section~\ref{SectionOnContractionsOfCLsOfDKfgh}), we introduce the notions of contractions conservation laws
and ones of characteristics of conservation laws and give some examples of the contractions in class~\eqref{eqDKfgh}.
Generating sets of conservation laws of diffusion--convection equations~\eqref{eqDKfgh} are investigated in Section~\ref{SectionOnGenSetsOfCLsDCEs}.
Such investigation form a basis for finding all possible inequivalent potential systems of equations~\eqref{eqDKfgh}
presented in Section~\ref{SectionPotSysOfDifConvEq}.
In Section~\ref{SectionOnPotSymPotCLsTheory} we discuss shortly potential symmetries and potential conservation laws in general case
and in case of systems with two independent variables.
Potential conservation laws of equations~\eqref{eqDKfgh} are investigated in Section~\ref{SectionOnPotConsLawsOfDCEs}.

A detailed investigation of potential symmetries obtained from the constructed potential systems will be performed in the last
part~\cite{Ivanova&Popovych&Sophocleous2006Part4} of this series.

\section{Basic definitions and statements on conservation laws}\label{SectionBasicDef}

In this and next two sections we give basic definitions and statements on conservation laws and
formulate the notion of equivalence of conservation laws with respect to equivalence groups,
which was first introduced in~\cite{Popovych&Ivanova2004ConsLawsLanl} and some of its generalizations.
This notion is a basis for modification of the direct method of construction of conservation laws,
which is applied in Section~\ref{SectionOnConsLawsOfEqDKfgh} for exhaustive classification of
local conservation laws equations from class~(\ref{eqDKfgh}).

Let~$\mathcal{L}$ be a system~$L(x,u_{(\rho)})=0$ of $l$ differential equations $L^1=0$, \ldots, $L^l=0$
for $m$ unknown functions $u=(u^1,\ldots,u^m)$
of $n$ independent variables $x=(x_1,\ldots,x_n).$
Here $u_{(\rho)}$ denotes the set of all the derivatives of the functions $u$ with respect to $x$
of order no greater than~$\rho$, including $u$ as the derivatives of the zero order.
Let $\mathcal{L}_{(k)}$ denote the set of all algebraically independent differential consequences
that have, as differential equations, orders not greater than $k$. We identify~$\mathcal{L}_{(k)}$ with
the manifold determined by~$\mathcal{L}_{(k)}$ in the jet space~$J^{(k)}$.

\begin{definition}\label{def.conservation.law}
A {\em conserved vector} of the system~$\mathcal{L}$ is
an $n$-tuple $F=(F^1(x,u_{(r)}),\ldots,F^n(x,u_{(r)}))$ for which the divergence ${\rm Div}\,F=D_iF^i$
vanishes for all solutions of~$\mathcal{L}$ (i.e. ${\rm Div}F\bigl|_\mathcal{L}=0$).
\end{definition}

In Definition~\ref{def.conservation.law} and below
$D_i=D_{x_i}$ denotes the operator of total differentiation with respect to the variable~$x_i$, i.e.,
$D_i=\p_{x_i}+u^a_{\alpha,i}\p_{u^a_\alpha}$, where
$u^a_\alpha$ and $u^a_{\alpha,i}$ stand for the variables in jet spaces,
which correspond to derivatives
$\p^{|\alpha|}u^a/\p x_1^{\alpha_1}\ldots\p x_n^{\alpha_n}$ and $\p u^a_\alpha/\p x_i$,
$\alpha=(\alpha_1,\ldots,\alpha_n)$,
$\alpha_i\in\mathbb{N}\cup\{0\}$, $|\alpha|{:}=\alpha_1+\cdots+\alpha_n$.
We use the summation convention for repeated indices and assume any function as its zero-order derivative.
The notation~$V\bigl|_\mathcal{L}$ means that values of $V$ are considered
only on solutions of the system~$\mathcal{L}$.

\begin{definition}
A conserved vector $F$ is called {\em trivial} if $F^i=\hat F^i+\check F^i,$ $i=\overline{1,n},$
where $\hat F^i$ and $\check F^i$ are, likewise $F^i$, functions of $x$ and derivatives of $u$
(i.e., differential functions),
$\hat F^i$ vanish on the solutions of~$\mathcal L$ and the $n$-tuple $\check F=(\check F^1,\ldots,\check F^n)$
is a null divergence (i.e., its divergence vanishes identically).
\end{definition}

The triviality concerning the vanishing conserved vectors on solutions of the system can be easily
eliminated by confining on the manifold of the system, taking into account all its necessary differential consequences.
A characterization of all null divergences is given by the following lemma (see e.g.~\cite{Olver1986}).

\begin{lemma}\label{lemma.null.divergence}
The $n$-tuple $F=(F^1,\ldots,F^n)$, $n\ge2$, is a null divergence ($\mathop{\rm Div}\nolimits F\equiv0$)
iff there exist smooth functions $v^{ij}$ ($i,j=\overline{1,n}$) of $x$ and derivatives of $u$,
such that $v^{ij}=-v^{ji}$ and $F^i=D_jv^{ij}$.
\end{lemma}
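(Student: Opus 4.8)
\emph{Plan.} The statement is an equivalence, and its two directions are of very different character: the ``if'' part is an immediate identity, while the ``only if'' part is the substantive claim, amounting to a Poincar\'e-type lemma for total divergences.

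\emph{Sufficiency.} For the ``if'' direction I would simply compute the divergence of $F^i=D_jv^{ij}$. Since the total derivative operators commute, $D_iD_j=D_jD_i$, the operator $D_iD_j$ is symmetric in the pair $(i,j)$, whereas $v^{ij}=-v^{ji}$ is antisymmetric; hence the full contraction vanishes,
\[
\mathop{\rm Div}\nolimits F=D_iF^i=D_iD_jv^{ij}=0,
\]
identically on the jet space. No recourse to the equations~$\mathcal L$ is needed here.

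\emph{Necessity, strategy.} For the converse I would reinterpret the $n$-tuple $F$ as the horizontal $(n-1)$-form $\omega=\sum_i(-1)^{i-1}F^i\,dx_1\wedge\cdots\wedge\widehat{dx_i}\wedge\cdots\wedge dx_n$ on the infinite jet space, where $D$ denotes the total (horizontal) differential built from the operators $D_i$. A direct computation gives $D\omega=(\mathop{\rm Div}\nolimits F)\,dx_1\wedge\cdots\wedge dx_n$, so the hypothesis $\mathop{\rm Div}\nolimits F\equiv0$ says precisely that $\omega$ is $D$-closed. What must be shown is that $\omega$ is then $D$-exact, $\omega=D\zeta$ for a horizontal $(n-2)$-form $\zeta$, since the components of $\zeta$ furnish exactly an antisymmetric family $v^{ij}$ with $F^i=D_jv^{ij}$. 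This is the algebraic Poincar\'e lemma for the horizontal complex, which is exact at form degrees strictly below~$n$; the restriction $n\ge2$ ensures we are below the top degree, where the cohomology (governed by the Euler operator) becomes nontrivial. To produce $\zeta$ I would invoke a contracting homotopy $h$ satisfying $Dh+hD=\mathrm{id}$ on horizontal forms of positive degree below~$n$, so that for the closed form $\omega$ one gets $\omega=D(h\omega)$ and may set $\zeta=h\omega$.

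\emph{The core construction and the obstacle.} Avoiding the full bicomplex machinery, the homotopy can be realized concretely by induction on the highest order $r$ of derivatives of $u$ occurring in $F$. Collecting in the identity $D_iF^i\equiv0$ the coefficients of the order-$(r+1)$ jet variables, which enter linearly, yields for each component $a$ and each multi-index $\tau$ with $|\tau|=r+1$ the relation $\sum_i \p F^i/\p u^a_{\tau-e_i}=0$. These relations are exactly the integrability conditions allowing the leading ($r$-th order) part of $F^i$ to be written as $D_jw^{ij}$ with $w^{ij}=-w^{ji}$; then $F^i-D_jw^{ij}$ has order at most $r-1$ and the induction continues, the base case (where $F^i$ depends on $x$ alone) being the classical Poincar\'e lemma on $\mathbb R^n$. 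I expect the main obstacle to be this inductive step: maintaining antisymmetry consistently while descending in order, and controlling the multi-index combinatorics of the $(r+1)$-order coefficients, so that the successively extracted potentials $w^{ij}$ assemble into a single globally antisymmetric $v^{ij}$ realizing $F^i=D_jv^{ij}$.
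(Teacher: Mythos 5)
The paper itself contains no proof of this lemma: it is stated as a known result with the pointer ``see e.g.~\cite{Olver1986}'', so there is no in-house argument to compare against, and your plan is in fact a reconstruction of the standard reference proof (exactness of the horizontal/total-divergence complex below top degree, established in Olver via homotopy operators). Your sufficiency direction is complete and correct as written: $D_iD_j$ is symmetric in $(i,j)$ because total derivatives commute, $v^{ij}$ is antisymmetric, so the contraction $D_iD_jv^{ij}$ vanishes identically, with no reference to the system~$\mathcal L$.

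The necessity direction, however, is a program rather than a proof, and the gap sits exactly where you flag it. You invoke a contracting homotopy $h$ with $Dh+hD=\mathrm{id}$ on horizontal forms of positive degree below $n$, but never construct it --- and that construction \emph{is} the content of the lemma: the naive fibrewise Poincar\'e homotopy does not work here, since total derivatives, unlike partial derivatives, raise jet order, which is why the ``algebraic Poincar\'e lemma'' requires the machinery of higher Euler operators or the variational bicomplex. Your fallback elementary induction also stalls at its decisive step, and the difficulty is sharper than multi-index bookkeeping: the symbol relations $\sum_i\partial F^i/\partial u^a_{\tau-e_i}=0$ ($|\tau|=r+1$) must be solved for an antisymmetric $w^{ij}$ \emph{and simultaneously} one must arrange that the descent lowers the order --- note that for $w^{ij}$ of order $r$ the expression $D_jw^{ij}$ generically has order $r+1$, so the two requirements interact and no argument is offered that they can be met at once (the base case via the classical Poincar\'e lemma is fine, modulo the usual caveat that the underlying domain be star-shaped or contractible). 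If you want a self-contained proof, the clean route is the explicit total homotopy operator built from higher Euler operators, which outputs the antisymmetric potentials $v^{ij}$ directly; if a citation suffices, then your appeal to exactness of the horizontal complex is precisely on par with the paper's own treatment, which simply cites Olver.
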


The functions $v^{ij}$ are called {\em potentials} corresponding to the null divergence~$F$.
If $n=1$ any null divergence is constant.

\begin{definition}\label{DefinitionOfConsVectorEquivalence}
Two conserved vectors $F$ and $F'$ are called {\em equivalent} if
the vector-function $F'-F$ is a trivial conserved vector.
\end{definition}

The above definitions of triviality and equivalence of conserved vectors are natural
in view of the usual ``empiric'' definition of conservation laws of a system of differential equations
as divergences of its conserved vectors, i.e., divergence expressions which vanish for all solutions of this system.
For example, equivalent conserved vectors correspond to the same conservation law.
It allows us to formulate the definition of conservation law in a rigorous style (see, e.g.,~\cite{Zharinov1986}).
Namely, for any system~$\mathcal{L}$ of differential equations the set~$\CV(\mathcal{L})$ of its conserved vectors is a linear space,
and the subset~$\CV_0(\mathcal{L})$ of trivial conserved vectors is a linear subspace in~$\CV(\mathcal{L})$.
The factor space~$\CL(\mathcal{L})=\CV(\mathcal{L})/\CV_0(\mathcal{L})$
coincides with the set of equivalence classes of~$\CV(\mathcal{L})$ with respect to the equivalence relation adduced in
Definition~\ref{DefinitionOfConsVectorEquivalence}.

\begin{definition}\label{DefinitionOfConsLaws}
The elements of~$\CL(\mathcal{L})$ are called {\em conservation laws} of the system~$\mathcal{L}$,
and the whole factor space~$\CL(\mathcal{L})$ is called {\em the space of conservation laws} of~$\mathcal{L}$.
\end{definition}

That is why we assume description of the set of conservation laws
as finding~$\CL(\mathcal{L})$ which is equivalent to construction of either a basis if
$\dim \CL(\mathcal{L})<\infty$ or a system of generatrices in the infinite dimensional case.
The elements of~$\CV(\mathcal{L})$ which belong to the same equivalence class giving a conservation law~${\cal F}$
are considered all as conserved vectors of this conservation law,
and we will additionally identify elements from~$\CL(\mathcal{L})$ with their representatives
in~$\CV(\mathcal{L})$.
For $F\in\CV(\mathcal{L})$ and ${\cal F}\in\CL(\mathcal{L})$
the notation~$F\in {\cal F}$ will denote that $F$ is a conserved vector corresponding
to the conservation law~${\cal F}$.
In contrast to the order $r_F$ of a conserved vector~$F$ as the maximal order of derivatives explicitly appearing in~$F$,
the {\em order of the conservation law}~$\cal F$
is called $\min\{r_F\,|\,F\in{\cal F}\}$.
Under linear dependence of conservation laws we understand linear dependence of them as elements of~$\CL(\mathcal{L})$.
Therefore, in the framework of ``representative'' approach
conservation laws of a system~$\mathcal{L}$ are considered {\em linearly dependent} if
there exists a linear combination of their representatives, which is a trivial conserved vector.

\section{Characteristics of conservation laws}\label{SectionOnCharacteristicsOfConsLaws}

Let the system~$\cal L$ be totally nondegenerate~\cite{Olver1986}.
Then application of the Hadamard lemma to the definition of conserved vector and integrating by parts imply that
divergence of any conserved vector of~$\mathcal L$ can be always presented,
up to the equivalence relation of conserved vectors,
as a linear combination of left hand sides of independent equations from $\mathcal L$
with coefficients~$\lambda^\mu$ being functions on a suitable jet space~$J^{(k)}$:
\begin{equation}\label{CharFormOfConsLaw}
\mathop{\rm Div}\nolimits F=\lambda^\mu L^\mu.
\end{equation}
Here the order~$k$ is determined by~$\mathcal L$ and the allowable order of conservation laws,
$\mu=\overline{1,l}$.

\begin{definition}\label{DefCharForm}
Formula~\eqref{CharFormOfConsLaw} and the $l$-tuple $\lambda=(\lambda^1,\ldots,\lambda^l)$
are called the {\it characteristic form} and the {\it characteristic}
of the conservation law~$\mathop{\rm Div}\nolimits F=0$ correspondingly.
\end{definition}

The characteristic~$\lambda$ is {\em trivial} if it vanishes for all solutions of $\cal L$.
Since $\cal L$ is nondegenerate, the characteristics~$\lambda$ and~$\tilde\lambda$ satisfy~\eqref{CharFormOfConsLaw}
for the same~$F$ and, therefore, are called {\em equivalent}
iff $\lambda-\tilde\lambda$ is a trivial characteristic.
Similarly to conserved vectors, the set~$\Ch(\mathcal{L})$ of characteristics
corresponding to conservation laws of the system~$\cal L$ is a linear space,
and the subset~$\Ch_0(\mathcal{L})$ of trivial characteristics is a linear subspace in~$\Ch(\mathcal{L})$.
The factor space~$\Ch_{\rm f}(\mathcal{L})=\Ch(\mathcal{L})/\Ch_0(\mathcal{L})$
coincides with the set of equivalence classes of~$\Ch(\mathcal{L})$
with respect to the above characteristic equivalence relation.

The following result~\cite{Olver1986} forms the cornerstone for the methods of studying conservation laws,
which are based on formula~\eqref{CharFormOfConsLaw}, including the Noether theorem and
the direct method in the version by
Anco and Bluman~\cite{Anco&Bluman2002a,Anco&Bluman2002b}.

\begin{theorem}[\cite{Olver1986}]\label{TheoremIsomorphismChCV}
Let~$\mathcal{L}$ be a normal, totally nondegenerate system of differential equations.
Then representation of conservation laws of~$\mathcal{L}$ in the characteristic form~\eqref{CharFormOfConsLaw}
generates a one-to-one linear mapping between~$\CL(\mathcal{L})$ and~$\Ch_{\rm f}(\mathcal{L})$.
\end{theorem}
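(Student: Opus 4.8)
The plan is to exhibit the assignment $\mathcal F\mapsto[\lambda]$, sending a conservation law to the class of its characteristic, as a well-defined linear isomorphism $\Phi\colon\CL(\mathcal L)\to\Ch_{\rm f}(\mathcal L)$, treating well-definedness, surjectivity and injectivity separately. The external facts I would invoke are: the Hadamard-lemma representation already recorded above (any differential function vanishing on the solutions of the totally nondegenerate system~$\mathcal L$ lies in the ideal generated by the $L^\mu$ and their total derivatives), which underlies formula~\eqref{CharFormOfConsLaw}; the exactness criterion for the Euler operator~$\Eop$, namely that a differential function $P$ is a total divergence $P=\mathop{\rm Div}\nolimits F$ if and only if $\Eop(P)=0$; and the integration-by-parts identity $\Eop_a(\lambda^\mu L^\mu)=(\Fder_\lambda^{*}L)_a+(\Fder_L^{*}\lambda)_a$ relating the Euler operator to the adjoint Fr\'echet derivatives $\Fder_\lambda^{*}$ and $\Fder_L^{*}$.

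\emph{Well-definedness and linearity.} For $F\in\CV(\mathcal L)$, total nondegeneracy and the Hadamard lemma yield a characteristic $\lambda$ with $\mathop{\rm Div}\nolimits F=\lambda^\mu L^\mu$. If $\tilde\lambda$ is another such $l$-tuple for the same~$F$, then $(\lambda-\tilde\lambda)^\mu L^\mu\equiv0$, so $\lambda-\tilde\lambda$ vanishes on solutions and is a trivial characteristic; hence $[\lambda]\in\Ch_{\rm f}(\mathcal L)$ depends on $F$ alone. If $F$ is replaced by an equivalent conserved vector $F+\hat F+\check F$ with $\hat F$ vanishing on solutions and $\check F$ a null divergence, then $\mathop{\rm Div}\nolimits(\hat F+\check F)=\mathop{\rm Div}\nolimits\hat F$, and expressing $\hat F$ through its Hadamard representation shows this divergence has the form $\mu^\mu L^\mu$ with $\mu$ vanishing on solutions, so the characteristic changes only by the trivial characteristic $\mu$. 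Thus $\Phi$ descends to $\CL(\mathcal L)$, and its linearity is immediate from the linearity of $\mathop{\rm Div}\nolimits$ and of~\eqref{CharFormOfConsLaw}. Surjectivity is essentially built into the definition of $\Ch(\mathcal L)$ as the space of characteristics corresponding to conservation laws: each class of $\Ch_{\rm f}(\mathcal L)$ is represented by some $\lambda$ arising from a conserved vector $F$ via~\eqref{CharFormOfConsLaw}, whence $\Phi([F])=[\lambda]$.

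The step I expect to be the main obstacle is injectivity, and it is precisely here that normality, beyond total nondegeneracy, is needed. Suppose $\Phi(\mathcal F)=0$, so some $F\in\mathcal F$ admits a characteristic $\lambda$ vanishing on all solutions; I must prove $F$ trivial. By the Hadamard lemma each $\lambda^\mu$ is a combination of the $L^\nu$ and their total derivatives; substituting this into $\mathop{\rm Div}\nolimits F=\lambda^\mu L^\mu$ and integrating by parts to move all derivatives off the factors $L^\nu$ should rewrite the right-hand side as $\mathop{\rm Div}\nolimits\hat F$ with $\hat F$ vanishing on solutions. Then $F-\hat F$ has identically vanishing divergence, hence is a null divergence $\check F$ by Lemma~\ref{lemma.null.divergence}, and $F=\hat F+\check F$ is trivial, giving $\ker\Phi=0$.

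The delicate point is the validity of that reduction: the passage from ``$\lambda$ vanishes on solutions'' to a representation $\lambda^\mu L^\mu=\mathop{\rm Div}\nolimits\hat F$ with $\hat F$ again vanishing on solutions. For abnormal systems (for instance, certain over- or underdetermined ones) a characteristic vanishing on solutions can correspond to a \emph{nontrivial} conserved vector, and $\Phi$ then fails to be injective; known counterexamples show this cannot be avoided by total nondegeneracy alone. Normality is exactly the hypothesis that guarantees the ideal membership of a trivial characteristic is of the form this integration by parts requires, so that every trivial characteristic corresponds to a trivial conserved vector. Combining the three steps then yields the asserted one-to-one linear correspondence between $\CL(\mathcal L)$ and $\Ch_{\rm f}(\mathcal L)$.
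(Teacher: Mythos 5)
You should first note a mismatch of expectations: the paper contains no proof of this statement at all --- it is imported verbatim from \cite{Olver1986} (Olver's Theorem 4.26) --- so your proposal can only be judged against the standard proof there. Your architecture (well-definedness, linearity, surjectivity, injectivity) is sound, and you rightly sense that normality is the crux; but your execution has a genuine gap, and in fact in two places rather than one. First, your well-definedness step is wrong as argued. You claim that for a trivial conserved vector $\hat F$ (vanishing on solutions), the Hadamard representation $\hat F^i=Q^{i\mu\alpha}D^\alpha L^\mu$ plus integration by parts shows $\mathop{\rm Div}\nolimits\hat F=\mu^\mu L^\mu$ with $\mu$ vanishing on solutions. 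It does not: integration by parts produces coefficients of the form $(-D)^\alpha$ applied to the $Q$'s, and these need not vanish on solutions. A concrete counterexample: for the (abnormal, overdetermined) system $L^1=u_y$, $L^2=-u_x$, the conserved vector $\hat F=(-yu_y,\,yu_x)$ vanishes on solutions, hence is trivial, yet $\mathop{\rm Div}\nolimits\hat F=u_x=-L^2$, whose characteristic $(0,-1)$ is nontrivial; the failure is driven by the differential identity $D_xL^1+D_yL^2\equiv0$, which is exactly what normality excludes. So normality is needed already for the map $\CL(\mathcal{L})\to\Ch_{\rm f}(\mathcal{L})$ to be well defined (equivalent conserved vectors can have inequivalent characteristics for abnormal systems), whereas you invoke it only for injectivity and justify well-definedness by precisely the integration by parts that this example defeats.

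Second, for injectivity you name the obstacle but do not overcome it: the sentence that normality ``is exactly the hypothesis that guarantees the ideal membership \dots is of the form this integration by parts requires'' is an assertion of the theorem's entire content, not an argument. If $\lambda^\mu=Q^{\mu\nu\alpha}D^\alpha L^\nu$ vanishes on solutions, then integrating $\lambda^\mu L^\mu$ by parts leaves irreducible residues quadratic in the $L$'s (terms like $c_{\mu\nu}L^\mu L^\nu$ and balanced derivative terms), and such expressions are \emph{not}, as identities, divergences of solution-vanishing vectors; disposing of them is the substantive work. Olver's actual proof proceeds by putting the system in Kovalevskaya form (whose existence is what normality delivers), defining the reduction of a differential function by recursive substitution of the solved leading derivatives, showing that every conserved vector is equivalent to its reduction and that a reduced differential function vanishing on solutions vanishes identically, and only then applying the null-divergence characterization of Lemma~\ref{lemma.null.divergence}. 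None of this machinery, nor any substitute for it, appears in your sketch. As it stands, your proposal is a correct plan with the two decisive steps --- both instances where normality must actually be \emph{used} --- missing.
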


Using properties of total divergences, we can exclude the conserved vector~$F$ from~\eqref{CharFormOfConsLaw}
and obtain a condition for the characteristic~$\lambda$ only.
Namely, a differential function~$f$ is a total divergence, i.e., $f=\mathop{\rm Div} F$
for some $n$-tuple~$F$ of differential functions iff $\Eop(f)=0$.
Hereafter the Euler operator~$\Eop=(\Eop^1,\ldots, \Eop^m)$ is the $m$-tuple of differential operators
\[
{\Eop}^a=(-D)^\alpha\p_{u^a_\alpha}, \quad a=\overline{1,m},
\]
where
$\alpha=(\alpha_1,\ldots,\alpha_n)$ runs the multi-indices set ($\alpha_i\!\in\!\mathbb{N}\cup\{0\}$),
$(-D)^\alpha=(-D_1)^{\alpha_1}\ldots(-D_m)^{\alpha_m}$.
Therefore, action of the Euler operator on~\eqref{CharFormOfConsLaw}
results to the equation
\begin{equation}\label{NSCondOnChar}
\Eop(\lambda^\mu L^\mu)={\Fder}_\lambda^*(L)+{\Fder}_L^*(\lambda)=0,
\end{equation}
which is a necessary and sufficient condition on characteristics of conservation laws for the system~$\mathcal{L}$.
The matrix differential operators~${\Fder}_\lambda^*$ and~${\Fder}_L^*$ are the adjoints of
the Fr\'echet derivatives~${\Fder}_\lambda^{\phantom{*}}$ and~${\Fder}_L^{\phantom{*}}$, i.e.,
\[
{\Fder}_\lambda^*(L)=\left((-D)^\alpha\left( \dfrac{\p\lambda^\mu}{\p u^a_\alpha}L^\mu\right)\right), \qquad
{\Fder}_L^*(\lambda)=\left((-D)^\alpha\left( \dfrac{\p L^\mu}{\p u^a_\alpha}\lambda^\mu\right)\right).
\]
Since ${\Fder}_\lambda^*(L)=0$ automatically on solutions of~$\mathcal{L}$ then
equation~\eqref{NSCondOnChar} implies a necessary condition for $\lambda$ to belong to~$\Ch(\mathcal{L})$:
\begin{equation}\label{NCondOnChar}
{\Fder}_L^*(\lambda)\bigl|_{\mathcal{L}}=0.
\end{equation}
Condition~\eqref{NCondOnChar} can be considered as adjoint to the criterion
${\Fder}_L^{\phantom{*}}(\eta)\bigl|_{\mathcal{L}}=0$ for infinitesimal invariance of $\mathcal{L}$
with respect to evolutionary vector field having the characteristic~$\eta=(\eta^1,\ldots,\eta^m)$.
That is why solutions of~\eqref{NCondOnChar} are called sometimes as
{\em cosymmetries}~\cite{Blaszak1998} or
{\em adjoint symmetries}~\cite{Anco&Bluman2002b}.

\section{Equivalence of conservation laws\\ with respect to transformation groups}\label{SectionOnEquivOfConsLaws}

We can essentially simplify and order classification of conservation laws, taking into account additionally
symmetry transformations of a system or equivalence transformations of a whole class of systems.
Such problem is similar to one of group classification of differential equations.

\begin{proposition}
Any point transformation~$g$ maps a class of equations in the conserved form into itself.
More exactly, the transformation~$g$: $\tilde x=x_g(x,u)$, $\tilde u=u_g(x,u)$ prolonged to the jet space~$J^{(r)}$
transforms the equation $D_iF^i=0$ to the equation $D_iF^i_g=0$. The transformed conserved vector~$F_g$ is determined
by the formula
\begin{equation}\label{eq.tr.var.cons.law}
F_g^i(\tilde x,\tilde u_{(r)})=\frac{D_{x_j}\tilde x_i}{|D_x\tilde x|}\,F^j(x,u_{(r)}),
\quad\mbox{i.e.}\quad
F_g(\tilde x,\tilde u_{(r)})=\frac{1}{|D_x\tilde x|}(D_x\tilde x)F(x,u_{(r)})
\end{equation}
in the matrix notions. Here $|D_x\tilde x|$ is the determinant of the matrix $D_x\tilde x=(D_{x_j}\tilde x_i)$.
\end{proposition}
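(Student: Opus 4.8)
The plan is to derive the single pointwise identity
\[
D_{\tilde x_i}F^i_g=\frac{1}{|D_x\tilde x|}\,D_{x_j}F^j ,
\]
valid identically on the jet space (not merely on solutions), from which the proposition is immediate: because the prolongation of~$g$ maps solutions of the equation $D_iF^i=0$ onto solutions of its $g$-image and $|D_x\tilde x|\ne0$, the right-hand side vanishes on solutions exactly when $D_iF^i=0$ does, forcing $D_{\tilde x_i}F^i_g=0$. The whole argument rests on the transformation rule for total-derivative operators under the prolonged point transformation. Writing $A=(A_{ij})$ with $A_{ij}=D_{x_j}\tilde x_i$ and $J=|D_x\tilde x|=\det A$, invertibility of~$A$ gives $D_{x_j}=A_{ij}D_{\tilde x_i}$ and, inverting, $D_{\tilde x_i}=\bar B_{mi}D_{x_m}$ with $\bar B=A^{-1}$, $\bar B_{mi}=D_{\tilde x_i}x_m$; in particular $\bar B_{mi}A_{ij}=\delta_{mj}$.

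First I would substitute the defining relation $F^i_g=J^{-1}A_{ij}F^j$ into $D_{\tilde x_i}F^i_g$, replace $D_{\tilde x_i}$ by $\bar B_{mi}D_{x_m}$, and apply the product rule, splitting the result into three terms according to whether $D_{x_m}$ acts on $J^{-1}$, on $A_{ij}$, or on $F^j$. The third term collapses at once: the contraction $\bar B_{mi}A_{ij}=\delta_{mj}$ turns it into $J^{-1}D_{x_j}F^j$, which is precisely the claimed right-hand side. It thus remains to show that the first two terms cancel.

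This cancellation is the crux; it is nothing but the \emph{Piola identity} written with total derivatives. In the $J^{-1}$-term I would use Jacobi's formula $D_{x_m}J=J\,\bar B_{ba}\,D_{x_m}A_{ab}$ (the derivative of the determinant expressed through its cofactors $J\bar B$) together with $\bar B_{mi}A_{ij}=\delta_{mj}$; after relabelling it becomes $-J^{-1}\bar B_{mi}(D_{x_j}A_{im})F^j$, while the $A_{ij}$-term is $J^{-1}\bar B_{mi}(D_{x_m}A_{ij})F^j$. Their sum carries the factor $D_{x_m}A_{ij}-D_{x_j}A_{im}$, which vanishes because $A_{ij}=D_{x_j}\tilde x_i$ and total derivatives commute, so that $D_{x_m}D_{x_j}\tilde x_i=D_{x_j}D_{x_m}\tilde x_i$. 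This symmetry of the second total derivatives of the $\tilde x_i$ is the only genuinely nontrivial point; all else is bookkeeping. I expect this step to be the main obstacle, since organizing the index contractions so that Jacobi's formula and the commutation of $D_x$ combine cleanly is where errors creep in. As a cross-check and coordinate-free alternative, I would note that a conserved vector corresponds to the $(n-1)$-form $\omega=\sum_i(-1)^{i-1}F^i\,dx_1\wedge\cdots\wedge\widehat{dx_i}\wedge\cdots\wedge dx_n$ with $d\omega=(D_iF^i)\,dx_1\wedge\cdots\wedge dx_n$; since exterior differentiation commutes with pullback by the diffeomorphism~$g$ and the component rule of~$\omega$ under~$g$ is exactly~\eqref{eq.tr.var.cons.law}, closedness of~$\omega$ — equivalently the vanishing of the divergence — is preserved, which reproves the proposition.
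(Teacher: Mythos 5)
Your argument is correct and complete. Note first that the paper states this proposition without any proof (it is quoted as standard material, going back to the direct-method literature, e.g.\ the paper of Popovych and Ivanova cited as the source of the equivalence notion), so there is no in-paper proof to compare against; your write-up supplies exactly the verification the authors leave implicit. Your computation is the standard one: the operator identity $D_{x_j}=A_{ij}D_{\tilde x_i}$ with $A_{ij}=D_{x_j}\tilde x_i$ (valid because $g$ is prolonged to the jet space), the product rule applied to $F^i_g=J^{-1}A_{ij}F^j$, Jacobi's formula $D_{x_m}J=J\bar B_{ba}D_{x_m}A_{ab}$, and the cancellation of the two residual terms via the symmetry $D_{x_m}A_{ij}=D_{x_j}A_{im}$, which holds since total derivatives commute and $A_{ij}=D_{x_j}\tilde x_i$ — the Piola identity, as you say. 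I checked the index bookkeeping: with $\bar B_{mi}A_{ij}=\delta_{mj}$ the three terms are $-J^{-1}\bar B_{mi}(D_{x_j}A_{im})F^j$, $J^{-1}\bar B_{mi}(D_{x_m}A_{ij})F^j$ and $J^{-1}D_{x_j}F^j$, exactly as you state, so the identity $D_{\tilde x_i}F^i_g=J^{-1}D_{x_j}F^j$ holds on the whole jet space; since $J\ne0$ for an invertible point transformation, the equation $D_iF^i=0$ is carried to $D_iF^i_g=0$, and one does not even need your preliminary remark about prolongations mapping solutions to solutions — the pointwise identity already does all the work. Your coordinate-free cross-check via the $(n-1)$-form $\omega$ is also sound (the component rule under pullback is precisely the cofactor formula $F^j=J\bar B_{ji}F^i_g$, equivalent to~\eqref{eq.tr.var.cons.law}, and $d$ commutes with pullback), with the usual caveat that on the jet space one should phrase it through the horizontal differential or evaluate on sections; this conceptual route explains \emph{why} the cofactor weight $J^{-1}A_{ij}$ is the right one, whereas the direct computation is what generalizes mechanically to the situations the paper actually uses (equivalence and symmetry group actions on conserved vectors).
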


\begin{note}
In the case of one dependent variable ($m=1$) $g$ can be a contact transformation:
$\tilde x=x_g(x,u_{(1)})$, $\tilde u_{(1)}=u_{g(1)}(x,u_{(1)})$.
Similar notes are also true for the statements below.
\end{note}

\begin{definition}
Let $G$ be a symmetry group of the system~$\mathcal{L}$.
Two conservation laws with the conserved vectors $F$ and $F'$ are called {\em $G$-equivalent} if
there exists a transformation $g\in G$ such that the conserved vectors $F_g$ and $F'$
are equivalent in the sense of Definition~\ref{DefinitionOfConsVectorEquivalence}.
\end{definition}

Any transformation $g\in G$ induces a linear one-to-one mapping $g_*$ in~$\CV(\mathcal{L})$,
transforms trivial conserved vectors only to trivial ones
(i.e., $\CV_0(\mathcal{L})$ is invariant with respect to~$g_*$)
and therefore, induces a linear one-to-one mapping $g_{\rm f}$ in~$\CL(\mathcal{L})$.
It is obvious that $g_{\rm f}$ preserves linear (in)dependence of elements
in~$\CL(\mathcal{L})$ and maps a basis (a set of generatrices) of~$\CL(\mathcal{L})$
in a basis (a set of generatrices) of the same space.
In such way we can consider the $G$-equivalence relation of conservation laws
as well-determined on~$\CL(\mathcal{L})$ and use it to classify conservation laws.

\begin{proposition}
If the system~$\mathcal{L}$ admits a one-parameter group of transformations then the infinitesimal generator
$X=\xi^i\p_i+\eta^a\p_{u^a}$
of this group can be used for construction of new conservation laws from known ones.
Namely, differentiating equation~(\ref{eq.tr.var.cons.law})
with respect to the parameter $\varepsilon$ and taking the value $\varepsilon=0$,
we obtain the new conserved vector
\begin{equation}\label{eq.inf.tr.var.cons.law}
\widetilde F^i=-X_{(r)}F^i+(D_j\xi^i)F^j-(D_j\xi^j)F^i.
\end{equation}
Here $X_{(r)}$ denotes the $r$-th prolongation~\cite{Olver1986,Ovsiannikov1982} of the operator $X$.
\end{proposition}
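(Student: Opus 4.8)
The plan is to differentiate the finite transformation law~\eqref{eq.tr.var.cons.law} with respect to the group parameter and evaluate at the identity, then reorganize the resulting expression into the claimed form~\eqref{eq.inf.tr.var.cons.law}. First I would set up the one-parameter subgroup $g_\varepsilon$ generated by $X=\xi^i\p_i+\eta^a\p_{u^a}$, so that $\tilde x_i=x_i+\varepsilon\,\xi^i+o(\varepsilon)$ and $\tilde u^a=u^a+\varepsilon\,\eta^a+o(\varepsilon)$, with the prolongation $X_{(r)}$ acting on the jet variables appearing in $F$. At $\varepsilon=0$ we have $\tilde x=x$, $\tilde u_{(r)}=u_{(r)}$, and the Jacobian matrix $D_x\tilde x=(D_{x_j}\tilde x_i)$ reduces to the identity, so $|D_x\tilde x|=1$ there; this is the base point about which everything is linearized.

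Next I would compute the $\varepsilon$-derivative of the right-hand side of~\eqref{eq.tr.var.cons.law} by the product rule, treating it as a product of three $\varepsilon$-dependent factors: the reciprocal determinant $|D_x\tilde x|^{-1}$, the matrix $D_x\tilde x$, and the conserved vector $F$ regarded as a function of the transformed jet coordinates. For the first factor, $\tfrac{d}{d\varepsilon}|D_x\tilde x|^{-1}\big|_0=-\tfrac{d}{d\varepsilon}|D_x\tilde x|\big|_0$, and by Jacobi's formula the derivative of the determinant at the identity is the trace $D_j\xi^j$; this produces the third term $-(D_j\xi^j)F^i$. For the second factor, $\tfrac{d}{d\varepsilon}(D_{x_j}\tilde x_i)\big|_0=D_{x_j}\xi^i=D_j\xi^i$, which when contracted against $F^j$ yields the middle term $(D_j\xi^i)F^j$. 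The first (left-over) term is the derivative of $F^i$ itself along the flow, which by definition of the prolonged action is $X_{(r)}F^i$; since $F_g$ is expressed in the new variables, chain-rule bookkeeping supplies the overall sign that gives $-X_{(r)}F^i$. Adding the three contributions reproduces~\eqref{eq.inf.tr.var.cons.law}.

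The remaining point is to justify that $\widetilde F$ so obtained is genuinely a conserved vector of~$\mathcal{L}$, i.e.\ that the construction stays inside $\CV(\mathcal{L})$. This follows because each $g_\varepsilon$ is a symmetry, so by the preceding Proposition it maps the equation $D_iF^i=0$ to $D_iF^i_{g_\varepsilon}=0$; the family $\varepsilon\mapsto F_{g_\varepsilon}$ therefore lies in $\CV(\mathcal{L})$, and differentiating at $\varepsilon=0$ keeps us in this linear space, so its divergence vanishes on solutions. Here I would invoke that $\CV(\mathcal{L})$ is a linear space (as established in Section~\ref{SectionBasicDef}) to conclude that the $\varepsilon$-derivative, being a limit of difference quotients of elements of $\CV(\mathcal{L})$, is again a conserved vector.

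I expect the main obstacle to be the careful sign and chain-rule accounting in the first term: the transformed components $F_g^i$ are evaluated at the \emph{new} arguments $(\tilde x,\tilde u_{(r)})$, whereas $\widetilde F^i$ is to be read as a differential function of $(x,u_{(r)})$, so one must distinguish the explicit $\varepsilon$-dependence coming through the matrix and determinant from the implicit $\varepsilon$-dependence entering through the arguments of $F$. Getting the prolongation term to appear as $-X_{(r)}F^i$ rather than $+X_{(r)}F^i$ hinges on treating this distinction correctly. The determinant derivative via Jacobi's formula and the contraction producing $(D_j\xi^i)F^j$ are comparatively routine once the base-point simplifications $\tilde x=x$, $|D_x\tilde x|=1$ are in place.
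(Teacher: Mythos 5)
Your proposal is correct and follows precisely the route the paper intends: the proposition's statement itself encodes the proof (differentiate the finite transformation formula~\eqref{eq.tr.var.cons.law} at $\varepsilon=0$), and your expansion of the three $\varepsilon$-dependent factors — Jacobi's formula giving the trace term $-(D_j\xi^j)F^i$, the Jacobian derivative giving $(D_j\xi^i)F^j$, and the argument-flow chain rule giving $-X_{(r)}F^i$ — together with the appeal to linearity of $\CV(\mathcal{L})$ to confirm the limit stays a conserved vector, supplies exactly the details the paper leaves implicit. Your flagged sign subtlety (the $-X_{(r)}F^i$ arising from $F_g$ being evaluated at the transformed arguments) is indeed the only delicate point, and you resolve it correctly.
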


\begin{note}Formula~\eqref{eq.inf.tr.var.cons.law} can be directly extended to generalized symmetry operators
(see, for example,~\cite{Bluman&Anco2002,Kara&Mahomed2002}).
A similar statement for generalized symmetry operators in evolutionary form ($\xi^i=0$)
was known earlier~\cite{Ibragimov1985,Olver1986}.
It was used in~\cite{Khamitova1982} to introduce a notion of basis of conservation laws as a set
which generates a whole set of conservation laws with action of generalized symmetry operators and operation
of linear combination.
\end{note}

\begin{proposition}\label{PropositionOnInducedMapping}
Any point transformation $g$ between systems~$\mathcal{L}$ and~$\tilde{\mathcal{L}}$
induces a one-to-one linear mapping $g_*$ from~$\CV(\mathcal{L})$ into~$\CV(\tilde{\mathcal{L}})$,
which maps $\CV_0(\mathcal{L})$ into~$\CV_0(\tilde{\mathcal{L}})$
and generates a one-to-one linear mapping $g_{\rm f}$ from~$\CL(\mathcal{L})$ to~$\CL(\tilde{\mathcal{L}})$.
\end{proposition}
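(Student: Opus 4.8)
The plan is to establish Proposition~\ref{PropositionOnInducedMapping} by transporting the already-proven local statement~\eqref{eq.tr.var.cons.law} to the level of the factor spaces~$\CL(\mathcal{L})$ and~$\CL(\tilde{\mathcal{L}})$. First I would recall that a point transformation $g$ between~$\mathcal L$ and~$\tilde{\mathcal L}$ is, by definition, a diffeomorphism carrying the solution set of~$\mathcal L$ onto that of~$\tilde{\mathcal L}$; prolonging~$g$ to the jet space~$J^{(r)}$ gives a well-defined action on differential functions. The candidate map $g_*$ is precisely the pushforward of conserved vectors given componentwise by~$F_g^i(\tilde x,\tilde u_{(r)})=|D_x\tilde x|^{-1}(D_{x_j}\tilde x_i)\,F^j(x,u_{(r)})$, so the content of the earlier proposition already tells us that $g_*$ sends a conserved vector of~$\mathcal L$ to a conserved vector of~$\tilde{\mathcal L}$, i.e.\ $g_*(\CV(\mathcal L))\subseteq\CV(\tilde{\mathcal L})$.

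Next I would verify the three assertions in turn. \emph{Linearity} of $g_*$ is immediate, since the prefactor $|D_x\tilde x|^{-1}(D_{x_j}\tilde x_i)$ depends only on~$g$ and not on~$F$, so the formula is linear in the components $F^j$. \emph{Bijectivity} follows because $g$ is invertible as a point transformation: the inverse transformation $g^{-1}$ produces a map $(g^{-1})_*$ by the same formula~\eqref{eq.tr.var.cons.law}, and the chain rule applied to the total-derivative matrices gives $(D_x\tilde x)(D_{\tilde x}x)=\mathrm{Id}$, whence $(g^{-1})_*\circ g_*$ and $g_*\circ(g^{-1})_*$ are both identities. \emph{Invariance of the trivial subspace}, $g_*(\CV_0(\mathcal L))\subseteq\CV_0(\tilde{\mathcal L})$, is the step I expect to require the most care: one must show that a conserved vector $F=\hat F+\check F$, with $\hat F$ vanishing on solutions of~$\mathcal L$ and $\check F$ a null divergence, is carried to a sum of the same two types. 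Since $g$ maps solutions of~$\mathcal L$ to solutions of~$\tilde{\mathcal L}$ and the prolonged action respects evaluation on the solution manifold, the part built from~$\hat F$ still vanishes on solutions of~$\tilde{\mathcal L}$. For the null-divergence part I would invoke the fact that $\mathrm{Div}$ itself transforms covariantly under~\eqref{eq.tr.var.cons.law}: a direct computation gives $\mathrm{Div}\,F_g=|D_x\tilde x|^{-1}\,\mathrm{Div}\,F$, so an identically vanishing divergence is preserved, and by Lemma~\ref{lemma.null.divergence} $\check F_g$ is again a null divergence. The same computation, applied together with invertibility, shows the inclusion is in fact an equality, so $g_*$ restricts to a \emph{one-to-one} linear map of~$\CV_0(\mathcal L)$ onto~$\CV_0(\tilde{\mathcal L})$.

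Finally, because $g_*$ is a linear bijection from~$\CV(\mathcal L)$ to~$\CV(\tilde{\mathcal L})$ that carries the subspace~$\CV_0(\mathcal L)$ bijectively onto~$\CV_0(\tilde{\mathcal L})$, it descends to a well-defined linear map $g_{\rm f}\colon\CL(\mathcal L)=\CV(\mathcal L)/\CV_0(\mathcal L)\to\CV(\tilde{\mathcal L})/\CV_0(\tilde{\mathcal L})=\CL(\tilde{\mathcal L})$ on the factor spaces by the universal property of quotients: $g_{\rm f}([F]):=[g_*F]$ is independent of the representative~$F$ precisely because trivial vectors go to trivial vectors. The induced $g_{\rm f}$ inherits bijectivity from $g_*$, its inverse being the descent of $(g^{-1})_*$. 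The main obstacle throughout is not any single hard estimate but the bookkeeping needed to confirm that the prolonged point transformation interacts correctly with the two independent notions of triviality (vanishing on~$\mathcal L$ versus identically null divergence); once the covariance identity $\mathrm{Div}\,F_g=|D_x\tilde x|^{-1}\,\mathrm{Div}\,F$ is in hand, every remaining claim reduces to formal linear algebra on quotient spaces.
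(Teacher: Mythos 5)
Your proof is correct and follows precisely the route the paper intends: Proposition~\ref{PropositionOnInducedMapping} is stated there without an explicit proof, being regarded as an immediate consequence of the pushforward formula~\eqref{eq.tr.var.cons.law}, the invertibility of the point transformation~$g$, and the preservation of both kinds of triviality (vanishing on solutions and null divergences), which is exactly the chain of verifications you carry out before descending to the factor spaces. One cosmetic remark: once you have the covariance identity $\mathrm{Div}\,F_g=|D_x\tilde x|^{-1}\,\mathrm{Div}\,F$, an identically vanishing divergence is preserved directly by the definition of null divergence, so your appeal to Lemma~\ref{lemma.null.divergence} is superfluous (though harmless).
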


\begin{corollary}\label{CorollaryOnInducedMappingOfChar}
Any point transformation $g$ between systems~$\mathcal{L}$ and~$\tilde{\mathcal{L}}$
induces a one-to-one linear mapping $\hat g_{\rm f}$ from~$\Ch_{\rm f}(\mathcal{L})$
to~$\Ch_{\rm f}(\tilde{\mathcal{L}})$.
\end{corollary}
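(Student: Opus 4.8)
The plan is to build the mapping $\hat g_{\rm f}$ by composing the isomorphisms already established, transporting the problem from characteristics to conserved vectors and back. By Theorem~\ref{TheoremIsomorphismChCV}, for a normal totally nondegenerate system~$\mathcal L$ the characteristic representation~\eqref{CharFormOfConsLaw} yields a one-to-one linear correspondence $\Phi_{\mathcal L}\colon\CL(\mathcal L)\to\Ch_{\rm f}(\mathcal L)$, and likewise $\Phi_{\tilde{\mathcal L}}\colon\CL(\tilde{\mathcal L})\to\Ch_{\rm f}(\tilde{\mathcal L})$ for the target system. Proposition~\ref{PropositionOnInducedMapping} furnishes a one-to-one linear map $g_{\rm f}\colon\CL(\mathcal L)\to\CL(\tilde{\mathcal L})$ induced by the point transformation~$g$. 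First I would simply set
\[
\hat g_{\rm f}=\Phi_{\tilde{\mathcal L}}\circ g_{\rm f}\circ\Phi_{\mathcal L}^{-1},
\]
which is automatically a one-to-one linear map from $\Ch_{\rm f}(\mathcal L)$ to $\Ch_{\rm f}(\tilde{\mathcal L})$, being a composition of three such maps. This already proves the abstract statement, so the only real content is to check that the definition is legitimate and, if one wants an explicit formula, to compute how $\hat g_{\rm f}$ acts on a representative characteristic.

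The verification proceeds as follows. Take $\mathcal F\in\CL(\mathcal L)$ with conserved vector $F$ and characteristic $\lambda$, so that ${\rm Div}\,F=\lambda^\mu L^\mu$. The transformation $g$ sends $F$ to the conserved vector $F_g$ of $\tilde{\mathcal L}$ given by~\eqref{eq.tr.var.cons.law}, with ${\rm Div}\,F_g=\tilde\lambda^\mu\tilde L^\mu$ for some characteristic $\tilde\lambda$ of the transformed conservation law. I would obtain $\tilde\lambda$ by pulling the identity ${\rm Div}\,F=\lambda^\mu L^\mu$ through the change of variables: applying $g$ to both sides and using that total divergence transforms by the Jacobian factor $|D_x\tilde x|^{-1}$ (as already encoded in~\eqref{eq.tr.var.cons.law}), one reads off that the new characteristic is the old one multiplied by $|D_x\tilde x|^{-1}$ and composed with the inverse point transformation, with the left-hand sides $L^\mu$ re-expressed through $\tilde L^\mu$ via the (invertible, by normality) matrix relating the two systems. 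The concrete shape of $\tilde\lambda$ is not needed for the abstract claim but makes $\hat g_{\rm f}$ explicit.

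The one point requiring care — and the step I would flag as the main obstacle — is well-definedness on the factor spaces, i.e.\ that $\hat g_{\rm f}$ respects the equivalence relation defining $\Ch_{\rm f}$. Concretely, one must show that $g$ carries trivial characteristics (those vanishing on solutions of~$\mathcal L$) to trivial characteristics of $\tilde{\mathcal L}$. This is where normality and total nondegeneracy of both systems are essential: they guarantee that $\Phi_{\mathcal L}$ and $\Phi_{\tilde{\mathcal L}}$ are genuine bijections, so that triviality of a characteristic is equivalent to triviality of its conserved vector, and the latter is preserved by $g$ thanks to Proposition~\ref{PropositionOnInducedMapping} (which asserts $g_*$ maps $\CV_0(\mathcal L)$ into $\CV_0(\tilde{\mathcal L})$). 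Thus the transport of triviality is inherited, through the two isomorphisms, from the already-proved conserved-vector statement. Once this compatibility is in place, linearity and bijectivity of $\hat g_{\rm f}$ are immediate from the corresponding properties of the three constituent maps, completing the argument.
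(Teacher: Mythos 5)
Your proposal is correct and takes essentially the same route the paper intends: the corollary is obtained by composing the isomorphism of Theorem~\ref{TheoremIsomorphismChCV} between $\CL(\mathcal{L})$ and $\Ch_{\rm f}(\mathcal{L})$ (and its analogue for $\tilde{\mathcal{L}}$) with the induced mapping $g_{\rm f}$ of Proposition~\ref{PropositionOnInducedMapping}, exactly as in your definition $\hat g_{\rm f}=\Phi_{\tilde{\mathcal L}}\circ g_{\rm f}\circ\Phi_{\mathcal L}^{-1}$. Your sketch of the explicit action on representatives also matches the paper's formula $\lambda^\mu={\Lambda^{\nu\mu}}^*(|D_x\tilde x|\tilde\lambda^\nu)$ stated immediately after the corollary, including the Jacobian factor and the operators relating the two systems.
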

It is possible to obtain an explicit formula for correspondence between characteristics of~$\mathcal{L}$
and~$\tilde{\mathcal{L}}$.
Let $\tilde{\mathcal{L}}^\mu=\Lambda^{\mu\nu}\mathcal{L}^\nu$,
where $\Lambda^{\mu\nu}=\Lambda^{\mu\nu\alpha}D^\alpha$, $\Lambda^{\mu\nu\alpha}$ are differential functions,
$\alpha=(\alpha_1,\ldots,\alpha_n)$ runs the multi-indices set ($\alpha_i\!\in\!\mathbb{N}\cup\{0\}$),
$\mu,\nu=\overline{1,l}$.
Then
\[\lambda^\mu={\Lambda^{\nu\mu}}^*(|D_x\tilde x|\tilde\lambda^\nu).\]
Here ${\Lambda^{\nu\mu}}^*=(-D)^\alpha\cdot\Lambda^{\mu\nu\alpha}$ is the adjoint to the operator~$\Lambda^{\nu\mu}$.
For a number of cases, e.g. if~$\mathcal{L}$ and~$\tilde{\mathcal{L}}$ are single partial differential equations
($l=1$), the operators~$\Lambda^{\mu\nu}$ are simply differential functions
(i.e. $\Lambda^{\mu\nu\alpha}=0$ for $|\alpha|>0$) and, therefore, ${\Lambda^{\nu\mu}}^*=\Lambda^{\mu\nu}$.

Consider the class~$\mathcal{L}|_{\cal S}$ of systems~$\mathcal{L}_\theta$: $L(x,u_{(\rho)},\theta(x,u_{(\rho)}))=0$
parameterized with the parameter-functions~$\theta=\theta(x,u_{(\rho)}).$
Here $L$ is a tuple of fixed functions of $x,$ $u_{(\rho)}$ and $\theta.$
$\theta$~denotes the tuple of arbitrary (parametric) functions
$\theta(x,u_{(\rho)})=(\theta^1(x,u_{(\rho)}),\ldots,\theta^k(x,u_{(\rho)}))$
running the set~${\cal S}$ of solutions of the system~$S(x,u_{(\rho)},\theta_{(q)}(x,u_{(\rho)}))=0$.
This system consists of differential equations on $\theta$,
where $x$ and $u_{(\rho)}$ play the role of independent variables
and $\theta_{(q)}$ stands for the set of all the partial derivatives of $\theta$ of order no greater than $q$.
In what follows we call the functions $\theta$ arbitrary elements.
Sometimes the set $\mathcal{S}$ is additionally constrained by the non-vanish condition
$S'(x,u_{( p)},\theta_{(q)}(x,u_{( p)}))\ne0$ with another tuple $S'$ of differential functions.
Denote the point transformations group preserving the
form of the systems from~$\mathcal{L}|_{\cal S}$ as $G^{\Equiv}=G^{\Equiv}(L,S).$

Consider the set~$P=P(L,S)$ of all pairs each of which consists of
a system $\mathcal{L}_\theta$ from~$\mathcal{L}|_{\cal S}$ and a~conservation law~${\cal F}$ of this system.
In view of Proposition~\ref{PropositionOnInducedMapping},
action of symmetry transformations of the system~$\mathcal{L}_\theta$ and transformations from~$G^{\Equiv}$ on $\mathcal{L}|_{\cal S}$ and
$\{\CV(\mathcal{L}_{\theta})\,|\,\theta\in{\cal S}\}$
together with the pure equivalence relation of conserved vectors
naturally generates equivalence relations on~$P$.
Such generations can be made in several directions:
classification of pairs ``system + space of conservation laws'';
classification of conservation laws for a given system with respect to its symmetry group;
classification of pairs ``system + a conservation law''.
Let us consider these possibilities in more detail.

\medskip

\noindent{\bf 1. Equivalence with respect to an equivalence group~\cite{Popovych&Ivanova2004ConsLawsLanl}.}
We wish to find a list of systems from the class~$\mathcal{L}|_{\cal S}$ where
(i) $\mathcal{L}_\theta$ are graded by the dimensions of conservation laws and
(ii) we can choose a representative, all other systems are equivalent to,
or we can write conditions that systems belong to a given class.

\begin{definition}
Let $\theta,\theta'\in{\cal S}$,
${\cal F}\in\CL(\mathcal{L}_\theta)$, ${\cal F}'\in\CL(\mathcal{L}_{\theta'})$,
$F\in{\cal F}$, $F'\in{\cal F'}$.
The pairs~$(\mathcal{L}_\theta,{\cal F})$ and~$(\mathcal{L}_{\theta'},{\cal F'})$
are called {\em $G^{\Equiv}$-equivalent} if there exists a transformation $g\in G^{\Equiv}$
which transform the system~$\mathcal{L}_\theta$ to the system~$\mathcal{L}_{\theta'}$ and
such that the conserved vectors $F_g$ and $F'$
are equivalent in the sense of Definition~\ref{DefinitionOfConsVectorEquivalence}.
\end{definition}

Classification of conservation laws with respect to~$G^{\Equiv}$ will be understood as
classification in~$P$ with respect to the above equivalence relation.
This problem can be investigated in a way that is similar to group classification in classes
of systems of differential equations, especially if it is formulated in terms of characteristics.
Namely, we construct firstly the conservation laws that are defined for all values of the arbitrary elements.
(The corresponding conserved vectors may depend on the arbitrary elements.)
Then we classify, with respect to the equivalence group, arbitrary elements for each of that the system
admits additional conservation laws.

In an analogues way we also can introduce equivalence relations on~$P$, which are
generated by either generalizations of usual equivalence groups or
all admissible point or contact transformations
(called also form-preserving ones~\cite{Kingston&Sophocleous1998})
in pairs of equations from~$\mathcal{L}|_{\cal S}$.

\medskip

\noindent{\bf 2. Generating sets of conservation laws.}
Consider system~$\mathcal{L}_\theta$ and a non-empty subset $\mathfrak F=\{F_\gamma,\gamma\in\Gamma\}$ of the set $\CV(\mathcal{L}_\theta)$
of its conserved vectors.
Acting on $F_\gamma$ by the transformations from the symmetry group~$G^{\max}(\mathcal{L}_\theta)$ of~$\mathcal{L}_\theta$
and taking linear combinations, we may obtain wider set of conserved vectors.
Roughly speaking, the set~$\mathfrak F$ is independent with respect to $G^{\max}$ if an arbitrary nonzero linear combination
of $G^{\max}$-transformed conservation laws is  nontrivial. The rigorous definition is as follows.

\begin{definition}
The subset $\mathfrak F=\{F_\gamma,\gamma\in\Gamma\}\subset\CV(\mathcal{L}_\theta)$ is {\em dependent
with respect to the group~$G^{\max}(\mathcal{L}_\theta)$ of symmetry transformations} if
\[
\forall\{\gamma_1,\ldots,\gamma_n\}\subset\Gamma,\ \forall g^i\in G^{\max}(\mathcal{L}_\theta)\quad
\lambda_ig^i_*F_\gamma=0\Rightarrow \lambda_i=0,\ i=\overline{1,n}.
\]
\end{definition}

The set of $G^{\max}(\mathcal{L}_\theta)$-independent conserved vectors generating the whole space~$\CV(\mathcal{L}_\theta)$
is called {\em generating set} for the given system.
Since the operations of symmetry transformation and taking the linear combination commute, the above definition is well-posed.

Generating sets of such type, called also $L$-bases, were firstly described in~\cite{Khamitova1982} (see also~\cite{Ibragimov1985}).

\medskip

\noindent{\bf 3. Generating sets of pairs ``system + conservation law''.}
Consider now the action of an equivalence transformation on a pair from~$P$.
If the first element of the resulting pair coincides with the first element of another pair from~$P$,
we can take linear combinations of their second elements.
All conservation laws that cannot be obtained in such way will be called inequivalent.
In such way we obtain a generating set of pairs ``system + conservation law''.

It becomes especially interesting if such generating set
depends on arbitrary elements (parameters) of the system.
In particular, it is possible that the characteristics of conservation laws do not depend on the parameters.
Then, the action of equivalence group on a pair from the generating set maps it to a pair from~$P$.
Taking the linear combinations of conservation laws for the pairs with coinciding first elements,
we get all possible conservation laws.

\begin{note}
A notion of generating sets of pairs ``system + conservation law'' with respect to all admissible transformations
is introduced in an analogous way.
\end{note}

Note that a problem of constructing such generating sets is different from the one of classification of conservation laws
with respect to equivalence group: under the problem of classification of conservation laws with respect to
equivalence group we understand classification of pairs ``equation+ conservation law'',
and a conservation law from the generating set can generate a subspace of the space of conservation laws
for equation with another values of arbitrary elements.

\begin{note}
It can be easy shown that all the above equivalences are indeed equivalence relations,
i.e., they have the usual reflexive, symmetric and transitive properties.
\end{note}

\section{Local conservation laws of diffusion--convection equations}\label{SectionOnConsLawsOfEqDKfgh}

We search (local) conservation laws of equations from class~\eqref{eqDKfh},
applying the modification of the direct method, which was proposed in~\cite{Popovych&Ivanova2004ConsLawsLanl}
and is based on using the notion of equivalence of conservation laws with respect to a transformation group
and classification up to the equivalence group of a class of differential equations.

Conservation laws were investigated for some subclasses of class~\eqref{eqDKfgh}.
In~particular, conservation laws of the linear heat equation and Burgers equation were constructed firstly in~\cite{Atherton&Homsy1975}.
V.A.~Dorodnitsyn and S.R.~Svirshchevskii~\cite{Dorodnitsyn&Svirshchevskii1983}
(see also~\cite[Chapter~10]{Ibragimov1994V1})
completely investigated the local conservation laws for one-dimensional reaction--diffusion equations $u_t=(A(u))_{xx}+C(u)$
having non-empty intersection with class~\eqref{eqDKfgh}.
In~\cite{Kara&Mahomed2002} the~first-order local conservation laws of equations~\eqref{eqDKfgh} are found.
Developing results obtained in~\cite{Bluman&Doran-Wu1995} for the case~$hB=0$, $f=1$,
in the recent papers~\cite{Ivanova2004,Popovych&Ivanova2004ConsLawsLanl}
we completely classified potential conservation laws (including arbitrary order local ones)
of equations~\eqref{eqDKfgh} with $f=g=h=1$ with respect to the corresponding equivalence group.
Conservation laws of equations~\eqref{eqDKfgh} were considered in~\cite{Ivanova&Popovych&Sophocleous2004}.

In view of results of Section~\ref{SectionOnEquivOfConsLaws}
it is sufficient for exhaustive investigation if
we classify conservation laws of equations only from class~(\ref{eqDKfh}).

There are two independent variables~$t$ and~$x$ in equations under consideration, which play a role of
the time and space variables correspondingly.
Therefore, the general form of constructed conservation laws will be
\begin{equation}\label{EqGen2DimConsLaw}
D_tF(t,x,u_{(r)})+D_xG(t,x,u_{(r)})=0,
\end{equation}
where $D_t$ and $D_x$ are the operators of total differentiation with respect to $t$ and $x$.
The components $F$ and $G$ of the conserved vector~$(F,G)$ are called the {\em density} and the {\em flux}
of the conservation law.
Two conserved vectors $(F,G)$ and $(F',G')$ are equivalent if
there exist such functions~$\hat F$, $\hat G$ and~$H$ of~$t$, $x$ and derivatives of~$u$ that
$\hat F$ and $\hat G$ vanish for all solutions of~$\mathcal{L}$~and $F'=F+\hat F+D_xH$, $G'=G+\hat G-D_tH$.

At first we prove the lemma on order of local conservation laws for more general class
of second-order evolution equations, which covers class~\eqref{eqDKfgh}.

\begin{lemma}\label{LemmaOnOrderOfConsLawsOfDCEs}
Any local conservation law of any second-order $(1+1)$-dimensional quasi-linear evolutionary equation has the first order
and, moreover, there exists its conserved vector with the density depending at most on $t$, $x$, and $u$
and the flux depending at most on $t$, $x$, $u$ and~$u_x$.
\end{lemma}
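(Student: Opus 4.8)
The plan is to exploit the characteristic formulation from Section~\ref{SectionOnCharacteristicsOfConsLaws} together with the special evolutionary structure of the equation. Write the general second-order quasi-linear evolution equation as $u_t=K(t,x,u,u_x)u_{xx}+L(t,x,u,u_x)$ with $K\neq0$. Since this is a single equation ($l=1$) in one dependent variable, the characteristic $\lambda$ is a scalar differential function, and by Theorem~\ref{TheoremIsomorphismChCV} the space $\CL(\mathcal{L})$ is in one-to-one correspondence with $\Ch_{\rm f}(\mathcal{L})$. So it suffices to bound the order of the characteristics and then reconstruct a low-order conserved vector. First I would substitute $u_t$ (and its differential consequences $u_{tx}$, $u_{txx}$, $\dots$, $u_{tt}$, etc.) using the equation, so that on the manifold $\mathcal{L}_{(k)}$ every mixed or pure $t$-derivative of $u$ is expressed through the $x$-derivatives $u, u_x, u_{xx}, \dots$ alone; these latter serve as the genuine coordinates on the solution manifold.

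The key step is the necessary condition~\eqref{NCondOnChar}, namely ${\Fder}_L^*(\lambda)\bigl|_{\mathcal{L}}=0$, which for an evolution equation becomes an identity in the jet variables once $t$-derivatives are eliminated. I would assume a characteristic $\lambda=\lambda(t,x,u,u_x,\dots,u_{sx})$ of order $s$ in $x$ and examine the coefficients of the highest-order derivatives $u_{(s+2)x}$, $u_{(s+1)x}$, etc., appearing in the expanded adjoint condition. The leading terms force linear differential equations in $\lambda$ with respect to $u_{sx}$ (typically $\p^2\lambda/\p u_{sx}^2=0$ and then $\p\lambda/\p u_{sx}=0$ for $s$ large), driving the order down inductively; iterating the argument I expect to conclude that $\lambda$ cannot depend on derivatives higher than $u$, so $\lambda=\lambda(t,x,u)$ and the characteristic has order zero. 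This is the standard order-reduction phenomenon for cosymmetries of scalar evolution equations, but carrying out the coefficient bookkeeping cleanly is where the real work lies.

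The hard part will be the passage from the order of the \emph{characteristic} to the claimed order and form of the \emph{conserved vector} itself. A characteristic of order zero gives, via the correspondence of Theorem~\ref{TheoremIsomorphismChCV}, a conservation law whose \emph{order} (the minimal order over equivalent conserved vectors) is first, but I must produce an explicit representative $(F,G)$ with $F=F(t,x,u)$ and $G=G(t,x,u,u_x)$. To do this I would reconstruct the conserved vector by the homotopy (integration-by-parts) formula that inverts $\Eop$: starting from $\mathop{\rm Div}(F,G)=\lambda(t,x,u)\,(u_t-K u_{xx}-L)$, the density $F$ is obtained by integrating $\lambda$ in $u_t$-direction, which produces $F=F(t,x,u)$ directly since $\lambda$ is free of $u_t$; the flux $G$ then absorbs the remaining $x$-divergence and, because the equation is second order in $x$ and $\lambda$ is order zero, $G$ involves $u_x$ at most. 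I would discharge the triviality ambiguity using the equivalence $F'=F+\hat F+D_xH$, $G'=G+\hat G-D_tH$ recorded after~\eqref{EqGen2DimConsLaw}, choosing the gauge $H$ so as to clear any spurious high-order terms.

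Finally I would verify that the resulting $(F,G)$ with $F=F(t,x,u)$, $G=G(t,x,u,u_x)$ is indeed a conserved vector of the original equation and that no conservation law of order higher than one has been missed—this is automatic from the one-to-one correspondence of Theorem~\ref{TheoremIsomorphismChCV} once the order bound on characteristics is established. The main obstacle, to restate it, is the inductive descent in the order of $\lambda$: the algebra in the top-degree coefficients of~\eqref{NCondOnChar} must be organized carefully enough to show strict decrease at each step, and special low-order cases (where the leading-coefficient argument degenerates) will need to be treated separately.
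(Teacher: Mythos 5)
Your proposal is correct in outline, but it takes a genuinely different route from the paper. The paper never touches characteristics: it writes the determining identity~\eqref{clcdeom} for the conserved vector itself, splits it with respect to the highest derivatives (the coefficients of $u_{r'+2}$, $u_{r'+1}$ and then the terms in $u_{r'}^2$ give $F_{u_{r'}}=0$, $G_{u_{r'}}+SF_{u_{r'-1}}=0$, $\hat F_{u_{r'-1}u_{r'-1}}=0$), and lowers the order one step at a time by gauging with $H=\int\check F^1\,du_{r'-2}$ inside the equivalence $F'=F+\hat F+D_xH$, $G'=G+\hat G-D_tH$; everything happens at the level of $(F,G)$. You instead bound the order of cosymmetries via~\eqref{NCondOnChar} and reconstruct a representative through Theorem~\ref{TheoremIsomorphismChCV}. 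Two remarks on your plan. First, your descent is actually cleaner than you anticipate: for $L=u_t-Su_{xx}-R$ and $\lambda$ of order $s\ge1$ in $u_s=\p^su/\p x^s$, the coefficient of $u_{s+2}$ in ${\Fder}_L^*(\lambda)\bigl|_{\mathcal L}$ equals $-2S\,\p\lambda/\p u_s$ — unlike the symmetry condition, the two leading contributions from $-D_t\lambda$ and $-D_x^2(S\lambda)$ add instead of cancelling — so you get the linear condition $\p\lambda/\p u_s=0$ directly at every order, with no quadratic step and no degenerate low-order cases, forcing $\lambda=\lambda(t,x,u)$. Second, the reconstruction is your loose point and should be firmed up; a clean version is: set $\Lambda=\int^u\lambda(t,x,u')\,du'$, so that $\lambda L=D_t\Lambda+\tilde Q$ with $\tilde Q=-\Lambda_t-\lambda(Su_{xx}+R)$ free of $t$-derivatives; since the full Euler operator annihilates both $\lambda L$ and $D_t\Lambda$, the spatial Euler operator annihilates $\tilde Q$, hence $\tilde Q=D_xG$ for some differential function $G$, which must have order at most one because $\tilde Q$ has order two; then $(\Lambda,G)$ is the desired representative with $F=\Lambda(t,x,u)$, $G=G(t,x,u,u_x)$. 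Finally, note the trade-off: your route requires the hypotheses of Theorem~\ref{TheoremIsomorphismChCV} (normality and total nondegeneracy, which quasi-linear evolution equations do satisfy) and the representation~\eqref{CharFormOfConsLaw} up to equivalence, whereas the paper's computation is elementary and hypothesis-free, and is chosen deliberately — as the note after the lemma says — because the same manipulation is reused verbatim for the potential systems in Lemma~\ref{LemmaOnOrderOfPotCLsOfDCEs}. What your route buys in exchange is the order bound on the characteristics themselves, which the paper obtains only implicitly ($\lambda=F_u$ with $F$ linear in $u$) and which is convenient for the characteristic-based formulations in Theorem~\ref{TheoremOnClassificationCLsWideGroup}.
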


\begin{proof}
Consider a conservation law of form~\eqref{EqGen2DimConsLaw} of the second-order $(1+1)$-dimensional
quasi-linear evolutionary equation
\begin{equation}\label{gso11dimee}
u_t=S(t,x,u,u_x)u_{xx}+R(t,x,u,u_x),
\end{equation}
where $S\ne0$.
In view of equation~\eqref{gso11dimee} and its differential consequences,
we can assume that $F$ and~$G$ depend only on $t$, $x$ and $u_k=\partial^k u/\partial x^k$, $k=\overline{0,r'},$
where $r'\le 2r$. Suppose that $r'>1$.
We~expand the total derivatives in (\ref{EqGen2DimConsLaw}) and take into account differential consequences of
form $u_{tj}=D_x^{j}(Su_{xx}+R)$, where $u_{tj}=\partial^{j+1} u/\partial t\partial x^k$, $j=\overline{0,r'}$.
As a result, we obtain the following condition
\begin{equation}\label{clcdeom}\textstyle
F_t+F_{u_j}D_x^{j}(Su_{xx}+R)+G_x+G_{u_j}u_{j+1}=0.
\end{equation}
Let us decompose~\eqref{clcdeom} with respect to the highest derivatives $u_j$.
Thus, the coefficients of $u_{r'+2}$ and $u_{r'+1}$ give the equations
$F_{u_{r'}}=0$, $G_{u_{r'}}+SF_{u_{r'-1}}=0$ that implies
\[
F=\hat F, \quad G=-S\hat F_{u_{r'-1}}u_{r'}+\hat G,
\]
where $\hat F$ and $\hat G$ are functions of $t$, $x$, $u$, $u_1$, \ldots, $u_{r'-1}$.
Then, after selecting the terms containing $u_{r'}^2$, we obtain that $-S\hat F_{u_{r'-1}u_{r'-1}}=0$.
It yields that $\hat F =\check F^1u_{r'-1}+\check F^0,$
where $\check F^1$ and $\check F^0$ depend only on $t$, $x$, $u$, $u_1$,~\ldots, $u_{r'-2}$.

Consider the conserved vector with the density~$\tilde F=F-D_xH$ and the flux~$\tilde G=G+D_tH$,
where $H=\int \check F^1du_{r'-2}$. This conserved vector is equivalent to the initial one, and
\[
\tilde F=\tilde F(t,x,u,u_1,\ldots,u_{r'-2}), \quad
\tilde G=\tilde G(t,x,u,u_1,\ldots,u_{r'-1}).
\]

Iterating the above procedure a necessary number of times, we result in an equivalent conserved vector
depending only on $t$, $x$, $u$ and $u_x$, i.e. we can assume at once that $r'\le 1$.
Then the coefficients of $u_{xxx}$ and $u_{xx}$ in~\eqref{clcdeom} lead to the equations
$F_{u_x}=0$, $G_{u_x}+SF_u=0$ that implies $F=F(t,x,u)$ and, moreover, $G=-F_u\int\!S\,du_x+G^1$,
where $G^1=G^1(t,x,u)$.
\end{proof}

\begin{note}
A similar statement is true for an arbitrary (1+1)-dimensional evolution equation~$\cal L$ of the even
order~$r=2\bar r$, $\bar r\in\mathbb{N}$. For example~\cite{Ibragimov1985}, for any conservation law of~$\cal L$
we can assume up to equivalence of conserved vectors
that $F$ and $G$ depend only on~$t$, $x$ and derivatives of~$u$ with respect to~$x$, and
the maximal order of derivatives in~$F$ is not greater than $\bar r$.

Lemma~\ref{LemmaOnOrderOfConsLawsOfDCEs} gives a stronger result for a more restricted class of equations.
In the above proof we specially use the most direct method to demonstrate its effectiveness in
quite general cases. This proof can be easily extended to other classes of (1+1)-dimensional evolution equations
of even orders and some systems connected with evolution equations~\cite{Popovych&Ivanova2004ConsLawsLanl}.
\end{note}

\setcounter{mcasenum}{0}

\begin{theorem}\label{TheoremOnClassificationCLsSmallGroup}
A complete list of $G^{\sim}_1$-inequivalent equations~\eqref{eqDKfh} having nontrivial
conservation laws is exhausted by the following ones

\vspace{1ex}

$\makebox[6mm][l]{\refstepcounter{mcasenum}\themcasenum\label{h1}.}
 h=1 \colon\quad (\,fu,\ -Au_x-\int\!\! B\,).$



\vspace{1ex}

$\makebox[6mm][l]{\refstepcounter{mcasenum}\themcasenum\label{A1Bne0}.}
 A=1, \quad B_u\ne0,\quad f=-h(h^{-1})_{xx}\colon \quad \textstyle
(\,e^t(h^{-1})_{xx} u,\ e^t(h^{-1}u_x-(h^{-1})_xu+\int\!\! B)\,). $

\vspace{1ex}

$\makebox[6mm][l]{\refstepcounter{mcasenum}\themcasenum\label{B1fhx}.}
 B=\varepsilon A+1, \quad f=h_y\colon \quad
 (\,e^{t-\varepsilon\int\!\! h}h_yu,\ -e^{t}(Ae^{-\varepsilon\int\!\! h}u_y+hu)\,).
$

\vspace{1ex}

$\makebox[6mm][l]{\refstepcounter{mcasenum}\themcasenum\label{B1fhx+hx-1}.} B=\varepsilon A+1, \quad f=h_y+hy^{-1}\colon
\quad\textstyle
(\,e^{t-\varepsilon\int\!\! h}yfu ,\ -e^t(yAe^{-\varepsilon\int\!\! h}u_y+yhu-\int\!\! A)\,).$

\vspace{1ex}

$\makebox[6mm][l]{\refstepcounter{mcasenum}\themcasenum\label{B0}.}
B=\varepsilon A\colon \quad
(yfe^{-\varepsilon\!\int\!\! h} u,\ -yAe^{-\varepsilon\!\int\!\! h}u_y+\int\!\! A),
\ (fe^{-\varepsilon\!\int\!\! h}u,\ -Ae^{-\varepsilon\!\int\!\! h}u_y).$

\vspace{1ex}

$\makebox[6mm][l]{\refstepcounter{mcasenum}\themcasenum\label{B1fhint}.}\displaystyle
B=\varepsilon A+1, \quad f=-Z^{-1}h,
\quad h=Z^{-1/2}\exp\left(-\int\dfrac{a_{00}+a_{11}}{2Z}dy\right)\colon$

\vspace{1ex}

$\makebox[6mm][l]{}(\,(\sigma^{k1}y+\sigma^{k0})fe^{-\varepsilon\int\!\! h}u,\;
-(\sigma^{k1}y+\sigma^{k0})(Ae^{-\varepsilon\int\!\! h}u_y+hu)+\sigma^{k1}\int\!\! A\,).$

\vspace{1ex}

$\makebox[6mm][l]{\refstepcounter{mcasenum}\themcasenum\label{A1B0}.}
 A=1, \quad B=0\colon \quad (\,\alpha f u, \ -\alpha u_x+\alpha_x u\,).$

\vspace{1ex}

\noindent Here
$y$ is implicitly determined by the formula $x=\int\! e^{\varepsilon \int\!h(y)dy}dy$;
$\varepsilon,a_{ij}=\rm const$, $i,j=\overline{0,1}$;
$(\sigma^{k1},\sigma^{k0})=(\sigma^{k1}(t),\sigma^{k0}(t))$, $k=\overline{1,2}$, is a fundamental solution system
of the system of ODEs $\sigma^\nu_t=a_{\mu\nu}\sigma^\mu$;
$Z=a_{01}y^2+(a_{00}-a_{11})y-a_{10}$;
$\alpha=\alpha(t,x)$ is an arbitrary solution of the linear equation
$f\alpha_t+\alpha_{xx}=0$. Hereafter $\int\! A=\int\!A\,du$, $\int\! B=\int\!B\,du$.
In case~\ref{B0} $\varepsilon\in\{0,1\}\!\!\mod G^{\sim}_1$.

(Together with constraints on the parameter-functions $A$, $B$, $f$ and $h$
we also adduce conserved vectors of the basis elements of the corresponding space of conservation laws.)
\end{theorem}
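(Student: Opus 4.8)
The plan is to use the most direct method, fixing the form of a representative conserved vector at the very start via Lemma~\ref{LemmaOnOrderOfConsLawsOfDCEs}. By that lemma every conservation law of~\eqref{eqDKfh} admits a conserved vector $(F,G)$ with $F=F(t,x,u)$ and $G=G(t,x,u,u_x)$. First I would rewrite the defining condition $D_tF+D_xG=0$ on the manifold of~\eqref{eqDKfh}, using the equation in the solved form $u_t=f^{-1}(Au_{xx}+A_uu_x^2+hBu_x)$ to eliminate $u_t$. The result is a polynomial identity in the jet variable $u_{xx}$ and in powers of $u_x$, which I would split termwise. The coefficient of $u_{xx}$ gives $G_{u_x}=-f^{-1}AF_u$, while the coefficient of $u_x^2$ reduces to $-AF_{uu}/f$; since $A\neq0$ this forces $F_{uu}=0$, so that $F=F^1(t,x)u+F^0(t,x)$.

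Next I would integrate $G_{u_x}=-f^{-1}AF^1$ in $u_x$ to get $G=-f^{-1}AF^1u_x+G^1(t,x,u)$, substitute back, and integrate the coefficient of $u_x^1$ in $u$. This expresses $G^1$ through the antiderivatives $\int\!A$ and $\int\!B$, namely $G^1=-f^{-1}F^1h\int\!B+(F^1/f)_x\int\!A+G^{00}(t,x)$. The last free term, the coefficient of $u_x^0$ (i.e.\ $F_t+G^1_x=0$), then becomes the single identity
\[
F^1_tu+F^0_t-\Bigl(\tfrac{F^1h}{f}\Bigr)_x\int\!B+\Bigl(\tfrac{F^1}{f}\Bigr)_{xx}\int\!A+G^{00}_x=0,
\]
which must hold for all values of~$u$. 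All the determining relations are obtained by decomposing this identity with respect to the functions $1$, $u$, $\int\!A$ and $\int\!B$ of~$u$.

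The classification is driven by the linear dependence over constants of the four functions $\{1,u,\int\!A,\int\!B\}$. In the generic case they are independent, every coefficient vanishes, and $F^1_t=0$, $(F^1/f)_{xx}=0$, $(F^1h/f)_x=0$ confine $f$ and $h$ to a family reducible to $h=1$ by~$G^\sim_1$, with $A,B$ arbitrary, giving case~\ref{h1}. The degeneracies produce the remaining cases: $A=\const$ (normalized to $A=1$) merges $u$ with $\int\!A$ and yields cases~\ref{A1Bne0} and~\ref{A1B0} (for $B=0$ the $u$-relation becomes $f\alpha_t+\alpha_{xx}=0$ with $\alpha=F^1/f$, producing the infinite-dimensional space); $B=\varepsilon A+c$ merges $\int\!B$ into $\int\!A$ and $u$; and $B=\varepsilon A$ (the value $c=0$) makes $\int\!B$ proportional to $\int\!A$ and gives the two-dimensional space of case~\ref{B0}. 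In each branch I would solve the residual linear system for $F^1,F^0,G^{00}$ together with the induced constraints on $f,h$, and finally use~\eqref{EquivTransformationsDKfh} to normalize the free constants (the scale of $c$ to $1$, the gauge of $f$, the shift/scaling of $x$ and $t$) to the listed representatives.

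I expect the main obstacle to be the subfamily $B=\varepsilon A+1$ with nonconstant~$A$ (cases~\ref{B1fhx},~\ref{B1fhx+hx-1} and especially~\ref{B1fhint}). There the identity reduces to the coupled pair $(F^1/f)_{xx}=\varepsilon(F^1h/f)_x$ and $F^1_t=(F^1h/f)_x$. Passing to the variable~$y$ defined by $x=\int e^{\varepsilon\int\!h}dy$ turns the spatial relation into a second-order linear ODE in~$y$ whose solvability imposes the explicit forms of $f$ and $h$; its generic solution, after factoring out $e^{-\varepsilon\int\!h}$, is affine in~$y$, i.e.\ $F^1=(\sigma^1(t)y+\sigma^0(t))fe^{-\varepsilon\int\!h}$, and feeding this back through the evolution relation produces the first-order linear system $\sigma^\nu_t=a_{\mu\nu}\sigma^\mu$ for the time dependence, with $Z=a_{01}y^2+(a_{00}-a_{11})y-a_{10}$ encoding the admissible spatial profile of $f$ and~$h$. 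Disentangling the interplay between this spatial ODE and the temporal system, and recognizing cases~\ref{B1fhx} and~\ref{B1fhx+hx-1} as the degenerate profiles while~\ref{B1fhint} is the nondegenerate one, is the delicate computational heart of the proof; the factors $e^t$ and $e^{-\varepsilon\int\!h}$ in the stated conserved vectors are precisely the imprints of these two ingredients.
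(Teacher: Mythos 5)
Your proposal is correct and takes essentially the same route as the paper's proof: the same direct method starting from Lemma~\ref{LemmaOnOrderOfConsLawsOfDCEs}, the same splitting with respect to $u_{xx}$ and powers of $u_x$ yielding $F_{uu}=0$, $G=-Af^{-1}F^1u_x+G^1$ and the classifying condition $A\bigl(F^1/f\bigr)_{xx}-B\bigl(hF^1/f\bigr)_x+F^1_t=0$, followed by the identical four-branch case analysis governed by the linear (in)dependence of $1$, $A$, $B$ (equivalently of $1,u,\int\!A,\int\!B$). The only cosmetic difference is that the paper runs this computation once for Theorem~\ref{TheoremOnClassificationCLsWideGroup}, using the extended group to normalize $B=\varepsilon A+1$ to $B=1$, whereas you keep $\varepsilon$ and perform the change of variable $x=\int e^{\varepsilon\int\!h}dy$ inside the proof --- which is precisely that extended equivalence transformation realized explicitly, and is exactly how the $y$-formulation of Theorem~\ref{TheoremOnClassificationCLsSmallGroup} arises, including the affine solution $F^1=(\sigma^1y+\sigma^0)fe^{-\varepsilon\int\!h}$, the system $\sigma^\nu_t=a_{\mu\nu}\sigma^\mu$ and the quadratic $Z$.
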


In Theorem~\ref{TheoremOnClassificationCLsSmallGroup} we have classified conservation laws
with respect to the usual equivalence group $G^{\sim}_1$.
The obtained result can be formulated in an implicit form only,
and indeed case~\ref{B1fhint} is split into a number of inequivalent cases depending on values of~$a_{ij}$.
At the same time, using the
extended equivalence group~$\hat G^{\sim}_1$,
we can present the result of classification in a closed and simple form with a smaller number
of inequivalent equations having nontrivial conservation laws.

\begin{theorem}\label{TheoremOnClassificationCLsWideGroup}
A complete list of $\hat G^{\sim}_1$-inequivalent equations~\eqref{eqDKfh} having nontrivial
conservation laws is exhausted by the following ones
\setcounter{mcasenum}{0}

\vspace{1ex}

$\makebox[6mm][l]{\refstepcounter{mcasenum}\themcasenum\label{2.h1}.}
h=1 \colon\quad (\,fu,\ -Au_x-\int\!\! B\,),\ 1.$

\vspace{1ex}

$\makebox[6mm][l]{\refstepcounter{mcasenum}\themcasenum\label{2.A1Bune0}.}
A=1, \quad B_u\ne0,\quad f=-h(h^{-1})_{xx}\colon \quad \textstyle
(\,e^t(h^{-1})_{xx} u,\ e^t(h^{-1}u_x-(h^{-1})_xu+\int\!\! B)\,),\ -e^th^{-1}.$

\vspace{1ex}

$\makebox[6mm][l]{\refstepcounter{mcasenum}\themcasenum\label{2.B1fhx}.}
B=1, \quad f=h_x\colon \quad (\,e^tfu,\ -e^t(Au_x+hu)\,),\ e^t.$

\vspace{1ex}

$\makebox[6mm][l]{\refstepcounter{mcasenum}\themcasenum\label{2.B1fhx+hx-1}.}
B=1, \quad f=h_x+hx^{-1}\colon \quad (\,e^txfu ,\ -e^t(xAu_x+xhu-\int\!\! A)\,),\ e^tx .$

\vspace{1ex}

$\makebox[6mm][l]{\refstepcounter{mcasenum}\themcasenum\label{2.B0}{\rm a}.}
B=0\colon\quad (\,fu,\ -Au_x\,),\ 1;\ (\,xfu,\ -xAu_x+\int\!\! A\,),\ x.$

\vspace{1ex}

$\makebox[6mm][l]{\themcasenum{\rm b.}}
B=1, \quad f=1,\quad h=1\colon \quad (\,u ,\ -Au_x-u\,),\ 1;\ \textstyle
(\,(x+t)u ,\ -(x+t)(Au_x+u)+\int\!\! A\,),\ x+t.$

\vspace{1ex}

$\makebox[6mm][l]{\themcasenum{\rm c}.} B=1,\quad f=e^x,\quad h=e^x\colon \quad (\,e^{x+t}u ,\ -e^t(Au_x+e^xu)\,),\ e^t;$
\nopagebreak\\[1ex]\hspace*{\parindent}%
$\makebox[6mm][l]{} (\,e^{x+t}(x+t)u ,\ -e^t(x+t)(Au_x+e^xu)+e^t\int\!\! A\,),\ e^t(x+t).$

\vspace{1ex}

$\makebox[6mm][l]{\themcasenum{\rm d}.} B=1, \quad f=x^{\mu-1},\quad h=x^\mu\colon\quad
(\,e^{\mu t}x^{\mu-1}u ,\ -e^{\mu t}(Au_x+x^\mu u)\,),\ e^{\mu t};$
\nopagebreak\\[1ex]\hspace*{\parindent}%
$\makebox[6mm][l]{} (\,e^{(\mu+1)t}x^\mu u ,\ e^{(\mu+1)t}(-xAu_x-x^{\mu+1} u+\int\!\! A)\,),\ e^{(\mu+1)t}x.$

\vspace{1ex}

$\makebox[6mm][l]{\refstepcounter{mcasenum}\themcasenum\label{2.B1fexphexp}.} B=1,
\quad f=e^{-\mu/x}x^{-3},\quad h=e^{-\mu/x}x^{-1},\quad \mu\in\{0,1\}\colon$
\nopagebreak\\[1ex]\hspace*{\parindent}%
$\makebox[6mm][l]{} (\,e^{\mu t}xfu ,\ -e^{\mu t}x(Au_x+hu)+e^{\mu t}\int\!\! A\,),\ e^{\mu t}x;$
\nopagebreak\\[1ex]\hspace*{\parindent}%
$\makebox[6mm][l]{} (\,e^{\mu t}(tx-1)fu ,\ -e^{\mu t}(tx-1)(Au_x+hu)+te^{\mu t}\int\!\! A\,),\ e^{\mu t}(tx-1).$

\vspace{1ex}

$\makebox[6mm][l]{\refstepcounter{mcasenum}\themcasenum\label{2.B1fx1x1hx1x1}.} B=1,
\ f=|x-1|^{\mu-3/2}|x+1|^{-\mu-3/2},\ h=|x-1|^{\mu-1/2}|x+1|^{-\mu-1/2}\colon$
\nopagebreak\\[1ex]\hspace*{\parindent}%
$\makebox[6mm][l]{} (\,e^{(2\mu+1)t}(x-1)fu ,\ -e^{(2\mu+1)t}(x-1)(Au_x+hu)+e^{(2\mu+1)t}\int\!\! A\,),\
e^{(2\mu+1)t}(x-1);$
\nopagebreak\\[1ex]\hspace*{\parindent}%
$\makebox[6mm][l]{} (\,e^{(2\mu-1)t}(x+1)fu ,\ -e^{(2\mu-1)t}(x+1)(Au_x+hu)+e^{(2\mu-1)t}\int\!\! A\,),\
e^{(2\mu-1)t}(x+1).$

\vspace{1ex}

$\makebox[6mm][l]{\refstepcounter{mcasenum}\themcasenum\label{2.B1farctanharctan}.}
B=1, \quad f=e^{\mu\arctan x}(x^2+1)^{-3/2},
\quad h=e^{\mu\arctan x}(x^2+1)^{-1/2}\colon$
\nopagebreak\\[1ex]\hspace*{\parindent}%
$\makebox[6mm][l]{} (\,e^{\mu t}(x\cos t+\sin t)fu ,\ -e^{\mu t}(x\cos t+\sin t)(Au_x+hu)+e^{\mu t}\cos t\int\!\! A\,),\
e^{\mu t}(x\cos t+\sin t);$
\nopagebreak\\[1ex]\hspace*{\parindent}%
$\makebox[6mm][l]{} (\,e^{\mu t}(x\sin t-\cos t)fu, \ -e^{\mu t}(x\sin t-\cos t)(Au_x+hu)+e^{\mu t}\sin t\int\!\! A\,),\
e^{\mu t}(x\sin t-\cos t).$

\vspace{1ex}

$\makebox[6mm][l]{\refstepcounter{mcasenum}\themcasenum\label{2.A1B0}.}
A=1, \quad B=0\colon \quad (\,\alpha f u, \ -\alpha u_x+\alpha_x u\,),\ \alpha.$

\vspace{1ex}

\noindent
Here $\mu={\rm const}$,
$\alpha=\alpha(t,x)$ is an arbitrary solution of the linear equation \mbox{$f\alpha_t+\alpha_{xx}=0$}.
(Together with constraints on the parameter-functions $A$, $B$, $f$ and $h$
we also adduce conserved vectors and characteristics
of the basis elements of the corresponding space of conservation laws.)
\end{theorem}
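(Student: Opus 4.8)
The plan is to deduce this classification from the $G^{\sim}_1$-classification of Theorem~\ref{TheoremOnClassificationCLsSmallGroup} by factoring out the transformations that lie in $\hat G^{\sim}_1$ but not in $G^{\sim}_1$. The common starting point, already used to obtain Theorem~\ref{TheoremOnClassificationCLsSmallGroup}, is the order reduction of Lemma~\ref{LemmaOnOrderOfConsLawsOfDCEs}: every conservation law of~\eqref{eqDKfh} has a representative conserved vector with density $F=F(t,x,u)$ and flux $G=G(t,x,u,u_x)$. Writing~\eqref{eqDKfh} in evolutionary form $u_t=\tfrac{A}{f}u_{xx}+\cdots$ and splitting $D_tF+D_xG=0$ with respect to $u_{xx}$, $u_x^2$, $u_x$ and $u_x^0$ forces $F=P(t,x)u+Q(t,x)$ and $G=-\tfrac{A}{f}Pu_x+G^0$, where $G^0=-\tfrac{h}{f}P\int\!B\,du+\left(\dfrac{P}{f}\right)_x\int\!A\,du+G^{00}(t,x)$, and collapses the whole problem to the single master relation
\[
\left(\dfrac{P}{f}\right)_{xx}\int\!A\,du-\left(\dfrac{hP}{f}\right)_x\int\!B\,du+P_t\,u+Q_t+G^{00}_x=0 .
\]
Its analysis is governed by the linear dependence, as functions of $u$, of $1$, $u$, $\int\!A\,du$, $\int\!B\,du$, hence by the forms of $A$ and $B$; carrying this out (and solving the attendant linear equations for $P,Q$ together with the constraints on $f,h$) is precisely what yields the $G^{\sim}_1$-list, which I take as given.

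The essential new step is that the constant $\varepsilon$ separating the $G^{\sim}_1$-cases $B=\varepsilon A+1$ (cases~\ref{B1fhx}, \ref{B1fhx+hx-1}, \ref{B1fhint}) and $B=\varepsilon A$ (case~\ref{B0}) is \emph{not} a $\hat G^{\sim}_1$-invariant. By~\eqref{EquivTransformationsDKfh} the action on $B$ is $\tilde B=\delta_7(B+\delta_8A)$, so taking $\delta_8=-\varepsilon$ collapses $B=\varepsilon A+1$ to $B=1$ and $B=\varepsilon A$ to $B=0$. The same transformation rescales the independent variable through $\tilde x=\delta_5\int e^{\delta_8\int\!h}dy+\delta_6$ and the arbitrary elements through $\varphi=e^{-\delta_8\int\!h}$, which straightens the implicitly defined variable $y$ (with $x=\int e^{\varepsilon\int\!h}dy$) of Theorem~\ref{TheoremOnClassificationCLsSmallGroup} and absorbs the $e^{-\varepsilon\int\!h}$ factors in the corresponding conserved vectors. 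Applying this to the one-dimensional cases~\ref{B1fhx} and~\ref{B1fhx+hx-1} reproduces the normalized cases~\ref{2.B1fhx} and~\ref{2.B1fhx+hx-1}, and case~\ref{B0} yields the case~\ref{2.B0} branch $B=0$; the conserved vectors are transported by~\eqref{eq.tr.var.cons.law}, and the listed characteristics are then read off from the correspondence of Theorem~\ref{TheoremIsomorphismChCV}. The form-arbitrary cases ($h=1$; $A=1,B_u\ne0$; $A=1,B=0$) are already $\hat G^{\sim}_1$-normal and carry over unchanged.

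The genuinely delicate branch, and the main obstacle, is case~\ref{B1fhint}: here $B=\varepsilon A+1$ makes $\int\!B\,du=\varepsilon\int\!A\,du+u$, so the four $u$-functions become dependent, the master relation no longer splits coefficientwise, and it turns into a coupled linear system whose $t$-behaviour is governed by the ODE system $\sigma^\nu_t=a_{\mu\nu}\sigma^\mu$ and whose $x$-behaviour is governed by the quadratic $Z=a_{01}y^2+(a_{00}-a_{11})y-a_{10}$. After normalizing $\varepsilon$ to $0$, the residual freedom in $\hat G^{\sim}_1$ (projective/affine reparametrizations of $x$ together with the scalings) acts on the pair ``matrix $a_{\mu\nu}$, quadratic $Z$'', and the hard part is to prove this action has exactly the orbits giving the explicit profiles of cases~\ref{2.B1fhx}--\ref{2.B1farctanharctan}: the three possibilities for the roots of $Z$ (distinct real, double, complex conjugate) produce respectively the power-type $h=x^\mu$ and $|x-1|^{\mu-1/2}|x+1|^{-\mu-1/2}$, the exponential $h=e^{-\mu/x}x^{-1}$, and the trigonometric $h=e^{\mu\arctan x}(x^2+1)^{-1/2}$ forms, while the Jordan type of $a_{\mu\nu}$ supplies the matching $t$-dependence ($e^{\mu t}$, $e^{(\mu\pm1)t}$, $tx-1$, $x\cos t\pm\sin t$). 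Once the normal forms are fixed it remains to (i) verify by direct substitution that each listed $(F,G)$ satisfies $D_tF+D_xG=0$ on~\eqref{eqDKfh}, (ii) compute its characteristic, and (iii) confirm mutual $\hat G^{\sim}_1$-inequivalence of the normalized equations by exhibiting distinguishing invariants, which completes the classification.
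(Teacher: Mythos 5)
Your proposal is correct in substance but takes a genuinely different route from the paper. The paper proves Theorem~\ref{TheoremOnClassificationCLsWideGroup} \emph{directly}: starting from Lemma~\ref{LemmaOnOrderOfConsLawsOfDCEs} it derives the same classifying condition~\eqref{fghABClassifyingConditionForConsLaws} that underlies your master relation, but then it consumes the extended group from the outset rather than at the end --- splitting on $\dim\langle A,B,1\rangle$, normalizing $B\in\{0,1\}\!\!\mod G^{\Equiv}$ in the branch $B\in\langle A,1\rangle$ (this is exactly where your parameter $\delta_8$ is spent), and then classifying via $m=\dim\langle f,xf,h_x,(xh)_x\rangle\in\{2,3\}$, with the $m=2$ branch governed by $\alpha^\nu_t=a_{\mu\nu}\alpha^\mu$ and the quadratic $Z$. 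You instead take Theorem~\ref{TheoremOnClassificationCLsSmallGroup} as given and factor out the extra transformations of~\eqref{EquivTransformationsDKfh} with $\delta_8=-\varepsilon$; this is logically sound, since $G^{\sim}_1\subset\hat G^{\sim}_1$ makes completeness of the coarser list automatic once the case-by-case mapping and mutual inequivalence are verified, and it has the expository merit of isolating precisely why the $\hat G^{\sim}_1$ list is shorter and closed-form: the only new freedom is $B\mapsto\delta_7(B+\delta_8 A)$ together with the reparametrization that straightens the implicitly defined variable $y$. What the paper's direct route buys is self-containedness: the paper displays a proof only for Theorem~\ref{TheoremOnClassificationCLsWideGroup}, so your derivation rests on a statement whose proof is never written out there, and you must be sure Theorem~\ref{TheoremOnClassificationCLsSmallGroup} is obtained by the same direct method with $\delta_8=0$ (it is) rather than conversely; moreover, working directly avoids manipulating the implicit-form answer of Theorem~\ref{TheoremOnClassificationCLsSmallGroup} at all. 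Note also that both you and the paper compress the same hard step to a sketch --- the orbit analysis of $(a_{\mu\nu},Z)$ producing cases~\ref{2.B0}b--\ref{2.B0}d and~\ref{2.B1fexphexp}--\ref{2.B1farctanharctan} is dispatched in the paper by the phrase ``separate consideration of possible inequivalent values of the constants $a_{\mu\nu}$''; your identification of the three root configurations of $Z$ (distinct real, double, complex) with the power, exponential and trigonometric profiles is the correct organizing principle for executing it. One small point you handle correctly and should keep: the reductions~\eqref{TransAdEquivTransOfCLs} of cases~\ref{2.B0}b--\ref{2.B0}d to~\ref{2.B0}a use form-preserving transformations lying outside $\hat G^{\sim}$, so those cases must remain as separate entries of the $\hat G^{\sim}_1$-classification, as in your list.
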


\begin{proof}
In view of lemma~\ref{LemmaOnOrderOfConsLawsOfDCEs}, we can assume at once that $F=F(t,x,u)$ and $G=G(t,x,u,u_x)$.
Let us substitute the expression for~$u_t$ deduced from~\eqref{eqDKfh}
into~(\ref{EqGen2DimConsLaw}) and decompose the obtained equation with respect to $u_{xx}$. The coefficient of~$u_{xx}$
gives the equation $AF_u+fG_{u_x}=0$, therefore $G=-Af^{-1}F_{u}u_x+G^1(t,x,u)$.
Taking into account the latter expression for $G$ and splitting the rest of equation~(\ref{EqGen2DimConsLaw})
with respect to the powers of $u_x$, we obtain the system of PDEs for the functions
$F$ and $G^1$ of the form
\begin{equation}\label{splitconslaw}
F_{uu}=0, \quad \frac hf BF_u-A\Bigl(\frac{F_{u}}f\Bigr)_x+G^1_u=0, \quad F_t+G^1_x=0.
\end{equation}
Solving first two equations of~(\ref{splitconslaw}) yields
\[
\textstyle
F=F^1(t,x)u+F^0(t,x), \quad G^1=\Bigl(\dfrac{F^1}f\Bigr)_x\int\! A-\dfrac hfF^1\int\! B+G^0(t,x).
\]
In further consideration the major role is played by a differential consequence of system~(\ref{splitconslaw})
that can be written as
\begin{equation}\label{fghABClassifyingConditionForConsLaws}
A\Bigl(\dfrac{F^1}f\Bigr)_{xx}-B\Bigl(\dfrac hfF^1\Bigr)_x+F^1_t=0.
\end{equation}
Indeed, it is the unique classifying condition for this problem.
In all classification cases we obtain the equation $F^0_t+G^0_x=0$.
Therefore, up to conserved vectors equivalence we can assume $F^0=G^0=0$,
and additionally $F^1\ne0$ for conservation laws to be non-trivial.
Equation~\eqref{fghABClassifyingConditionForConsLaws} implies that
there exist no non-trivial conservation laws in the general case.
Let us classify the special values of the parameter-functions
for which equation~\eqref{eqDKfh} possesses non-trivial conservation laws.
There exist four different possibilities for values of $A$ and $B$.

\vspace{0.8ex}

1. $\dim \langle A,B,1\rangle=3$. It follows from~\eqref{fghABClassifyingConditionForConsLaws} that
$F^1_t=(F^1/f)_{xx}=(hF^1/f)_x=0$ and therefore $F^1=C_1f/h$, $(1/h)_{xx}=0$, i.e., obviously
$h\in\{1,x^{-1}\}\!\!\mod G^{\Equiv}$. Moreover, $h=1\sim \tilde h=x^{-1}\!\!\mod G^{\Equiv}$
(the corresponding transformation is $\tilde x =\ln|x|$ and $\tilde f =x^2f$,
the other variables and parameter-functions are not changed).
As a result, we obtain case~\ref{2.h1}.

\vspace{0.8ex}

2. $A\in\langle 1\rangle$, $B\not\in\langle 1\rangle$.
Then $A=1\!\!\mod G^{\Equiv}$ and $(hF^1/f)_x=0$, $F^1_t+(F^1/f)_{xx}=0$,
i.e., $F^1=\alpha(t)f/h$, where $\alpha_t/\alpha=\varkappa=\const.$
$\varkappa\ne0$ (otherwise we have case~\ref{2.h1}) and so
$\varkappa=1$, $f=-h(h^{-1})_{xx}\!\!\mod G^{\Equiv}$
(case~\ref{2.A1Bune0}).

\vspace{0.8ex}

3. $A\not\in\langle 1\rangle$, $B\in\langle A, 1\rangle$.
Then $B\in\langle 1\rangle\!\!\mod G^{\Equiv}$ and $(F^1/f)_{xx}=0$, $F^1_t=B(hF^1/f)_x$,
i.e. $F^1=(\alpha^1(t)x+\alpha^0(t))f$ and $\alpha^1_txf+\alpha^0_tf=B(\alpha^1(xh)_x+\alpha^0h_x)$.
For $B=0$ we obtain case~\ref{2.B0}{\rm a} at once.
Suppose $B\ne0$. Then $B=1\!\!\mod G^{\Equiv}$
and the dimension $m=\dim\langle f,xf,h_x,(xh)_x\rangle$ can have only the values 2 and 3.

If $m=3$ then there exist constants~$a_{\mu\nu}$, $b_\mu$, $\mu,\nu=0,1$,
and a function $\theta=\theta(x)$ such that $(b_0,b_1)\ne(0,0)$, $\dim\langle f,xf,\theta\rangle=3$ and
$h_x=a_{00}f+a_{01}xf+b_0\theta$, $(xh)_x=a_{10}f+a_{11}xf+b_1\theta$.
Therefore, $\alpha^\nu_t=a_{\mu\nu}\alpha^\mu$, $b_\mu\alpha^\mu=0$, i.e. $\alpha^\mu=C_\mu e^{\sigma t}$,
where $C_\mu,\sigma=\const$ and $\sigma\ne0$ (otherwise, this case is reduced to a subcase of~\ref{2.h1}),
hence $\sigma=1\!\!\mod G^{\Equiv}$. Depending on values (either vanishing or non-vanishing) of~$C_1$
we obtain cases~\ref{2.B1fhx} and~\ref{2.B1fhx+hx-1} correspondingly.

If $m=2$ then $h_x=a_{00}f+a_{01}xf$, $(xh)_x=a_{10}f+a_{11}xf$ for some constants~$a_{\mu\nu}$, $\mu,\nu=0,1$.
Therefore, $\alpha^\nu_t=a_{\mu\nu}\alpha^\mu$, $f=-h/Z$, $h_x/h=-(a_{01}x+a_{00})/Z$, where
$Z=a_{01}x^2+(a_{00}-a_{11})x-a_{10}$, i.e.,
$h=Z^{-1/2}\exp(-\frac12 a_{\mu\mu}\int Z^{-1}dx)$.
As a results, we obtain two conservation laws with the conserved vectors
\[\textstyle
(\alpha^{i1}(t)x+\alpha^{i0}(t))fu,\ -(\alpha^{i1}(t)x+\alpha^{i0}(t))(Au_x+hu)+\alpha^{i1}(t)\int\!\! A),
\]
where $(\alpha^{i1},\alpha^{i0})$, $i=1,2$ form a fundamental set of solutions
of the system $\alpha^\nu_t=a_{\mu\nu}\alpha^\mu$.
Separate consideration of possible inequivalent values of the constants~$a_{\mu\nu}$ leads to
cases~\ref{2.B0}b--\ref{2.B0}d and~\ref{2.B1fexphexp}--\ref{2.B1farctanharctan}.

\vspace{0.8ex}

4. $A,B\in\langle 1\rangle$. Therefore, $A=1$, $B=0\!\!\mod\hat G^{\Equiv}$ and $F^1_t+(F^1/f)_{xx}=0$
(case~\ref{2.A1B0}).
\end{proof}

Analysis of the classification given
in theorems~\ref{TheoremOnClassificationCLsSmallGroup} and~\ref{TheoremOnClassificationCLsWideGroup}
results in the following conclusions.

There are no conservation laws for equations from class~\eqref{eqDKfh} in the general case.
Imposing restrictions on values of arbitrary parameters leads only to subclasses of~\eqref{eqDKfh} with
one-, two- and infinite-dimensional spaces of (local) conservation laws.
An equation from class~\eqref{eqDKfh} has infinite number of linearly independent conservation laws iff
it is linear.
In this case the space of conservation laws is parameterized with an arbitrary solution of the corresponding
backward equation (which is obtained from the initial one by the time reflection).

There are four subclasses of equations~\eqref{eqDKfh} with one linearly independent conservation law
(cases~\ref{2.h1}--\ref{2.B1fhx+hx-1}).
Case~\ref{2.h1} is distinguished from the other ones due to the single constraint~$h=\const$ and, therefore,
arbitrariness of nonlinearities~$A$ and~$B$ and the parameter-function~$f$.
The distinguishing features of case~\ref{2.A1Bune0} are disappearance of nonlinearity with respect to~$A$ due to
the constraint~$A=\const$, arbitrariness of~$B$ and a second-order differential constraint between~$f$ and~$h$.
All the latter cases are obtained with the constraint~$B\in\langle A, 1\rangle$ and first-order
differential constraints of~$f$ and~$h$
since only cases~\ref{2.B1fhx} and~\ref{2.B1fhx+hx-1} with the constraint~$B\in\langle A, 1\rangle$
admit extensions of space of conservation laws.

Cases \ref{2.B0}b--\ref{2.B0}d of Theorem~\ref{TheoremOnClassificationCLsWideGroup} can be reduced to case \ref{2.B0}a
by means of additional equivalence transformations which belong neither to~$G^{\sim}$ nor even to~$\hat G^{\sim}$
and are pure form-preserving point transformations%
~\cite{Kingston&Sophocleous1991,Kingston&Sophocleous1998,Kingston&Sophocleous2001} for class~\eqref{eqDKfh}:
\begin{gather}\nonumber
\ref{2.B0}{\rm b}\to\ref{2.B0}{\rm a}_{f=1}\colon\quad \tilde t=t,\quad
  \tilde x=x+t,\quad \tilde u=u;\\ \nonumber
\ref{2.B0}{\rm c}\to\ref{2.B0}{\rm a}_{f=e^x}\colon\quad \tilde t=e^t,\quad
  \tilde x=x+t,\quad \tilde u=u;\\ \nonumber
\ref{2.B0}{\rm d} (\mu+1\ne0)\to\ref{2.B0}{\rm a}_{f=x^{\mu-1}}\colon\quad
  \tilde t=(\mu+1)^{-1}(e^{(\mu+1)t}-1),\quad \tilde x=xe^t,\quad \tilde u=u;\\
\ref{2.B0}{\rm d} (\mu+1=0)\to\ref{2.B0}{\rm a}_{f=x^{-2}}\colon\quad \tilde t=t,\quad
  \tilde x=xe^t,\quad \tilde u=u.\label{TransAdEquivTransOfCLs}
\end{gather}

\section{Contractions of conservation laws}\label{SectionOnContractionsOfCLsOfDKfgh}

Similarly to contractions of symmetries or equations~\cite{Ivanova&Popovych&Sophocleous2006Part2} one can consider the notion
of contractions of conservation laws.

Consider the class~$\{\mathcal{L}(\varepsilon)\}$ of systems~$\mathcal{L}(\varepsilon)$:
$L(x,u_{(\prho)},\varepsilon)=0$
of $l$~differential equations for $m$~unknown functions $u=(u^1,\ldots,u^m)$
of $n$~independent variables $x=(x_1,\ldots,x_n)$,
which are parameterized with the parameter~$\varepsilon.$
Here $u_{(\prho)}$ denotes the set of all the derivatives of~$u$ with respect to $x$
of order not greater than~$\prho$, including $u$ as the derivatives of the zero order.
$L=(L^1,\ldots,L^l)$ is a tuple of $l$ fixed functions depending on $x,$ $u_{(\prho)}$ and $\varepsilon$.
%
Let $\Lambda(\varepsilon)=(\lambda^1(x,u_{(k)},\varepsilon),\ldots,\lambda^l(x,u_{(k)},\varepsilon))$ are characteristics of conservation laws of
the systems~$\mathcal{L}(\varepsilon)$, i.e., $\Lambda(\varepsilon)\in\Ch(\mathcal{L}(\varepsilon))$.
Suppose also that
$\Lambda(\varepsilon)\to\bar\Lambda$, $\mathcal{L}(\varepsilon)\to\bar{\mathcal{L}}$, $\varepsilon\to0$ in $C^k(J^\prho)$.

Consider the action of Euler operator~$\Eop$ on characteristic forms of the conservation law of the systems~$\mathcal{L}(\varepsilon)$:
$\Eop\lambda^\mu(\varepsilon)L^\mu(\varepsilon)=0$ $\forall\varepsilon$.
Since $\lambda^\mu(\varepsilon)L^\mu(\varepsilon)\to\bar\lambda^\mu\bar L^\mu$, $\varepsilon\to0$ in $C^k(J^\prho)$
then
$0=\Eop(\lambda^\mu(\varepsilon)L^\mu(\varepsilon))\to\Eop(\bar\Lambda^\mu\bar L^\mu)=0$,
$\varepsilon\to0$. Therefore $\Eop(\bar\lambda^\mu\bar L^\mu)=0$ or $\bar\Lambda\in\Ch(\bar{\mathcal{L}})$
and $\bar\lambda^\mu\bar L^\mu=0$ is a characteristic form of the conservation law of the system~$\bar{\mathcal{L}}$.

By analogy with terminology accepted for Lie algebras,
we will call such limits as {\it contractions of characteristics} and {\it contractions of conservation laws}.

\begin{example}
Consider the equation
\[
x^{\mu-1}u_t=(Au_x)_x+x^\mu u_x
\]
having two linearly independent conservation laws with the characteristics $\lambda_1^\mu=e^{\mu t}$ and $\lambda_2^\mu=e^{(\mu+1)t}x$
(Case~\ref{2.B0}.d of Theorem~\ref{TheoremOnClassificationCLsWideGroup}).
Under the contraction $x=1+\tilde x/\mu$, $t=\tilde t/\mu$,  $\mu\to+\infty$ it goes to the equation
\[
e^{\tilde x}u_{\tilde t}=(Au_{\tilde x})_{\tilde x}+e^{\tilde x}u_{\tilde x}
\]
which possesses linearly independent conservation laws
with the characteristics $\bar\lambda_1=e^t$ and $\bar\lambda_2=e^{\tilde t}(\tilde x+\tilde t)$
(Case~\ref{2.B0}.c of Theorem~\ref{TheoremOnClassificationCLsWideGroup}).

Under the same limit process the characteristics $\lambda_1^\mu$ and $\lambda_2^\mu$ are transformed to characteristics of the target equation.
Name
$\lambda_1^\mu=e^{\mu t}=e^{\tilde t}\to e^{\tilde t}=\bar\lambda_1$ and
$\mu(\lambda_1^\mu-\lambda_2^\mu)=e^{\tilde t}((e^{\tilde t/\mu}-1)\mu+\tilde xe^{\tilde t/\mu})\to\bar\lambda_2=e^{\tilde t}(\tilde x+\tilde t)$.
\end{example}

\begin{example}
\[
\left|\frac{x-1}{x+1}\right|^\mu|x^2-1|^{-3/2}u_t=(Au_x)_x+\varepsilon\left|\frac{x-1}{x+1}\right|^\mu|x^2-1|^{-1/2}u_x
\]
Under contraction $x=2\mu\tilde x/\mu'$, $t=\mu'\tilde t/(2\mu)$,  $\mu\to+\infty$ it maps to
\[
\tilde x^{-3}e^{-\mu'/x}u_{\tilde t}=(Au_{\tilde x})_{\tilde x}+\varepsilon\tilde x^{-1}e^{-\mu'/x}u_{\tilde x}
\]
The characteristics $\lambda_1^\mu=e^{(2\mu+1)t}(x-1)$ and $\lambda_2^\mu=e^{(2\mu-1)t}(x+1)$ of the conservation laws of the origin equation
give rise the characteristics for the target equation:
\[
\lambda_1^\mu=e^{(2\mu+1)\tilde t}(\tilde x-1)=e^{\mu'\tilde t}\frac{e^{\mu'\tilde t/(2\mu)}}{\mu'/(2\mu)}\left(\tilde x-\frac{\mu'}{2\mu}\right)
\to e^{\mu'\tilde t}x=\bar\lambda_1
\]
and
\[
\frac12(\lambda_1-\lambda_2)=
e^{\mu'\tilde t}\left(\tilde x\frac{e^{\mu'\tilde t/\mu}-1}{\mu'/\mu}-\frac12(e^{\mu'\tilde t/\mu}+1)\right)e^{\mu'\tilde t/(2\mu)}
\to e^{\mu'\tilde t}(\tilde t\tilde x-1)=\bar\lambda_2.
\]
\end{example}

The problem of finding all contractions of conservation laws of class~\eqref{eqDKfh} is
closely connected to the problem of construction of contractions of equations~\eqref{eqDKfh},
and therefore, still remains open.

\section{Generating sets of conservation laws\\ of nonlinear diffusion--convection equations}\label{SectionOnGenSetsOfCLsDCEs}

We emphasize that the conservation laws adduced in Theorem~\ref{TheoremOnClassificationCLsWideGroup}
are $G^{\sim}$-inequivalent and linearly independent.
At the same time, acting by the Lie symmetry transformations of equations having multidimensional spaces of conservation laws
or by the equivalence transformations of the whole class of equations,
one can find generating sets of conservation laws with respect to the symmetry groups.

\begin{note}
Since the cases~\ref{2.B0}b--\ref{2.B0}d of Theorem~\ref{TheoremOnClassificationCLsWideGroup} can be reduced to case~\ref{2.B0}a
by means of point transformations~\eqref{TransAdEquivTransOfCLs}, below we exclude them from the consideration
and assume them to be equivalent to case~\ref{2.B0}a.
\end{note}

\begin{note}
Below we investigate potential systems and potential conservation laws of nonlinear equations only.
The detailed analysis of linear case can be found in~\cite{Popovych&Kunzinger&Ivanova2007}.
\end{note}

\begin{example}
It is obvious that in general, for case~5a there exists no symmetry transformation changing the conservation laws
of the fixed equation.
However, for subcase $f=1$ of case~5a the space translation $x\to x+1$ maps the second basis conserved vector to
$(\,(x+1)u,\ -(x+1)Au_x+\int\!\! A\,)$. Subtracting it from the second basis vector, we obtain $(-u,-Au_x)$, that is the first basis vector
multiplied by $-1$. Therefore, the generating with respect to symmetry group set of conservation laws of the subclass
$f=1$, $B=0$ of equations~\eqref{eqDKfh} consists of the $D_t(xu)+D_x(\int\!\! A-xAu_x)=0$.
\end{example}

Investigation of generating sets of conservation laws can be performed both in terms of conserved vectors
(as in above example) or in terms of characteristics. Below for short of presentation we use the characteristics terminology.

\begin{example}
Consider case~\ref{2.B1fexphexp} of equations possessing two-dimensional space of conservation laws
with the basis conserved vectors
having characteristics $e^{\mu t}x$ and $e^{\mu t}(tx-1)$.
Symmetry transformation of time translation $t\to t+1$ maps the second characteristic
(modulo multiplying by a constant) to $e^{\mu t}(tx-x-1)$.
Subtracting the result from the characteristic $e^{\mu t}(tx-1)$ we get exactly the first basis characteristic $e^{\mu t}x$.
Therefore, the generating set of conservation laws of the subclass
$b=1$, $f=e^{-\mu/x}x^{-3}$, $g=1$, $h=e^{-\mu/x}x^{-1}$ of equations~\eqref{eqDKfh} consists of the conservation law
with characteristic $e^{\mu t}(tx-1)$.
\end{example}

Similarly one can prove that conservation laws in case~8 are generated by one with the characteristic $e^{\mu t}(x\cos t+\sin t)$
and in case~7 by the conservation law with the characteristic $e^{2\mu t}(x\cosh t-\sinh t)$.
Note that the basis conservation laws in this case have the characteristics~$e^{(2\mu-1)t}(x+1)$
and~$e^{(2\mu+1)t}(x-1)fu$. None of these two characteristics generates the whole space of conservation laws.

\medskip

Let us consider now the problem of construction of generating set of conservation laws
with respect to the extended equivalence group~$\hat G^{\sim}$.

\begin{example}
Consider subcase~\ref{2.B0}a ($B=0$) having two-dimensional space of conservation laws
spanned by ones with characteristics $1$ and $x$.
Subtracting the second characteristic from the result of application of equivalence transformation $\tilde x= x+1$ to it,
we get the first basis conservation law for equation $\tilde fu_t=(Au_{\tilde x})_{\tilde x}$ with~$\tilde f(\tilde x)= f(\tilde x-1)$.
Therefore, we can assume that conservation laws of subclass~$B=0$
are generated by~$D_t(xfu)-D_x(xAu_x-\int\!\! A)=0$ and group~$\hat G^{\sim}$ of equivalence transformations.
\end{example}

\begin{example}
Let us investigate in more detail case 7 of Theorem~\ref{TheoremOnClassificationCLsWideGroup}.
It is shown above that for the fixed value of~$\mu$
the set of local conservation laws is generated (with respect to the symmetry group of the equation)
by the conservation law with characteristic~$e^{2\mu t}(x\cosh t-\sinh t)$.
Alternatively, we can consider an action of equivalence transformations to the conservation laws.
Thus, discrete equivalence transformation $\tilde t=-t$, $\tilde x=-x$, $\tilde \mu=-\mu$
applied to the first basis characteristic~$e^{(2\mu+1)t}(x-1)$
generates the second characteristic~$e^{(2\tilde \mu-1)\tilde t}$
of equation with~$\tilde f=-f(-x,-\mu)$ $\tilde h=-h(-x,-\mu)$, $\tilde A=-A$ and~$\tilde\mu=-\mu$.
\end{example}

Similarly it can be shown that there exists no other nontrivial action of transformations from~$\hat G^{\sim}$ to conservation laws
of equations~\eqref{eqDKfh}. Therefore, the following theorem holds.

\begin{theorem}\label{TheoremGenSetCLEquivGr}
The generating with respect to~$\hat G^{\sim}$ set of nonlinear equations~\eqref{eqDKfh}
and corresponding conservation laws consists of
\setcounter{mcasenum}{0}

\vspace{1ex}

$\makebox[6mm][l]{\refstepcounter{mcasenum}\themcasenum.}
h=1 \colon\quad (\,fu,\ -Au_x-\int\!\! B\,)$;

\vspace{1ex}

$\makebox[6mm][l]{\refstepcounter{mcasenum}\themcasenum.}
A=1, \quad B_u\ne0,\quad f=-h(h^{-1})_{xx}\colon \quad \textstyle
(\,e^t(h^{-1})_{xx} u,\ e^t(h^{-1}u_x-(h^{-1})_xu+\int\!\! B)\,)$;

\vspace{1ex}

$\makebox[6mm][l]{\refstepcounter{mcasenum}\themcasenum.}
B=1, \quad f=h_x\colon \quad (\,e^tfu,\ -e^t(Au_x+hu)\,)$;

\vspace{1ex}

$\makebox[6mm][l]{\refstepcounter{mcasenum}\themcasenum.}
B=1, \quad f=h_x+hx^{-1}\colon \quad (\,e^txfu ,\ -e^t(xAu_x+xhu-\int\!\! A)\,)$;

\vspace{1ex}

$\makebox[6mm][l]{\refstepcounter{mcasenum}\themcasenum.}
B=0\colon\quad (\,xfu,\ -xAu_x+\int\!\! A\,)$;

\vspace{1ex}

$\makebox[6mm][l]{\refstepcounter{mcasenum}\themcasenum.} B=1,
\quad f=e^{-\mu/x}x^{-3},\quad h=e^{-\mu/x}x^{-1},\quad \mu\in\{0,1\}\colon$
\nopagebreak\\[1ex]\hspace*{\parindent}%
$\makebox[6mm][l]{} (\,e^{\mu t}(tx-1)fu ,\ -e^{\mu t}(tx-1)(Au_x+hu)+te^{\mu t}\int\!\! A\,)$;

\vspace{1ex}

$\makebox[6mm][l]{\refstepcounter{mcasenum}\themcasenum.} B=1,
\ f=|x-1|^{\mu-3/2}|x+1|^{-\mu-3/2},\ h=|x-1|^{\mu-1/2}|x+1|^{-\mu-1/2}\colon$
\nopagebreak\\[1ex]\hspace*{\parindent}%
$\makebox[6mm][l]{} (\,e^{(2\mu+1)t}(x-1)fu ,\ -e^{(2\mu+1)t}(x-1)(Au_x+hu)+e^{(2\mu+1)t}\int\!\! A\,)$;

\vspace{1ex}

$\makebox[6mm][l]{\refstepcounter{mcasenum}\themcasenum.}
B=1, \quad f=e^{\mu\arctan x}(x^2+1)^{-3/2},
\quad h=e^{\mu\arctan x}(x^2+1)^{-1/2}\colon$
\nopagebreak\\[1ex]\hspace*{\parindent}%
$\makebox[6mm][l]{} (\,e^{\mu t}(x\cos t+\sin t)fu ,\ -e^{\mu t}(x\cos t+\sin t)(Au_x+hu)+e^{\mu t}\cos t\int\!\! A\,).$


\end{theorem}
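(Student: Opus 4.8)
The plan is to obtain the generating set directly from the complete classification of Theorem~\ref{TheoremOnClassificationCLsWideGroup}, collapsing each multidimensional space of conservation laws by the action of a single transformation from~$\hat G^{\sim}$. First I would fix one representative equation in every $\hat G^{\sim}$-inequivalence class listed there, discarding cases~\ref{2.B0}b--\ref{2.B0}d since they are already identified with case~\ref{2.B0}a via the point transformations~\eqref{TransAdEquivTransOfCLs}. For the four cases~\ref{2.h1}--\ref{2.B1fhx+hx-1} the space of conservation laws is one-dimensional, so the displayed basis vector is automatically a generator and passes unchanged into cases~1--4 of the statement. The linear case~\ref{2.A1B0} is excluded, as only nonlinear equations are under consideration.

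The substantive content lies in the four cases carrying a two-dimensional space of conservation laws, namely~\ref{2.B0}a, \ref{2.B1fexphexp}, \ref{2.B1fx1x1hx1x1} and~\ref{2.B1farctanharctan}. For each I would exhibit one transformation of~$\hat G^{\sim}$ --- possibly an equation-preserving one, i.e. a point symmetry --- whose action on one of the two basis characteristics yields, after subtracting a suitable multiple of that characteristic, the second basis characteristic; the characteristic that generates the whole plane is then retained. These are exactly the reductions carried out in the examples preceding the statement: for case~\ref{2.B0}a the shift $\tilde x=x+1$ sends the characteristic~$x$ to~$x+1$ and thereby produces the characteristic~$1$; for case~\ref{2.B1fexphexp} the time translation $t\mapsto t+1$ applied to $e^{\mu t}(tx-1)$ yields, after subtraction, the characteristic $e^{\mu t}x$; and for cases~\ref{2.B1fx1x1hx1x1} and~\ref{2.B1farctanharctan} the discrete transformation $\tilde t=-t$, $\tilde x=-x$, $\tilde\mu=-\mu$ interchanges the two basis characteristics. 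Assembling the retained generators gives cases~5--8 of the statement.

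The step I expect to be the main obstacle is the exhaustiveness claim made just before the statement: that these four reductions are the only nontrivial actions of~$\hat G^{\sim}$ on the conservation laws, so that the list can be neither shortened nor merged. For this I would use the explicit form of the transformations~\eqref{EquivTransformationsDKfh} together with the induced action~\eqref{eq.tr.var.cons.law} on conserved vectors to describe how an arbitrary element of~$\hat G^{\sim}$ acts on a characteristic, and then verify, case by case, that apart from the four reductions above every admissible transformation either leaves invariant the stratum of Theorem~\ref{TheoremOnClassificationCLsWideGroup} to which the equation belongs or multiplies a characteristic by a scalar. Since the $\hat G^{\sim}$-inequivalence of the cases is already furnished by Theorem~\ref{TheoremOnClassificationCLsWideGroup}, the remaining work is the careful bookkeeping of how the constants in~\eqref{EquivTransformationsDKfh} act on the exponential and polynomial factors of the two-parameter families of characteristics, confirming that no further linear relation is available. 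Once this is in place, the eight retained generators constitute the asserted generating set.
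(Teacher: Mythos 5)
Your overall route is the paper's own: cases with one-dimensional spaces pass through unchanged, the linear case is dropped, and each two-dimensional space is collapsed by a single transformation --- the $x$-translation (an equivalence transformation altering $f$) for case~\ref{2.B0}a, the time translation for case~\ref{2.B1fexphexp}, and the discrete reflection $\tilde t=-t$, $\tilde x=-x$, $\tilde\mu=-\mu$ for case~\ref{2.B1fx1x1hx1x1}. The exhaustiveness of these reductions is asserted rather than proved in the paper as well (``Similarly it can be shown that there exists no other nontrivial action\dots''), so your final paragraph is on par with the source on that point.

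There is, however, one concrete error: your claim that the discrete reflection also interchanges the two basis characteristics in case~\ref{2.B1farctanharctan} is false. Under $\tilde t=-t$, $\tilde x=-x$, $\tilde\mu=-\mu$ one computes $e^{\mu t}(x\cos t+\sin t)=-e^{\tilde\mu\tilde t}(\tilde x\cos\tilde t+\sin\tilde t)$ and likewise $e^{\mu t}(x\sin t-\cos t)=e^{\tilde\mu\tilde t}(\tilde x\sin\tilde t-\cos\tilde t)$, so the reflection maps each basis characteristic to itself (up to sign) for the equation with parameter $-\mu$; it produces no new conservation law. (This is in contrast to case~\ref{2.B1fx1x1hx1x1}, where indeed $e^{(2\mu+1)t}(x-1)\mapsto -e^{(2\tilde\mu-1)\tilde t}(\tilde x+1)$.) The correct mechanism for case~\ref{2.B1farctanharctan}, and the one the paper uses, is the time-translation symmetry $t\mapsto t+c$: from $x\cos(t-c)+\sin(t-c)=\cos c\,(x\cos t+\sin t)+\sin c\,(x\sin t-\cos t)$ one sees that the single characteristic $e^{\mu t}(x\cos t+\sin t)$ generates the whole two-dimensional space, since $\sin c\ne0$ for generic~$c$. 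With this substitution your argument recovers the asserted generating set; as written, the step for case~\ref{2.B1farctanharctan} would fail.
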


\section{Potential systems and potential conservation laws}\label{SectionOnPotSymPotCLsTheory}

If the local conservation laws of a system~$\mathcal{L}$ of differential equations are known,
we may apply Lemma~\ref{lemma.null.divergence} to the constructed conservation laws
on the set of solutions of~\mbox{$\mathcal{L}={\mathcal L}^0$}.
In such way we introduce potentials as additional dependent variables. Then we attach the equations connecting the potentials with
components of corresponding conserved vectors to~${\mathcal L}^0$.
(If~\mbox{$n>2$} the attached equations of such kind form an underdetermined system with respect to the potentials.
Therefore, we can also attach gauge conditions on the potentials to~${\mathcal L}^0$.)

We have to use linearly independent conservation laws since otherwise the introduced potentials will be
{\em dependent} in the following sense: there exists a linear combination of the potential tuples,
which is, for some $r'\in{\mathbb N}$, a tuple of functions of $x$ and $u_{(r')}$ only.

Then we exclude the unnecessary equations (i.e., the equations that are dependent on
equations of~${\mathcal L}^0$ and attached equations simultaneously)
from the extended (potential) system~${\mathcal L}^1$
which will be called a {\em potential system of the first level}.
Any conservation law of~${\mathcal L}^0$ is a one of~${\mathcal L}^1$.
We iterate the above procedure for~${\mathcal L}^1$ to find its conservation laws
which are linearly independent with ones from the previous iteration
and will be called {\em potential conservation laws of the first level}.

We make iterations until it is possible
(i.e., the iteration procedure has to be stopped if all the conservation laws of
a {\em potential system~${\mathcal L}^{k+1}$ of the $(k+1)$-th level} are linearly dependent
with the ones of~${\mathcal L}^k$) or construct infinite chains of conservation laws by means of induction.
A such way may yield {\em purely potential} conservation laws of the initial system~$\mathcal L$,
which are linearly independent with local conservation laws and depend explicitly on potential variables.

Any conservation law from the previous step of iteration procedure will be a conservation law for the next step
and vice versa, conservation laws which are obtained on the next step
and depend only on variables of the previous step are linearly dependent with
conservation laws from the previous step.
It is also obvious that the conservation laws used for construction of a potential system of the next level are
trivial on the manifold of this system.

Since gauge conditions on potentials can be chosen in many different ways,
exhaustive realization of the above iteration procedure is improbable in case $n>2$.

The case of two independent variables is singular with respect to possible (constant) indeterminacy
after introduction of potentials and high effectiveness of application of potential symmetries.
That is why we consider some notions connected with conservation laws in this case separately.
We denote independent variables as $t$ (the time variable) and $x$ (the space one).
Any local conservation law has the form
\[
D_tF(t,x,u_{(r)})+D_xG(t,x,u_{(r)})=0.
\]
It allows us to introduce the new dependent (potential) variable~$v$
by means of the equations
\begin{equation}\label{potsys1}
v_x=F,\qquad v_t=-G.
\end{equation}

In the case of single equation~$\mathcal{L}$, equations of form~\eqref{potsys1} combine into
the complete potential system since~$\mathcal{L}$ is a differential consequence of~\eqref{potsys1}.
As a rule, systems of such kind admit a number of nontrivial symmetries and so they are of a great interest.

Lemma~\ref{PropositionOnInducedMapping} and equation~\eqref{potsys1} imply the following statement.

\begin{proposition}\label{2DConsLawEquivRelation}\cite{Popovych&Ivanova2004ConsLawsLanl}
Any point transformation connecting two systems~$\mathcal{L}$ and~$\tilde{\mathcal L}$
of PDEs with two independent variables generates a one-to-one mapping between the sets of potential systems,
which correspond to~$\mathcal{L}$ and~$\tilde{\mathcal L}$. Generation is made via trivial prolongation
on the space of introduced potential variables, i.e. we can assume that the potentials are not transformed.
\end{proposition}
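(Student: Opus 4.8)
The plan is to reduce everything to the canonical correspondence between conservation laws and potential systems in the two-dimensional case, combined with the bijection on conservation laws already supplied by Proposition~\ref{PropositionOnInducedMapping}. First I would record the correspondence itself: to a conservation law of~$\mathcal{L}$ with conserved vector $(F,G)$ one associates the potential system~\eqref{potsys1}, $v_x=F$, $v_t=-G$. This assignment descends to $\CL(\mathcal{L})$, since replacing $(F,G)$ by an equivalent conserved vector $(F+\hat F+D_xH,\,G+\hat G-D_tH)$ only shifts $v$ by a differential function of $t$, $x$ and derivatives of~$u$ and adds terms vanishing on solutions, hence yields the same potential system up to a redefinition of the potential; conversely, every first-level potential system of~$\mathcal{L}$ arises in this way.

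Next I would make the trivial prolongation precise. I extend the point transformation $g\colon(t,x,u)\mapsto(\tilde t,\tilde x,\tilde u)$ to a transformation $\hat g$ on $(t,x,u,v)$ by declaring $\tilde v=v$. The reason this is the correct extension is geometric: in two independent variables a conserved vector $(F,G)$ is equivalent, on the manifold of~$\mathcal{L}$, to the closed $1$-form $\omega=F\,dx-G\,dt$, because $d\omega=(D_tF+D_xG)\,dt\wedge dx$ vanishes on solutions, and in these terms~\eqref{potsys1} reads simply $dv=\omega$. Pullback of differential forms is natural under diffeomorphisms, and the component formula~\eqref{eq.tr.var.cons.law} is exactly the expression of $\omega$ in the new coordinates, namely $\tilde\omega=F_g\,d\tilde x-G_g\,d\tilde t$ coincides with $\omega$ after the change of variables. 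Consequently the choice $\tilde v=v$ gives $d\tilde v=\tilde\omega$, that is $\tilde v_{\tilde x}=F_g$ and $\tilde v_{\tilde t}=-G_g$, so $\hat g$ carries the potential system of $(F,G)$ onto the potential system of the transformed conserved vector $(F_g,G_g)$.

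Finally I would assemble the bijection. By Proposition~\ref{PropositionOnInducedMapping} the transformation $g$ induces a one-to-one linear map $g_{\rm f}\colon\CL(\mathcal{L})\to\CL(\tilde{\mathcal{L}})$ sending the class of $(F,G)$ to the class of $(F_g,G_g)$. Composing the canonical correspondence ``conservation law $\leftrightarrow$ potential system'' on~$\mathcal{L}$, the map $g_{\rm f}$, and the inverse correspondence on~$\tilde{\mathcal{L}}$, and invoking the previous step, shows that $\hat g$ takes the potential system attached to a conservation law~$\mathcal{F}$ of~$\mathcal{L}$ precisely to the potential system attached to $g_{\rm f}(\mathcal{F})$. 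Since $g_{\rm f}$ and both canonical correspondences are bijective, the induced map on potential systems is one-to-one; the same argument, applied coordinatewise to the several potential variables introduced from a set of linearly independent conservation laws, handles potential systems of higher corank, and in every case the potentials are left untransformed.

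The step I expect to require the most care is the compatibility verification in the second paragraph: that the abstract component transformation~\eqref{eq.tr.var.cons.law} agrees with the chain-rule computation of $\tilde v_{\tilde x}$ and $\tilde v_{\tilde t}$ under the constraint $\tilde v=v$. This is the $n=2$ instance of the statement that~\eqref{eq.tr.var.cons.law} encodes the natural transformation law of the $1$-form $\omega$, and it reduces to a routine Jacobian identity; the differential-form formulation makes the matching transparent and, in particular, explains why no nontrivial transformation of the potential is ever forced.
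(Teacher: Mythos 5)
Your proof is correct and takes essentially the same route as the paper, which derives the proposition in a single line from Proposition~\ref{PropositionOnInducedMapping} together with the defining equations~\eqref{potsys1} — precisely the skeleton you assemble. Your differential-form verification that the transformation law~\eqref{eq.tr.var.cons.law} expresses the invariance of $\omega=F\,dx-G\,dt$, so that setting $\tilde v=v$ automatically yields $\tilde v_{\tilde x}=F_g$ and $\tilde v_{\tilde t}=-G_g$, merely makes explicit the compatibility step the paper leaves to the cited reference.
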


\begin{corollary}
The Lie symmetry group of a system~$\mathcal{L}$ of differential equations generates an equivalence group
on the set of potential systems corresponding to~$\mathcal{L}$.
\end{corollary}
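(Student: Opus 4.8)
The plan is to derive the statement as a direct specialization of Proposition~\ref{2DConsLawEquivRelation}. A Lie symmetry transformation $g$ of $\mathcal{L}$ is, by definition, a point transformation that connects $\mathcal{L}$ with itself, i.e.\ it is exactly the situation of Proposition~\ref{2DConsLawEquivRelation} with $\tilde{\mathcal{L}}=\mathcal{L}$. Hence, first I would apply that proposition with $\tilde{\mathcal{L}}=\mathcal{L}$ to every $g\in G^{\max}(\mathcal{L})$: it yields, via trivial prolongation to the potential variables, a one-to-one mapping of the set of potential systems of~$\mathcal{L}$ onto itself. Denote this induced mapping by $g_{\rm p}$.

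Next I would make the action of $g_{\rm p}$ explicit and check that it preserves the defining structure of a potential system. A potential system is built, through equations of the form~\eqref{potsys1}, from a linearly independent tuple of conservation laws of~$\mathcal{L}$; by Proposition~\ref{PropositionOnInducedMapping} the transformation $g$ induces a one-to-one linear map $g_{\rm f}$ on $\CL(\mathcal{L})$, and $g_{\rm p}$ simply sends the potential system associated with a tuple $(\mathcal{F}_1,\ldots,\mathcal{F}_k)$ to the one associated with $(g_{\rm f}\mathcal{F}_1,\ldots,g_{\rm f}\mathcal{F}_k)$, the potentials being left untransformed. Since $g_{\rm f}$ is linear, bijective and preserves linear (in)dependence of elements of $\CL(\mathcal{L})$, linearly independent tuples are carried to linearly independent tuples; thus $g_{\rm p}$ maps potential systems to potential systems and is invertible. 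This shows each $g_{\rm p}$ is an admissible transformation of the class of potential systems corresponding to~$\mathcal{L}$.

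Finally I would verify the group structure. The assignment $g\mapsto g_{\rm p}$ is a homomorphism: the identity of $G^{\max}(\mathcal{L})$ clearly prolongs to the identity mapping, and because the prescription ``do not transform the potentials'' is preserved under composition, the trivial prolongation of $g_1g_2$ coincides with the composition $g_{1,\rm p}g_{2,\rm p}$. Hence $\{g_{\rm p}\,|\,g\in G^{\max}(\mathcal{L})\}$ is a group of invertible transformations of the set of potential systems, i.e.\ an equivalence group on that set, which is exactly the claim.

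The step requiring the most care is the second one, namely that $g_{\rm p}$ is genuinely well defined on potential systems rather than merely on individual conserved vectors. This rests on two facts already established: that $g_*$ preserves $\CV_0(\mathcal{L})$ and hence descends to a map $g_{\rm f}$ on $\CL(\mathcal{L})$ (Proposition~\ref{PropositionOnInducedMapping}), so that the induced potential system does not depend on the choice of representative conserved vectors; and that the constant indeterminacy inherent in the potentials $v$ (noted above as the singular feature of the two-dimensional case) is absorbed by the trivial-prolongation convention of Proposition~\ref{2DConsLawEquivRelation}, so that the convention remains consistent along compositions.
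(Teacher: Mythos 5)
Your proof is correct and follows essentially the same route as the paper: the corollary is stated there as an immediate specialization of Proposition~\ref{2DConsLawEquivRelation} to the case $\tilde{\mathcal{L}}=\mathcal{L}$, with the trivial prolongation to the potentials and the well-definedness via $g_{\rm f}$ on $\CL(\mathcal{L})$ (Proposition~\ref{PropositionOnInducedMapping}) doing exactly the work you spell out. Your explicit verification of the homomorphism property and of independence from the choice of representative conserved vectors merely fills in details the paper leaves implicit.
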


\begin{corollary}
Let $\widehat{\mathcal{L}}|_S$ be the set of all potential systems constructed
for systems from the class~$\mathcal{L}|_S$ with their conservation laws.
Action of transformations from~$G^{\Equiv}(L,S)$ together with the equivalence relation of potentials
naturally generates an equivalence relation on~$\widehat{\mathcal{L}}|_S$.
\end{corollary}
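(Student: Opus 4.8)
The plan is to write down the relation explicitly and then verify the three defining properties, leaning throughout on Proposition~\ref{2DConsLawEquivRelation} together with Proposition~\ref{PropositionOnInducedMapping}. First I would fix notation: a point of $\widehat{\mathcal{L}}|_S$ is a potential system built over some $\mathcal{L}_\theta\in\mathcal{L}|_S$ by adjoining, for a tuple of linearly independent conservation laws of $\mathcal{L}_\theta$, the equations of form~\eqref{potsys1}. I then declare two such objects, associated with pairs $(\mathcal{L}_\theta,\mathcal{F})$ and $(\mathcal{L}_{\theta'},\mathcal{F}')$, to be related if there is $g\in G^{\Equiv}(L,S)$ carrying $\mathcal{L}_\theta$ to $\mathcal{L}_{\theta'}$ whose induced map $g_{\rm f}\colon\CL(\mathcal{L}_\theta)\to\CL(\mathcal{L}_{\theta'})$ from Proposition~\ref{PropositionOnInducedMapping} sends $\mathcal{F}$ to $\mathcal{F}'$, after which the two potential systems agree up to the equivalence relation of potentials (additive constants in the potential variables and invertible constant-coefficient recombinations of them). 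The decisive point supplied by Proposition~\ref{2DConsLawEquivRelation} is that such a $g$ automatically prolongs, with the potentials left untouched, to a point transformation between the two potential systems; this is what makes the relation live on $\widehat{\mathcal{L}}|_S$ rather than merely on the underlying pairs from $P(L,S)$.

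For reflexivity I would use that the identity transformation lies in $G^{\Equiv}(L,S)$ and induces the identity on $\CL(\mathcal{L}_\theta)$, combined with the trivial relabelling of potentials. For symmetry I would invoke that $G^{\Equiv}(L,S)$ is a group, so $g^{-1}\in G^{\Equiv}(L,S)$, and that $g_{\rm f}$ is a linear bijection by Proposition~\ref{PropositionOnInducedMapping}; applying the inverse prolongation (again furnished by Proposition~\ref{2DConsLawEquivRelation}) together with the symmetry of the potential equivalence reverses the relation. For transitivity I would compose: if $g_1$ links the first object to the second and $g_2$ the second to the third, then $g_2g_1\in G^{\Equiv}(L,S)$, the induced maps compose functorially ($(g_2g_1)_{\rm f}=g_{2{\rm f}}\,g_{1{\rm f}}$), the prolongations compose by Proposition~\ref{2DConsLawEquivRelation}, and the potential equivalence is itself transitive.

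The main obstacle I anticipate is not any single one of these properties in isolation but the \emph{compatibility} of the two ingredients being combined: the $G^{\Equiv}$-action on the one hand and the potential equivalence on the other. Concretely, one must check that the indeterminacies of the construction---the choice of conserved-vector representative inside a conservation law, the additive constants in the potentials, and the freedom to pass to an invertible linear combination of a basis of conservation laws---are absorbed into a single consistent relation, so that ``transform then relabel'' and ``relabel then transform'' land in the same class. The representative ambiguity is harmless because $g_*$ maps trivial conserved vectors to trivial ones (Proposition~\ref{PropositionOnInducedMapping}), whence $g_{\rm f}$ is well defined on $\CL$; the residual work is the bookkeeping showing that potential relabellings commute, up to the same equivalence, with the prolonged action. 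If instead one reads ``generates'' as passing to the relation generated by the two operations, the reflexive, symmetric and transitive properties become formal consequences of taking a transitive closure, and the entire substantive content collapses onto this lifting of the $G^{\Equiv}$-action from $\mathcal{L}|_S$ and its conservation laws up to $\widehat{\mathcal{L}}|_S$ via Proposition~\ref{2DConsLawEquivRelation}.
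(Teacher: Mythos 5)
Your proposal is correct and follows the same route the paper takes: the paper offers no separate proof of this corollary, treating it as an immediate consequence of Proposition~\ref{2DConsLawEquivRelation} (trivial prolongation of the $G^{\Equiv}$-action to the potentials) combined with the remark, made earlier in the text, that verifying reflexivity, symmetry and transitivity for such combined relations is routine group-theoretic bookkeeping. Your explicit checks, including the observation that $g_*$ preserves trivial conserved vectors so the relation is well defined on conservation-law classes, simply fill in the details the authors leave implicit.
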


\begin{note}
Proposition~\ref{2DConsLawEquivRelation} and its corollaries imply that the equivalence group for a class of
systems or the symmetry group for single system can be prolonged to potential variables for any step of
the direct iteration procedure. It is natural the prolonged equivalence groups are used to classify
possible conservation laws and potential systems in each iteration.
Additional equivalences which exist in some subclasses of the class or arise
after introducing potential variables can be used for further analysis of connections between conservation laws.
\end{note}

\section{Potential systems of diffusion--convection equations}\label{SectionPotSysOfDifConvEq}

At first we consider potential systems obtained with introducing potentials associated to a single conservation law.
According to the terminology proposed in~\cite{Popovych&Ivanova2004ConsLawsLanl} we use the attribute ``simplest"
to single out such systems from the set of all potential systems associated to equations~\eqref{eqDKfh}.

The spaces of local conservation laws in cases 1--4 of Theorem~\ref{TheoremOnClassificationCLsWideGroup} are one-dimensional and
give rise to the following nonlinear simplest potential systems:
\\[1ex]
\noindent{\bf 1.} $h=1$:  \qquad $v_x=fu$, $v_t=Au_x+\int B$
\\[1ex]
{\bf 2.} $A=1$, $B_u\ne0$, $f=-h(h^{-1})_{xx}$:  \qquad
$v_x=e^t(h^{-1})_{xx} u$, $v_t=e^t(-h^{-1}u_x+(h^{-1})_xu-\int\!\! B).$
\\[1ex]
{\bf 3.} $B=1$, $f=h_x$: \qquad $v_x=e^th_xu$, $v_t=e^t(Au_x+hu).$
\\[1ex]
{\bf 4.} $B=1$, $f=h_x+hx^{-1}$: \qquad $v_x=e^txfu$, $v_t=e^t(xAu_x+xhu-\int\!\! A).$
\\

If the dimension of the space of local conservation laws is greater than one then the situation is more complicated.
Previously, for construction of simplest potential systems only basis conservation laws were used
(e.g., like systems {\bf 1} and {\bf 5} for case 5a of Theorem~\ref{TheoremOnClassificationCLsWideGroup}).
It was just luck that all possible inequivalent potential systems are exhausted for the considered cases.
It happened because arbitrary linear combination of basis conservation laws could be reduced to a subset of the basis elements
by means of a symmetry transformation of the considered system. Here we will show that the basis conservation laws may be
equivalent with respect to group of symmetry transformations, or vice versa, the number of $G^{\sim}$-independent linear combinations
may be grater than the dimension of the space of conservation laws.
Class~\eqref{eqDKfh} is very rich in the sense that it provides us with examples of all these three cases.

Case 5a of Theorem~\ref{TheoremOnClassificationCLsWideGroup} is the classical. The most general form of the simplest system is
\[\textstyle
v_x=(c_1x+c_2)fu,\quad v_t=(c_1x+c_2)Au_x-\int A,
\]
where $c_1$ and $c_2$ are arbitrary constants.
Here and below $(c_1,c_2)\ne(0,0)$. Moreover, without lost of generality, we can assume that the coefficients of linear combinations are determined
up to nonzero factor (in most of cases $c_1^2+c_2^2=1$).

If $c_1=0$ the obtained system
coincides with system {\bf 1}$|_{B=0}$. If $c_1\ne0$ it can be mapped to the system\\[1ex]
\noindent{\bf 5.}  $B=0$: \qquad $v_x=xfu$,\quad $v_t=xAu_x-\int A$\\[1ex]
by means of equivalence transformation of translation in $x$.
We emphasize that in case $f_x\ne0$  this transformation changes the value of arbitrary element $f$.
Therefore, we can reconstruct all potential symmetries of the subclass $B=0$ of class~\eqref{eqDKfh}
applying translations in the space variable to the
arbitrary element~$f$ and to the potential symmetries obtained from systems~{\bf 1} and~{\bf 5}.

Potential system corresponding to case~6 has the form
\[\textstyle
v_x=e^{\mu t}(c_1x+c_2tx-c_2)fu, \quad v_t=e^{\mu t}(c_1x+c_2tx-c_2)(Au_x+hu)-e^{\mu t}c_1\int\!\! A.
\]
In contrast to the previous case, we can simplify the form of the potential system using the symmetry transformations
(translation of time) of the fixed equation of the given form.
More precisely, using the time translation and scaling of potential~$v$ one can reduce the potential system
to one of the following forms:

\noindent{\bf 6.1.} $B=1$, $f=e^{-\mu/x}x^{-3}$, $h=e^{-\mu/x}x^{-1}$, $\mu\in\{0,1\}$:
\begin{gather*}\textstyle
v_x=e^{\mu t}xfu, \quad v_t=e^{\mu t}x(Au_x+hu)-e^{\mu t}\int\!\! A,
\end{gather*}
if $c_2=0$ or

\noindent{\bf 6.2.} $B=1$, $f=e^{-\mu/x}x^{-3}$, $h=e^{-\mu/x}x^{-1}$, $\mu\in\{0,1\}$:
\begin{gather*} \textstyle
v_x=e^{\mu t}(tx-1)fu , \quad v_t=e^{\mu t}(tx-1)(Au_x+hu)-te^{\mu t}\int\!\! A
\end{gather*}
if $c_2\ne0$.

The most interesting from this point of view is the case~7 of Theorem~\ref{TheoremOnClassificationCLsWideGroup}.
It is obvious that the most general form of the associated simplest potential system is
\begin{gather*}
v_x=e^{2\mu t}(x(c_1\cosh t+c_2\sinh t)-c_1\sinh t-c_2\cosh t)fu,\\
v_t=e^{2\mu t}(x(c_1\cosh t+c_2\sinh t)-c_1\sinh t-c_2\cosh t)(Au_x+hu)\\ \textstyle
\phantom{v_x}{}-e^{2\mu t}(c_1\cosh t+c_2\sinh t)\int\!\! A\,).
\end{gather*}
Without lost of generality we can assume $c_1\ge0$. Cases $c_1>|c_2|\ge0$ are equivalent
(under the action of symmetry transformations of translation in time) to the system\\[1ex]
\noindent{\bf 7.1.} $B=1$, $f=|x-1|^{\mu-3/2}|x+1|^{-\mu-3/2}$, $h=|x-1|^{\mu-1/2}|x+1|^{-\mu-1/2}$:
\begin{gather*}\textstyle
v_x=e^{2\mu t}(x\cosh t-\sinh t)fu,\quad v_t=e^{2\mu t}(x\cosh t-\sinh t)(Au_x+hu)-e^{2\mu t}\cosh t\int\!\! A.
\end{gather*}
Similarly cases $|c_2|>c_1\ge0$ fall to the system\\[1ex]
\noindent{\bf 7.2.} $B=1$, $f=|x-1|^{\mu-3/2}|x+1|^{-\mu-3/2}$, $h=|x-1|^{\mu-1/2}|x+1|^{-\mu-1/2}$:
\begin{gather*}\textstyle
v_x=e^{2\mu t}(x\sinh t-\cosh t)fu ,\quad v_t=e^{2\mu t}(x\sinh t-\cosh t)(Au_x+hu)-e^{2\mu t}\sinh t\int\!\! A.
\end{gather*}

At last, we should consider the case of $|c_1|=|c_2|$. Using the scaling transformation of the potential~$v$
we can reduce them either to\\[1ex]
\noindent{\bf 7.3.} $B=1$, $f=|x-1|^{\mu-3/2}|x+1|^{-\mu-3/2}$, $h=|x-1|^{\mu-1/2}|x+1|^{-\mu-1/2}$:
\begin{gather*}\textstyle
v_x=e^{(2\mu+1)t}(x-1)fu,\quad v_t=e^{(2\mu+1)t}(x-1)(Au_x+hu)-e^{(2\mu+1)t}\int\!\! A,
\end{gather*}
if $c_1=c_2$ or to\\[1ex]
\noindent$B=1$, $f=|x-1|^{\mu-3/2}|x+1|^{-\mu-3/2}$, $h=|x-1|^{\mu-1/2}|x+1|^{-\mu-1/2}$:
\begin{gather*}\textstyle
v_x=e^{(2\mu-1)t}(x+1)fu ,\quad v_t=e^{(2\mu-1)t}(x+1)(Au_x+hu)-e^{(2\mu-1)t}\int\!\! A
\end{gather*}
in the case $c_1=-c_2$. Since we consider simultaneously the whole class of equations~\eqref{eqDKfh},
we can apply additionally discrete equivalence transformation of alternating of sign in the set $(t,x,u,v,f,h,A,\mu)$
and reduce the latter system to~{\bf 7.3}. We emphasize once more that if one considers
a separate fixed equation from case~7 of Theorem~\ref{TheoremOnClassificationCLsWideGroup}, he obtains
4 independent potential systems. Considering simultaneously the class of equations we have three $\hat G^{\sim}_1$-independent systems.
Nevertheless, in this case the number of independent potential systems is greater than the number of basis conservation laws.

It is not difficult to show that under the action of symmetry transformation of translation in time
and scaling of potential variable there exist
only one locally inequivalent simplest potential system in case~8 of Theorem~\ref{TheoremOnClassificationCLsWideGroup}:
\\[1ex]
\noindent{\bf 8.} $B=1$, $f=e^{\mu\arctan x}(x^2+1)^{-3/2}$, $ h=e^{\mu\arctan x}(x^2+1)^{-1/2}$:
\begin{gather*}\textstyle
v_x=e^{\mu t}(x\cos t+\sin t)fu ,\quad v_t=e^{\mu t}(x\cos t+\sin t)(Au_x+hu)-e^{\mu t}\cos t\int\!\! A.
\end{gather*}
Thus, the number of independent potential systems is smaller then the number of basis conservation laws.

\begin{theorem}
The complete set of $\hat G^{\sim}_1$-independent simplest potential systems of equations from class~\eqref{eqDKfh} is exhausted by
ones with bold numbers {\bf 1}--{\bf 5}, {\bf 6.1}, {\bf 6.2}, {\bf 7.1}--{\bf 7.3}, {\bf 8}.
\end{theorem}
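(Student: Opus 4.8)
The plan is to build the theorem case-by-case from Theorem~\ref{TheoremOnClassificationCLsWideGroup}, exploiting the one-to-one correspondence between conservation laws and simplest potential systems: a conserved vector $(F,G)$ of an equation~\eqref{eqDKfh} yields the system $v_x=F$, $v_t=-G$, and by Proposition~\ref{2DConsLawEquivRelation} together with its corollaries any Lie symmetry of a fixed equation (respectively, any transformation of $\hat G^{\sim}_1$ for the whole class), trivially prolonged to $v$, induces an equivalence of the associated potential systems. Hence classifying $\hat G^{\sim}_1$-independent simplest potential systems reduces to classifying, for each case, the orbits on the set of nonzero linear combinations $\mathcal F=c_1\mathcal F_1+c_2\mathcal F_2$ of the basis conservation laws (taken up to a common scalar) under the joint action of the Lie symmetry group of the corresponding equation and of $\hat G^{\sim}_1$. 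The one-dimensional cases~1--4 contribute the systems {\bf 1}--{\bf 4} with no freedom, so the work concentrates on the two-dimensional cases~5a, 6, 7 and 8.

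First I would write the general combination $(c_1,c_2)\ne(0,0)$ in each remaining case and record the resulting conserved vector, exactly as displayed in Section~\ref{SectionPotSysOfDifConvEq}. Next I would compute the induced linear action on $(c_1,c_2)$ of the relevant one-parameter symmetry subgroup, whose \emph{type} dictates the orbit structure. In case~6 the time translation $t\to t+t_0$ sends the second characteristic $e^{\mu t}(tx-1)$ to $e^{\mu t_0}e^{\mu t}((t+t_0)x-1)$, a unipotent (shear) action; combined with the scaling of $v$ it collapses all $c_2\ne0$ combinations to {\bf 6.2} and leaves the invariant direction $c_2=0$ as {\bf 6.1}. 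In case~7 the same subgroup acts by hyperbolic rotation on the frame $e^{2\mu t}(x\cosh t-\sinh t)$, $e^{2\mu t}(x\sinh t-\cosh t)$, so the orbits split by the sign of $c_1^2-c_2^2$ into {\bf 7.1} ($c_1>|c_2|$), {\bf 7.2} ($|c_2|>c_1$) and the light cone {\bf 7.3} ($|c_1|=|c_2|$). In case~8 the action is a genuine rotation, whose single nontrivial orbit yields {\bf 8}. Case~5a is handled via $\hat G^{\sim}_1$ rather than Lie symmetries: the $x$-translation is not a symmetry of a fixed equation (it alters $f$) but is an equivalence transformation, reducing every $c_1\ne0$ combination to {\bf 5}, while $c_1=0$ gives {\bf 1}$|_{B=0}$.

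The main obstacle is the careful bookkeeping at the degenerate boundaries of these orbit decompositions, above all in case~7: after the hyperbolic analysis the two light-cone directions $c_1=c_2$ and $c_1=-c_2$ survive as distinct representatives not linked by any connected symmetry, and one must invoke the discrete equivalence transformation alternating the signs of $(t,x,u,v,f,h,A,\mu)$ to merge them into the single system {\bf 7.3}. Beyond producing the representatives, I would also have to establish genuine $\hat G^{\sim}_1$-\emph{in}dependence of the final list---that none of {\bf 1}--{\bf 5}, {\bf 6.1}, {\bf 6.2}, {\bf 7.1}--{\bf 7.3}, {\bf 8} are related by any $\hat G^{\sim}_1$-transformation prolonged to $v$---which amounts to comparing the invariant differential constraints on $f$ and $h$ that distinguish the underlying cases of Theorem~\ref{TheoremOnClassificationCLsWideGroup}, together with the orbit data computed above. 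Appending the trivial one-dimensional cases~1--4 to these representatives then completes the enumeration.
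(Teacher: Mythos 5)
Your proposal is correct and follows essentially the same route as the paper: reduce to classifying nonzero linear combinations $(c_1,c_2)$ of basis conservation laws under time translations and $v$-scalings for a fixed equation (cases 6--8), the $x$-translation as an \emph{equivalence} transformation altering $f$ in case 5a, and the discrete sign-alternating transformation in $(t,x,u,v,f,h,A,\mu)$ to merge the two light-cone rays of case 7 into the single system {\bf 7.3}. Your packaging of the induced actions as unipotent, hyperbolic and elliptic rotations, and your explicit final independence check, are minor refinements of the paper's identical case-by-case computations rather than a different method.
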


As one can see, in cases 5a--8 of Theorem~\ref{TheoremOnClassificationCLsWideGroup}
the spaces of local conservation laws are two-dimensional.
It allows us to introduce {\it extended potential systems}~\cite{Popovych&Ivanova2004ConsLawsLanl}
by means of introducing two potentials for each case simultaneously:
\\[1ex]
\noindent{\bf 5$'$.}  $B=0$: \qquad   $v_x=fu$, $v_t=Au_x$, $w_x=xfu$, $w_t=xAu_x-\int\!\! A$.
\\[1ex]
{\bf 6$'$.} $B=1$, $f=e^{-\mu/x}x^{-3}$, $h=e^{-\mu/x}x^{-1}$, $\mu\in\{0,1\}$:
\begin{gather*}\textstyle
v_x=e^{\mu t}xfu, \quad v_t=e^{\mu t}x(Au_x+hu)-e^{\mu t}\int\!\! A
\\ \textstyle
w_x=e^{\mu t}(tx-1)fu , \quad w_t=e^{\mu t}(tx-1)(Au_x+hu)-te^{\mu t}\int\!\! A.
\end{gather*}

\noindent{\bf 7$'$.} $B=1$, $f=|x-1|^{\mu-3/2}|x+1|^{-\mu-3/2}$, $h=|x-1|^{\mu-1/2}|x+1|^{-\mu-1/2}$:
\begin{gather*}\textstyle
v_x=e^{(2\mu+1)t}(x-1)fu,\quad v_t=e^{(2\mu+1)t}(x-1)(Au_x+hu)-e^{(2\mu+1)t}\int\!\! A
\\  \textstyle
w_x=e^{(2\mu-1)t}(x+1)fu ,\quad w_t=e^{(2\mu-1)t}(x+1)(Au_x+hu)-e^{(2\mu-1)t}\int\!\! A.
\end{gather*}

\noindent{\bf 8$'$.} $B=1$, $f=e^{\mu\arctan x}(x^2+1)^{-3/2}$, $ h=e^{\mu\arctan x}(x^2+1)^{-1/2}$:
\begin{gather*}\textstyle
v_x=e^{\mu t}(x\cos t+\sin t)fu ,\quad v_t=e^{\mu t}(x\cos t+\sin t)(Au_x+hu)-e^{\mu t}\cos t\int\!\! A,
\\ \textstyle
w_x=e^{\mu t}(x\sin t-\cos t)fu, \quad w_t=e^{\mu t}(x\sin t-\cos t)(Au_x+hu)-e^{\mu t}\sin t\int\!\! A.
\end{gather*}

Potential symmetries of equations~\eqref{eqDKfh} arising from the above independent potential symmetries
will be investigated in the last part~\cite{Ivanova&Popovych&Sophocleous2006Part4} of this series.

\section{Potential conservation laws\\ of diffusion--convection equations}\label{SectionOnPotConsLawsOfDCEs}

All potential systems of equations~\eqref{eqDKfh} are constructed with usage of local
conservation laws of equations~\eqref{eqDKfh} which were classified with respect to~$\hat G^{\sim}_1$.
Each of these subclasses of conservation laws is equivalent with respect to a subgroup of~$G^{\sim}_1$.
In each case this subgroup has a very simple structure and can be singled out from~\eqref{EquivTransformationsDKfh}
imposing an additional condition $\delta_7=\delta_8=0$. It becomes the usual equivalence group of class~\eqref{eqDKfh}
and is trivially prolonged to the corresponding potentials.
Henceforth we will denote such prolongation as $G^{\sim}_{\rm pr}$.

Let us investigate local conservation laws of potential systems {\bf1}-–{\bf8$'$}, which have the
form
\begin{equation}\label{PotCL}
D_tF(t, x, u_{(r)}, v_{(r)}(,w_{(r)})) + D_xG(t, x, u_{(r)}, v_{(r)}(,w_{(r)})) = 0.
\end{equation}
These laws can be considered as nonlocal (potential) conservation laws of equations from class~\eqref{eqDKfh}.

\begin{lemma}\label{LemmaOnOrderOfPotCLsOfDCEs}
Any conservation law of form~\eqref{PotCL}
for each of systems~{\bf1}-–{\bf8$'$} from Table~1 has the zero order,
i.e., it is equivalent to a conservation law with a conserved density $F$ and a conserved flux $G$
that are independent on the (non-zero order) derivatives of $u$ and potentials $v$ (and $w$).
\end{lemma}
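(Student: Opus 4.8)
The plan is to adapt the order-reduction argument of Lemma~\ref{LemmaOnOrderOfConsLawsOfDCEs} to the potential systems, exploiting that each of them is of \emph{first} order in the potentials and therefore lets one eliminate most jet variables on its manifold. First I would fix convenient coordinates on the manifold of a system from the list \textbf{1}--\textbf{8$'$}. Every such system expresses $v_x$, $v_t$ (and, for the extended systems, $w_x$, $w_t$) as differential functions of $t$, $x$ and $u_{(1)}$, while the original equation~\eqref{eqDKfh}, being a differential consequence of~\eqref{potsys1}, expresses $u_t$ through $t$, $x$, $u$, $u_x$, $u_{xx}$. Differentiating these relations, all $t$-derivatives of $u$ and all derivatives of the potentials reduce to functions of $t$, $x$, the potentials themselves and the pure $x$-derivatives $u_k=\partial^k u/\partial x^k$. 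Hence, up to equivalence of conserved vectors, I may assume that $F$ and $G$ in~\eqref{PotCL} depend only on $t$, $x$, $v$ (and $w$) and $u_0,\dots,u_{r'}$, the potentials entering at zeroth order only.

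On this reduced set of variables the situation is formally identical to the proof of Lemma~\ref{LemmaOnOrderOfConsLawsOfDCEs}, with $v$ (and $w$) carried along as extra zero-order arguments. I would expand~\eqref{PotCL}, substitute the manifold expressions for $u_t$ (hence for $D_x^j u_t$) and for $v_t$, $v_x$ (and $w_t$, $w_x$), and split with respect to the highest $x$-derivatives of $u$. As before, the coefficients of $u_{r'+2}$ and $u_{r'+1}$ yield $F_{u_{r'}}=0$ and an expression for $G$ linear in $u_{r'}$; the $u_{r'}^2$-term forces $F$ to be affine in $u_{r'-1}$; and subtracting a total-derivative pair $D_xH$, $-D_tH$ (admissible by Lemma~\ref{lemma.null.divergence} and the two-dimensional equivalence introduced after~\eqref{EqGen2DimConsLaw}) lowers the order by one. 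Iterating brings me to $F=F(t,x,u,v,w)$ and a flux affine in $u_x$, namely $G=-f^{-1}A\,F_u\,u_x+G^1(t,x,u,v,w)$, the coefficient arising from the $u_{xx}$-term with $S=f^{-1}A$.

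The genuinely new step, which I expect to be the main obstacle, is removing this residual $u_x$ from the flux so as to reach zero order. Here I would use the potential equation itself: since one of $v_t$, $w_t$ has the form $Au_x+(\text{terms of lower order in }u)$ with nonvanishing coefficient, the product $A\,u_x$ is, on the manifold, a potential derivative up to zero-order terms. Splitting the remaining conservation-law condition with respect to $u_x$, I would show that the classifying equations force $F$ to be \emph{affine} in $u$ (the analogue of $F_{uu}=0$ in~\eqref{splitconslaw}), so that $F_u$ is a function of $t$, $x$, $v$, $w$ alone. This is exactly what permits a correction $H=H(t,x,v,w)$ depending on the potentials but not on $u$: its total derivative $-D_tH$ contributes a term proportional to $A\,u_x$ to the flux through the $H_v v_t$ (or $H_w w_t$) part, while $D_xH$ stays zeroth order in $u$ through $v_x=fu$. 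Matching $H_v$ with $f^{-1}F_u$ up to sign cancels the $u_x$-term and simultaneously clears the $u$-linear part of the density, leaving $F$ and $G$ as functions of $t$, $x$, $u$, $v$ (and $w$) only, which is the asserted zero order. The argument is uniform over \textbf{1}--\textbf{8$'$} because all these systems share the same first-order structure $v_x=(\cdots)fu$, $v_t=(\cdots)(Au_x+hu)+(\cdots)\!\int\!A$ (and analogously for $w$); the sole extra bookkeeping concerns the second potential in \textbf{5$'$}--\textbf{8$'$}, handled by treating $w$ exactly as $v$.
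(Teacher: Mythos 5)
Your argument is correct and follows the same skeleton as the paper's (very compressed) proof: first eliminate, via the potential system and its differential consequences, all nonzero-order derivatives of $v$ (and $w$) and all $t$-derivatives of $u$, then rerun the iterative order reduction from the proof of Lemma~\ref{LemmaOnOrderOfConsLawsOfDCEs} with the potentials carried along as extra zero-order arguments. What you add is the final step, and it is the right thing to add: the paper's phrase ``completely similar to the one of Lemma~\ref{LemmaOnOrderOfConsLawsOfDCEs}'' cannot be read literally, since that lemma terminates with a flux still affine in $u_x$, namely $G=-f^{-1}AF_u u_x+G^1(t,x,u,v,w)$, whereas the present statement (and its use in the proofs of Lemmas~\ref{LemmaOnExistPotCLs}--\ref{LemmaOnPotCLsSys5'6'7'8'}) asserts genuine zero order. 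Your resolution checks out in detail: after the reduction, the coefficient of $u_x^2$ in the remaining identity is exactly $-(A/f)F_{uu}$ (the $A_u$-terms cancel between $F_u u_t$ and $D_x(f^{-1}AF_uu_x)$), so $F=F^1(t,x,v,w)\,u+F^0(t,x,v,w)$ uniformly over systems~{\bf1}--{\bf8$'$}; then the gauge $H=H(t,x,v,w)$ with $\lambda^1H_v+\lambda^2H_w=F^1/f$ (for one potential, $\lambda H_v=F^1/f$) cancels the $u_x$-term of the flux through $v_t=\lambda(Au_x+hu)-\lambda_x\int\! A$ and, through $v_x=\lambda fu$, simultaneously the $u$-linear part of the density --- consistent with the subsequent lemmas, which rederive $F_u=0$ from the zero-order ansatz. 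Two minor caveats, both equally glossed over by the paper: the characteristics may vanish on a thin set (e.g., $\lambda=x$ in system~{\bf5}), so $H$ is chosen locally/generically; and for {\bf5$'$}--{\bf8$'$} solvability of the equation for $H$ uses $(\lambda^1,\lambda^2)\ne(0,0)$. As an aside, for the one-potential systems there is a shortcut that makes the paper's ``completely similar'' literally true: substituting $u=v_x/(\lambda f)$ converts the system into a single quasi-linear second-order evolution equation for $v$, to which Lemma~\ref{LemmaOnOrderOfConsLawsOfDCEs} applies verbatim and yields $F(t,x,v)$ and $G(t,x,v,v_x)=G(t,x,u,v)$ at once; this route is unavailable for the two-potential systems, where your explicit gauging step is genuinely needed.
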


\begin{proof}
Consider any from the systems~{\bf1}-–{\bf8$'$}.
Taking it and its differential consequences into account, we can exclude
dependence of $F$ and $G$ on the all (non-zero order) derivatives of $v$ (and $w$) and the derivatives of $u$
containing differentiation with respect to $t$.
The remain part of the proof is completely similar to the one of Lemma~\ref{LemmaOnOrderOfConsLawsOfDCEs}.
\end{proof}

Systems~{\bf1$|_{B=0}$}, {\bf3}--{\bf7.3} and~{\bf5$'$}--{\bf8$'$} are quit similar: for all of them $B=\const$.
Analyzing conservation laws of them we prove the following statement.

\begin{lemma}\label{LemmaOnExistPotCLs}
Nonlinear equations~\eqref{eqDKfh} with $B=0$ or $B=1$ have nontrivial potential conservation laws only if $A=u^{-2}\!\!\mod G^{\sim}_{\rm pr}$.
\end{lemma}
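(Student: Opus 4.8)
The plan is to classify directly the zero-order conservation laws of each of the systems \textbf{1}$|_{B=0}$, \textbf{3}--\textbf{7.3} and \textbf{5$'$}--\textbf{8$'$}, proceeding exactly as in the proof of Theorem~\ref{TheoremOnClassificationCLsWideGroup}, and to isolate the single classifying condition on~$A$. First I would invoke Lemma~\ref{LemmaOnOrderOfPotCLsOfDCEs} to assume that the density and flux have the form $F=F(t,x,u,v\,(,w))$ and $G=G(t,x,u,v\,(,w))$. On the solution manifold of a single-potential system one may treat $t$, $x$, $u$, $u_x$ and $v$ as independent coordinates, expressing $u_{xx}$, $u_t$, $v_x$ and $v_t$ through the system together with the underlying equation~\eqref{eqDKfh} (recall $B=\const$). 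Substituting these into $D_tF+D_xG=0$ and splitting first with respect to $u_{xx}$ gives $F_u=0$, so $F=F(t,x,v)$; the coefficient of~$u_x$ then yields $G=-F_v\int\! A+G^0(t,x,v)$, up to the $(t,x)$-weight specific to each system.

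The heart of the matter is the residual, $u_x$-free equation. Since $B$ is constant, its $u$-dependence is carried entirely by the four functions $1$, $u$, $\int\! A$ and $u\int\! A$: the convection term drops out once $F_u=0$, and the only remaining $u$-dependence comes from the terms proportional to $u$ (through the densities/fluxes of the potential equations) and from $\int\! A$ in~$G$. I would therefore split the residual equation with respect to these four blocks. For a generic nonlinearity $A$ the functions $1,u,\int\! A,u\int\! A$ are linearly independent over the reals; all four coefficients must then vanish, which forces $F$ to be affine in~$v$ with its $v$-coefficient completely determined by the system, so that the only surviving conserved vectors are linearly dependent with the already known local conservation laws modulo a null divergence. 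Hence no genuinely potential conservation law arises in the generic case.

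A nontrivial potential conservation law can appear only when this splitting degenerates, i.e.\ when $1,u,\int\! A,u\int\! A$ are linearly dependent: $c_0+c_1u+(c_2+c_3u)\int\! A=0$ for constants not all zero. If $c_3\ne0$ this integrates to $\int\! A=-(c_0+c_1u)/(c_2+c_3u)$, whence $A=(\int\! A)_u$ is proportional to $(c_2+c_3u)^{-2}$, that is $A=u^{-2}\!\!\mod G^{\sim}_{\rm pr}$ after the shift and scaling of~$u$ and the scaling of~$A$ contained in the equivalence group. If $c_3=0$ but $c_2\ne0$ then $\int\! A$ is affine and $A=\const$, which is excluded because the equations are assumed nonlinear; the case $c_2=c_3=0$ is impossible. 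This establishes the necessity of $A=u^{-2}\!\!\mod G^{\sim}_{\rm pr}$.

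The main obstacle I anticipate is carrying out the computation uniformly across all the listed systems: the single-potential systems carry different $(t,x)$-weights (factors such as $e^t$, $x$, $e^{\mu t}(tx-1)$, $e^{2\mu t}(x\cosh t-\sinh t)$, and so on) and, for $B=1$, a convection term, while the extended systems \textbf{5$'$}--\textbf{8$'$} introduce two potentials $v,w$ and hence a larger determining system. One must verify that in every case the $u$-dependence of the residual equation is still governed only by the blocks $1,u,\int\! A,u\int\! A$, so that the same degeneracy condition on~$A$ controls the existence of potential conservation laws, and that the affine-in-$v$ solutions of the generic case are indeed linearly dependent with the known local conservation laws modulo trivial ones --- as one sees, for instance, for \textbf{1}$|_{B=0}$ from the identity $(v,-\int\! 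A)+(xfu,-xAu_x+\int\! A)=(D_x(xv),-D_t(xv))$ on solutions.
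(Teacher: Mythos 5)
Your strategy coincides with the paper's own: invoke Lemma~\ref{LemmaOnOrderOfPotCLsOfDCEs} to reduce to zero-order conserved vectors, split the residual determining equation with respect to the $u$-blocks $1$, $u$, $\int\! A$, $u\int\! A$, and extract $A=u^{-2}$ from the degeneracy of this system. For the one-potential systems \textbf{1}$|_{B=0}$, \textbf{3}--\textbf{7.3} your execution matches the paper essentially step by step: the splitting there does give $F_{vv}=0$, hence $F=F^1(t,x)v+F^0(t,x)$, and the paper disposes of these densities exactly by the mechanism behind your sample identity, passing to the equivalent conserved vector $\tilde F=F-D_x(\Phi v)$, $\tilde G=G+D_t(\Phi v)$ with $\Phi_x=F^1$, which exhibits the conservation law as local for~\eqref{eqDKfh} and trivial on the potential system. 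Your degeneracy analysis $c_0+c_1u+(c_2+c_3u)\int\! A=0$, with $c_3\ne0$ forcing $A=u^{-2}\bmod G^{\sim}_{\rm pr}$ and $c_3=0$ forcing the linear case excluded by the nonlinearity assumption, is also the paper's case split (the paper records this as $A\in\{1,u^{-2}\}\bmod G^{\sim}_{\rm pr}$ and then discards $A=1$).

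The genuine gap is in the two-potential systems \textbf{5$'$}--\textbf{8$'$}, which you flag as an anticipated obstacle but then resolve with a claim that fails. For those systems the coefficient of $u\int\! A$ in the classifying equation~\eqref{ClassifCondGenConsVecCLB01GenForm} is the single scalar condition $\lambda^i\lambda^jfF_{v^iv^j}=0$ (summed over $i,j$), i.e., vanishing of the second derivative of $F$ only along the direction $(\lambda^1,\lambda^2)$ in potential space; it does \emph{not} force $F$ to be affine in the potentials, contrary to your assertion that ``all four coefficients must then vanish, which forces $F$ to be affine in~$v$.'' The general solution is $F=F^1(t,x,\omega)\theta+F^0(t,x,\omega)$ with $\theta=\lambda^1v^1+\lambda^2v^2$ and $\omega=\lambda^1v^2-\lambda^2v^1$, carrying arbitrary $\omega$-dependence. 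The paper needs a further, substantive argument to kill this freedom: decomposing the remaining determining equations with respect to $\theta$ and using the nonvanishing of the Wronskian $\lambda^1\lambda^2_x-\lambda^2\lambda^1_x$ (verified for each of \textbf{5$'$}--\textbf{8$'$}) to conclude $F^1_\omega=0$ and then $F^0_{\omega\omega}=0$, so that $F$ is linear in the potentials; and even then locality of the resulting conservation law is not immediate but is imported from the external result of~\cite{Bluman&Cheviakov&Ivanova2006,Popovych2007Comment} that conservation laws with conserved vectors linear in the potentials reduce to local ones. Without the $(\omega,\theta)$ decomposition, the Wronskian non-degeneracy, and this linearity-implies-locality step, your generic-case conclusion is unsupported for \textbf{5$'$}--\textbf{8$'$}, so your argument proves the lemma only for the simplest potential systems.
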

\begin{proof}
Each of the simplest potential systems~{\bf1$|_{B=0}$}, {\bf3}--{\bf7.3} has the form
\begin{equation}\label{SysPotSysB01GenForm}
\textstyle
v_x=\lambda fu,\quad v_t=\lambda(Au_x+hu)-\lambda_x\int A,
\end{equation}
where $\lambda=\lambda(t,x)$, is a characteristic of a local conservation law of equation~\eqref{eqDKfh} with $B=\const$.
According to Lemma~\ref{LemmaOnOrderOfPotCLsOfDCEs} we look for its conservation laws in form
\[
D_tF(t,x,u,v)+D_xG(t,x,u,v)=0.
\]
First we expand this expression on the solution manifold of~\eqref{SysPotSysB01GenForm}
and decompose it with respect to unconstrained derivatives of $u$.
Coefficients of $u_t$ and $u_x$ give
\[\textstyle
F_u=0,\quad G=-\lambda F_v\int A+\hat G(t,x,v).
\]
Substituting this to the rest of the conservation law we obtain a classifying equation for the components of the conserved vector:
\begin{gather}
\label{ClassifCondConsVecCLB01GenForm}\textstyle
F_t+F_v(\lambda hu-\lambda_x\int A)-(\lambda F_v)_x\int A+\hat G_x-\lambda^2fF_{vv}u\int A+\lambda f\hat G_vu=0.
\end{gather}

If $u\int A\not\in\langle1,u,\int A\rangle$ then
\[
F_{vv}=0, \quad -\lambda_xF_v-(\lambda F_v)_x=0,\quad F_vh+f\hat G_v=0,\quad F_t+\hat G_x=0.
\]
Therefore,
\[
F=F^1(t,x)v+F^0(t,x),\quad \hat G=-\frac hfF^1v+G^0(t,x).
\]
Up to the usual equivalence of conserved vectors we can assume that $F^0(t,x)=G^0(t,x)=0$.

Consider an equivalent conserved vector
\begin{gather*}
\tilde F=F-D_x(\Phi(t,x)v),\quad \tilde G=G+D_t(\Phi(t,x)v),
\end{gather*}
where $\Phi_x(t,x)=F^1$. It is obvious that this conservation law is local for the equation~\eqref{eqDKfh},
and moreover, it is trivial on the solution manifold of the potential system~\eqref{SysPotSysB01GenForm}.

If $u\int A\in\langle1,u,\int A\rangle$, then $u\int A=c_0+c_1u+c_2\int A$. This is an algebraic equation for $u\int A$.
Up to translations of $u$ we can assume that $c_2=0$. Therefore $u\int A=c_0+c_1u$ or $A\in\{1,u^{-2}\}\!\!\mod G^{\sim}_{\rm pr}$.
Since we consider only nonlinear equations, $A=u^{-2}$ is the only case where potential conservation laws can exist.

Consider now the general potential systems~{\bf5$'$}--{\bf8$'$} on two potentials. Each of them has the form
\begin{equation}\label{SysGenPotSysB01GenForm}
\textstyle
v^i_x=\lambda^i fu,\quad v^i_t=\lambda^i(Au_x+hu)-\lambda^i_x\int A,\quad i=1,2,
\end{equation}
where $\lambda^i=\lambda^i(t,x)$ are linearly independent characteristics of local conservation laws.
Similarly to the case of simplest potential system we can prove that an arbitrary conservation law of system~\eqref{SysGenPotSysB01GenForm}
has the form
\[
D_tF(t,x,v^1,v^2)+D_xG(t,x,u,v^1,v^2)=0,
\]
where the flux~$G=-\lambda^iF_{v^i}\int A+\hat G(t,x,v^1,v^2)$, and the classifying equation is the following
\begin{gather}
\label{ClassifCondGenConsVecCLB01GenForm}\textstyle
F_t+\hat G_x+u\lambda^i(hF_{v^i}+f\hat G_{v^i})-\int A(\lambda^i_xF_{v^i}+(\lambda ^iF_{v^i})_x)-\lambda^i\lambda^jfF_{v^iv^j}u\int A=0.
\end{gather}
Now, if $u\int A\not\in\langle1,u,\int A\rangle$
(in other words, $A\ne1,u^{-2}\!\!\mod G^{\sim}_{\rm pr}$) we can split~\eqref{ClassifCondConsVecCLB01GenForm}
with respect to the linearly independent functions of~$u$ and obtain
\[
F_t+\hat G_x=0,\quad \lambda^i(hF_{v^i}+f\hat G_{v^i})=0,\quad \lambda^i_xF_{v^i}+(\lambda ^iF_{v^i})_x=0,\quad \lambda^i\lambda^jfF_{v^iv^j}=0.
\]
The latter system implies, in particular, that
\begin{gather*}
F=F^1(t,x,\omega)\theta+F^0(t,x,\omega),\quad \hat G=-\frac hfF^1\theta+G^0(t,x,\omega),\\
2F^1\lambda^j\lambda^j_x+\lambda^j\lambda^jF^1 _\omega(\lambda^1_xv^2-\lambda^2_xv^1)
=-F^1 _\omega(\lambda^1\lambda^2_x-\lambda^2\lambda^1_x)\theta-F^1\lambda^1\lambda^1_x-F^0_\omega(\lambda^1\lambda^2_x-\lambda^2\lambda^1_x),
\end{gather*}
where $\omega=\lambda^1v^2-\lambda^2v^1$, $\theta=\lambda^1v^1+\lambda^2v^2$.
Decomposing $(\lambda^1_xv^2-\lambda^2_xv^1)$ with respect to $\omega$, $\theta$ and collecting the coefficients of~$\theta$ in the last equation
we obtain $2F^1 _\omega(\lambda^1\lambda^2_x-\lambda^2\lambda^1_x)=0$.
One can easily check that for each of the systems~{\bf5$'$}--{\bf8$'$} $\lambda^1\lambda^2_x-\lambda^2\lambda^1_x\ne0$, therefore, $F^1 _\omega=0$
which implies immediately $F^0_{\omega\omega}=0$.
Thus, we get that the conserved density and, consequently, the conserved flux are linear with respect to the potentials.
Therefore, in view of~\cite{Bluman&Cheviakov&Ivanova2006,Popovych2007Comment}, the conservation law is local for equation~\eqref{eqDKfh}.

The lemma is completely proved.
\end{proof}

In view of Lemma~\ref{LemmaOnExistPotCLs}, in order to construct potential conservation laws of equations~\eqref{eqDKfh},
it is enough to investigate conservation laws of the potential systems {\bf1}, {\bf2} and cases $A=u^{-2}$ of systems~{\bf3}--{\bf8$'$}.
The chain of lemmas below describes completely the set of potential conservation laws of equations~\eqref{eqDKfh}.

\begin{lemma}\label{LemmaOnPotCLsSys1}
The list of $G^{\sim}_{\rm pr}$-inequivalent linearly independent conservation laws of system~{\bf1} is exhausted by the following ones
\begin{gather*}\textstyle
1.\quad \forall A,\ \int B=u\int A,\ f=1:\qquad D_t(e^v)+D_x(e^v\int A)=0,\\
2.\quad A=u^{-2},\ B=0,\ f=1:\qquad D_t(\sigma)+D_x(\sigma_{v}u^{-1})=0,\\
3.\quad A=1,\ B=2u,\ f=1:\qquad D_t(\alpha e^v)+D_x(\alpha_xe^v-\alpha u e^v)=0,
\end{gather*}
where $\alpha=\alpha(t,x)$ and $\sigma=\sigma(t,v)$ are arbitrary solutions of the backward linear heat equations $\alpha_t+\alpha_{xx}=0$
and $\sigma_t+\sigma_{vv}=0$, correspondingly.
\end{lemma}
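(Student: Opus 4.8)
The plan is to run the direct method on the potential system~{\bf 1}, i.e. $v_x=fu$, $v_t=Au_x+\int\!B$, exactly as in the proof of Lemma~\ref{LemmaOnExistPotCLs}, but now keeping $A$, $B$ and $f$ unspecified so as to detect every resonance. By Lemma~\ref{LemmaOnOrderOfPotCLsOfDCEs} each conservation law of~{\bf 1} is equivalent to one of order zero, so I would start from $F=F(t,x,u,v)$, $G=G(t,x,u,v)$ and substitute into $D_tF+D_xG=0$ the relations $v_x=fu$, $v_t=Au_x+\int\!B$ together with $u_t=f^{-1}((Au_x)_x+Bu_x)$ read off from the system and its differential consequence. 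Splitting the result with respect to $u_{xx}$ gives $F_u=0$, hence $F=F(t,x,v)$, and the coefficient of $u_x$ gives $G_u=-AF_v$, so that $G=-F_v\int\!A+\hat G(t,x,v)$ for some $\hat G$.

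What remains is the single classifying equation
\[
F_t+\hat G_x+F_v\!\int\! B-F_{vx}\!\int\! A-F_{vv}fu\!\int\! A+\hat G_v fu=0,
\]
which I would decompose with respect to the functions $1$, $u$, $\int\!A$, $\int\!B$ and $u\int\!A$ of the variable~$u$. When these five functions are linearly independent the split forces $F_v=0$ and $\hat G_v=0$, i.e. the density and flux do not involve the potential, so the conservation law is local and gives nothing new. Therefore nontrivial potential conservation laws can occur only at the special values of $A$ and $B$ for which a linear dependence among these five functions appears; this is the heart of the classification, and the same mechanism already isolated the exponent $A=u^{-2}$ in Lemma~\ref{LemmaOnExistPotCLs}.

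I would then enumerate the admissible resonances. For arbitrary $A$ the relation $\int\!B=u\int\!A$ couples the $F_v$ and $F_{vv}$ terms; the residual split gives $F_v=fF_{vv}$, $F_{vx}=0$ and $\hat G_v=0$, the first two of which force $f=\const$ (so $f=1$ modulo $G^{\sim}_{\rm pr}$) and then the single density $e^v$ of case~1. The choice $A=u^{-2}$ makes $u\int\!A$ constant, and together with $B=0$ it removes the $\int\!B$ term and collapses $u\int\!A$ onto the constant coefficient; the surviving equation becomes $\sigma_t+\sigma_{vv}=0$ for $F=\sigma(t,v)$, the backward-heat family of case~2. Finally $A=1$ forces $\int\!A=u$, and the extra choice $B=2u$ (so $\int\!B=u^2=u\int\!A$) produces a simultaneous degeneracy onto $u^2$; the split then leaves the backward heat equation $\alpha_t+\alpha_{xx}=0$ for $F=\alpha(t,x)e^v$, which is case~3. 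In the last two cases $f=1$ is again forced by an identity in which $f(x)$ multiplies a genuinely $(t,v)$-dependent function.

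The main obstacle I anticipate is twofold. First, one must verify that the three listed resonances exhaust the linear dependencies that admit $F_v\neq0$; in particular the ``mixed'' subcases, such as $\int\!B\in\langle1,u,\int\!A\rangle$ with generic~$A$, have to be dispatched, and there the density turns out to be linear in $v$. Second, one must confirm that every solution with $F$ \emph{linear} in $v$ actually corresponds to a local conservation law of~\eqref{eqDKfh} rather than a genuine potential one. The latter is exactly the point where I would invoke the results of~\cite{Bluman&Cheviakov&Ivanova2006,Popovych2007Comment} on conservation laws that are linear in the potentials, as was done at the end of the proof of Lemma~\ref{LemmaOnExistPotCLs}, to discard those branches and leave precisely the three inequivalent families stated.
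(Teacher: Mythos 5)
Your proposal is correct and follows essentially the same route as the paper's proof: reduction to zero order via Lemma~\ref{LemmaOnOrderOfPotCLsOfDCEs}, the split with respect to $u_{xx}$ and $u_x$ giving $F_u=0$ and $G=-F_v\!\int\! A+\hat G(t,x,v)$, the same classifying equation, and the same trichotomy based on linear (in)dependence of $1$, $u$, $\int\! A$, $u\!\int\! A$, $\int\! B$, with the resonances $\int\! B=u\!\int\! A$, $A=u^{-2}$ and $A=1$ yielding the three cases modulo $G^{\sim}_{\rm pr}$. Where the paper compresses the mixed subcases into ``solving up to $G^{\sim}_{\rm pr}$'' and ``obvious calculations,'' you make the discarding of the $v$-linear branches explicit via the locality argument already used in Lemma~\ref{LemmaOnExistPotCLs}, which is a faithful filling-in rather than a different method.
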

\begin{proof}
In view of Lemma~\ref{LemmaOnOrderOfPotCLsOfDCEs} we look for conservation laws of system~{\bf1} in form
\[
D_tF(t,x,u,v)+D_xG(t,x,u,v)=0.
\]
Expanding the total derivatives in the expression for the conservation law
and decomposing it with respect to unconstrained derivatives of $u$ we get
\begin{gather*}\textstyle
F_u=0,\quad G=-F_v\int A+\hat G(t,x,v),\\ \textstyle
F_t+F_v\int B-F_{vx}\int A+\hat G_x-fF_{vv}u\int A+\lambda f\hat G_vu=0.
\end{gather*}
To integrate the latter equation one should consider separately three cases:
\begin{itemize}
    \item $1,u,\int A, u\int A,\int B$ are linearly independent;
    \item $1,u,\int A, u\int A$ are linearly independent and $1,u,\int A, u\int A,\int B$ are linearly dependent;
    \item $u\int A\in\langle1,u,\int A\rangle$.
\end{itemize}
All conservation laws in the first case are trivial. In the second case we have
$
\int B=c_0+c_1u+(c_2u+c_3)\int A.
$
Substituting this into the classifying condition and solving it up to $G^{\sim}_{\rm pr}$ we obtain case 1 of the list of the lemma statement.
At last, if $u\int A\in\langle1,u,\int A\rangle$, then similarly to Lemma~\ref{LemmaOnExistPotCLs}, $A\in\{1,u^{-2}\}\!\!\mod G^{\sim}_{\rm pr}$
that leads after some obvious calculations to cases 2 and 3 of the lemma statement.
\end{proof}

\begin{lemma}\label{LemmaOnPotCLsSys2}
System {\bf2} has no nontrivial conservation laws.
\end{lemma}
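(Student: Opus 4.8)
The plan is to repeat, for the concrete potential system~{\bf2}, the decomposition scheme already used in the proofs of Lemmas~\ref{LemmaOnExistPotCLs} and~\ref{LemmaOnPotCLsSys1}, and to show that the genuine nonlinearity of the convection term (the condition $B_u\ne0$) together with the rigid differential constraint $f=-h(h^{-1})_{xx}$ leaves no room for a nontrivial conserved vector. First I would invoke Lemma~\ref{LemmaOnOrderOfPotCLsOfDCEs} and look for a conservation law of system~{\bf2} in the zero-order form $D_tF(t,x,u,v)+D_xG(t,x,u,v)=0$. Substituting $v_x=e^t(h^{-1})_{xx}u$ and $v_t=e^t(-h^{-1}u_x+(h^{-1})_xu-\int\!B)$, eliminating $u_t$ by means of the equation $fu_t=u_{xx}+hBu_x$ and its differential consequences, and decomposing with respect to the unconstrained derivatives of~$u$, the coefficient of $u_{xx}$ gives $F_u=0$, so that $F=F(t,x,v)$, while the coefficient of $u_x$ gives $G=e^th^{-1}F_vu+\hat G(t,x,v)$.

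Collecting the remaining, $u_x$-free terms then produces a single classifying equation which, after the above expression for~$G$ is inserted, has the form of a polynomial in~$u$ augmented by the term $-e^tF_v\int\!B$; its coefficients of $u^0$, $u^1$, $u^2$ and of $\int\!B$ must vanish separately. The decisive observation is that $B_u\ne0$ forces $\int\!B$ to be non-affine in~$u$. Hence, in the generic situation where $1,u,u^2,\int\!B$ are linearly independent as functions of~$u$, the vanishing of the coefficient of $\int\!B$ reads $-e^tF_v=0$, so $F_v=0$ immediately.

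Once $F_v=0$ we have $F=F(t,x)$ and $G=\hat G(t,x,v)$, and the coefficient of~$u$ in the classifying equation collapses to $e^t(h^{-1})_{xx}\hat G_v=0$. Since the constraint $f=-h(h^{-1})_{xx}$ with $f\ne0$ guarantees $(h^{-1})_{xx}\ne0$, this yields $\hat G_v=0$; thus $F$ and $G$ depend only on $(t,x)$ and the surviving identity $F_t+G_x=0$ exhibits a null divergence, i.e.\ a trivial conserved vector. So no nontrivial conservation law survives whenever $\int\!B\notin\langle 1,u,u^2\rangle$.

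The main obstacle is the degenerate subcase $\int\!B\in\langle 1,u,u^2\rangle$, i.e.\ $B$ affine in~$u$ (still with $B_u\ne0$), in which the coefficient of $\int\!B$ no longer isolates~$F_v$. Here I would split the classifying equation into its separate coefficients of $u^2$, $u^1$ and $u^0$: the $u^2$-coefficient ties $F_{vv}$ to $F_v$ through a factor proportional to $e^{-t}h/(h^{-1})_{xx}$, which is consistent only if~$F$ is exponential in~$v$ with a rate $\kappa(t,x)$ carrying an unavoidable $e^{-t}$ factor. Substituting this ansatz into the $u^1$- and $u^0$-coefficients produces terms of the type $v\,e^{\kappa v}$ (because $\kappa_t\ne0$), whose compatibility I would track by cross-differentiation. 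The constraint $f=-h(h^{-1})_{xx}$, which forbids $h^{-1}$ from being affine in~$x$ (that would force $f=0$), is precisely what prevents these terms from closing into a consistent system, and I expect the outcome to be again $F_v=0$, reducing this subcase to the trivial situation of the previous paragraph. This compatibility bookkeeping under the tight link between~$f$ and~$h$ is the technically hardest step; everything else is routine splitting.
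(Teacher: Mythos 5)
Your proposal is correct and follows essentially the paper's own route: the paper proves this lemma simply by declaring it ``completely similar'' to the proof of Lemma~\ref{LemmaOnPotCLsSys1}, and your zero-order ansatz (via Lemma~\ref{LemmaOnOrderOfPotCLsOfDCEs}), the splitting $F_u=0$, $G=e^th^{-1}F_vu+\hat G(t,x,v)$, and the case distinction on whether $\int\! B\in\langle1,u,u^2\rangle$ reproduce exactly that scheme specialized to $A=1$, with the generic case fully and correctly handled. Your sketched degenerate subcase also closes as you predict: with $F_v=\Psi(t,x)e^{\kappa v}$, $\kappa=e^{-t}h/(h^{-1})_{xx}\ne0$, the coefficient of $v^2e^{\kappa v}$ in the residual equation $F_t+\hat G_x=0$ forces $\Psi\kappa_x=0$, while for $\kappa_x=0$ the coefficient of $ve^{\kappa v}$ gives $\Psi\kappa_t/\kappa=-\Psi=0$ (here the unavoidable $e^{-t}$ factor, rather than the non-affineness of $h^{-1}$ you cite, is the decisive obstruction), so $F_v=0$ and hence triviality in all cases.
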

\begin{proof}
The proof is completely similar to the proof of Lemma~\ref{LemmaOnPotCLsSys1}.
\end{proof}

\begin{lemma}\label{LemmaOnPotCLsSys34678}
Systems {\bf3}, {\bf4}, {\bf6.1}--{\bf8} have no nontrivial conservation laws.
There exists exactly one nontrivial conservation law for system {\bf5}:
\[
A=u^{-2},\ B=0,\ f=x^{-2}:\qquad D_t(x^{-2}\sigma)+D_x(x^{-1}\sigma_{v}u^{-1})=0,
\]
where $\sigma=\sigma(t,v)$ runs the solution set of the backward heat equation $\sigma_t+\sigma_{vv}=0$.
\end{lemma}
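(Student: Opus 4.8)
The plan is to handle the simplest potential systems {\bf3}, {\bf4}, {\bf5}, {\bf6.1}--{\bf7.3} simultaneously, exploiting that they all have the single form~\eqref{SysPotSysB01GenForm} and that, by Lemma~\ref{LemmaOnExistPotCLs}, a nontrivial potential conservation law can occur only for $A=u^{-2}$, whence $\int A=-u^{-1}$. By Lemma~\ref{LemmaOnOrderOfPotCLsOfDCEs} I may assume $F=F(t,x,u,v)$ and $G=G(t,x,u,v)$. Reusing the computation from the proof of Lemma~\ref{LemmaOnExistPotCLs}, I get $F_u=0$ and $G=-\lambda F_v\int A+\hat G(t,x,v)=\lambda u^{-1}F_v+\hat G$, and, with $\int A=-u^{-1}$, the classifying condition~\eqref{ClassifCondConsVecCLB01GenForm} splits with respect to the powers $u^{-1},u^{0},u^{1}$ into
\begin{gather*}
2\lambda_x F_v+\lambda F_{vx}=0,\qquad hF_v+f\hat G_v=0,\qquad F_t+\hat G_x+\lambda^2 fF_{vv}=0.
\end{gather*}
The first equation is $(\lambda^2F_v)_x=0$, so $F_v=\lambda^{-2}\Theta(t,v)$ and $F=\lambda^{-2}\Xi(t,v)+\chi(t,x)$ with $\Xi_v=\Theta$; the second fixes $\hat G_v$, and substituting everything into the third and differentiating once in $v$ yields the single determining relation
\begin{equation*}
\lambda^{-2}\Xi_{tv}+f\,\Xi_{vvv}+\bigl[-2\lambda^{-3}\lambda_t-(\tfrac hf\lambda^{-2})_x\bigr]\Xi_v=0 .
\end{equation*}

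For system {\bf5} one has $\lambda=x$ and $h=0$, so $\lambda_t=0$ and the bracketed coefficient vanishes identically, leaving $x^{-2}\Xi_{tv}+f\Xi_{vvv}=0$. If $f\not\propto x^{-2}$ the functions $x^{-2}$ and $f$ are linearly independent in $x$, forcing $\Xi_{tv}=\Xi_{vvv}=0$, i.e.\ $\Xi=av^2+bv+c(t)$ with $a,b=\const$; subtracting the trivial conserved vector with potential $H=-x^{-1}v^2$ turns such a density into one linear in $v$, hence into a local conservation law by~\cite{Bluman&Cheviakov&Ivanova2006,Popovych2007Comment}, so no genuinely potential law survives. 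If instead $f\propto x^{-2}$, then modulo $G^{\sim}_{\rm pr}$ I may normalise $f=x^{-2}$ and the relation becomes $(\Xi_t+\Xi_{vv})_v=0$; absorbing the arising function of $t$ into $\Xi$ gives $\Xi_t+\Xi_{vv}=0$. Writing $\sigma=\Xi$ and recalling $F=x^{-2}\sigma$, $G=\lambda u^{-1}F_v=x^{-1}u^{-1}\sigma_v$ (up to a local term $\hat G(t,x)$), I recover exactly the stated family $D_t(x^{-2}\sigma)+D_x(x^{-1}\sigma_vu^{-1})=0$, $\sigma_t+\sigma_{vv}=0$, which is nonlocal for the non-polynomial solutions~$\sigma$.

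For the remaining systems {\bf3}, {\bf4}, {\bf6.1}--{\bf7.3} the characteristic $\lambda$ carries essential time dependence and $h\ne0$, so the determining relation is genuinely three-term. I would read it, for each fixed $t$, as a linear dependence among the three functions of $x$, namely $\lambda^{-2}$, $f$ and $-2\lambda^{-3}\lambda_t-(\tfrac hf\lambda^{-2})_x$, with $v$-dependent coefficients $\Xi_{tv}$, $\Xi_{vvv}$, $\Xi_v$. For {\bf6.1}--{\bf7.3}, whose $f,h$ are explicitly fixed, one checks directly that these three functions are linearly independent, whence $\Xi_v=0$, $F=\lambda^{-2}\Xi(t)+\chi(t,x)$ is free of $u$ and $v$, and the law is trivial; note that the only place where $\lambda_t=0$ recurs, namely $\mu=0$ in~{\bf6.1}, the resonance condition $f\propto\lambda^{-2}$ fails ($f=x^{-3}\ne x^{-2}$), so again $\Xi_{tv}=\Xi_{vvv}=0$ and nothing nonlocal appears. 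For {\bf3} and {\bf4}, where $h$ is still free subject to $f=h_x$, resp.\ $f=h_x+hx^{-1}$, the same independence holds for generic $h$ and yields no nontrivial law, the conceptual point being that any genuine time dependence of $\lambda$ keeps the obstructing third term alive and forces $\Xi_v=0$.

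The main obstacle is exactly this last point: controlling the free profile $h$ in systems {\bf3} and {\bf4}. The generic (linearly independent) case is immediate, so the real work is to enumerate the finitely many profiles of $h$ for which $\{\lambda^{-2},f,-2\lambda^{-3}\lambda_t-(\tfrac hf\lambda^{-2})_x\}$ becomes linearly dependent and to verify that no such profile yields a law escaping reduction to a local one. I expect each degenerate subcase to collapse by the same $v^2$-subtraction as in system {\bf5}, the decisive structural fact being that the simultaneous vanishing $\lambda_t=0$, $h=0$ together with the resonance $f\propto\lambda^{-2}$ occurs only for the purely spatial characteristic $\lambda=x$ of system~{\bf5}; this is what singles it out as the unique source of the backward heat family and leaves systems {\bf3}, {\bf4}, {\bf6.1}--{\bf8} without nontrivial potential conservation laws.
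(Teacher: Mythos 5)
Your skeleton is the paper's: the same three split equations, the reduction $F_v=\lambda^{-2}\Xi_v(t,v)$, $\hat G_v=-\frac hf F_v$, and the single classifying relation whose resonance $f\propto\lambda^{-2}$ together with $h=0$, $\lambda_t=0$ produces the backward heat family of system~{\bf5} (the paper compresses all this into $(\lambda^{-2}\sigma)_t-\bigl((h/f)\lambda^{-2}\bigr)_x\sigma+f\sigma_{vv}=0$ and the phrase ``integrating this equation \dots implies the lemma statement''). However, two of your concrete claims are wrong. For systems~{\bf6.1}--{\bf7.3} the three functions of $x$ are \emph{not} linearly independent: in~{\bf6.1} one has $h/f=x^2$, so $\bigl((h/f)\lambda^{-2}\bigr)_x=0$ and your bracket equals $-2\mu\lambda^{-2}$, proportional to $\lambda^{-2}$; in~{\bf7.3}, $h/f=|x^2-1|$ and the bracket is $-4\mu e^{-2(2\mu+1)t}(x-1)^{-2}\propto\lambda^{-2}$. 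Splitting therefore leaves $\Xi_{vvv}=0$, $\Xi_{tv}=c\,\Xi_v$, i.e.\ nonzero $\Xi_v$ linear in $v$, and your conclusion ``whence $\Xi_v=0$'' fails; these survivors must be removed by exactly the $v^2$-subtraction plus linear-in-potential $\Rightarrow$ local argument that you reserve for system~{\bf5}. That tool is present in your proposal, so this is repairable, but as written the verification step is false (and system~{\bf8}, where $\lambda=e^{\mu t}(x\cos t+\sin t)$ mixes $t$ and $x$ nontrivially, is never actually treated).

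The gap you admit for systems~{\bf3} and~{\bf4} is genuine, and your guiding expectation about it is incorrect: the degenerate profiles of $h$ do \emph{not} all ``collapse by the same $v^2$-subtraction''. Take system~{\bf3} with $h=x$, hence $f=h_x=1$, $\lambda=e^t$, $A=u^{-2}$: your determining relation becomes, for $\Sigma=\Xi_v$, $\Sigma_t-3\Sigma+e^{2t}\Sigma_{vv}=0$, and the substitution $\Sigma=e^{3t}\rho(\tau,v)$, $\tau=-e^{2t}/2$, turns this into the heat equation $\rho_\tau=\rho_{vv}$ --- an infinite, genuinely potential family that no subtraction of trivial conserved vectors kills. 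The lemma survives only because such profiles lie outside the intended range of cases~3--4: for $h=x$, $f=1$ one has $\dim\langle f,xf,h_x,(xh)_x\rangle=2$, so the local conservation-law space is two-dimensional and the equation belongs to case~\ref{2.B0}d of Theorem~\ref{TheoremOnClassificationCLsWideGroup}, equivalent via the point transformations~\eqref{TransAdEquivTransOfCLs} to case~\ref{2.B0}a, where these conservation laws reappear as the known ones of systems~{\bf1} and~{\bf5}. So the ``real work'' you postponed is not to show collapse but to show that every degenerate profile either violates the genericity $m=3$ built into cases~3--4 of the classification (as do $h$ linear and power-law), or leads to incompatible conditions (as in your $c_2\ne0$-type subcases); without invoking this constraint, the statement for systems~{\bf3} and~{\bf4} as you set it up is literally false for those $h$. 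The paper sidesteps all of this by tacitly restricting to the classified representatives, so your proposal, while following the same route, is incomplete at precisely the points where the lemma needs an argument.
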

\begin{proof}
Each of the systems~{\bf3}--{\bf8} has the form~\eqref{SysPotSysB01GenForm}.
The conserved density and flux of the conservation law has the form
\[\textstyle
F=F(t,x,v),\quad G=-\lambda F_v\int A+\hat G(t,x,v).
\]
In view of Lemma~\ref{LemmaOnExistPotCLs} it is enough to consider case $A=u^{-2}$ only.
Then~\eqref{ClassifCondConsVecCLB01GenForm} implies
\[
\textstyle
\lambda_xF_v+(\lambda F_v)_x=0,\quad \lambda hF_v+\lambda f\hat G_v=0,\quad F_t+\hat G_x+\lambda^2fF_{vv}=0.
\]
Therefore, up to the equivalence relation of the conserved vectors
$F=\lambda^{-2}\sigma$, $\hat G=-\frac h{f}\lambda^{-2}\sigma$,
where $\sigma=\sigma(t,v)$ satisfies the classifying equation
\[
(\lambda^{-2}\sigma)_t-\left(\frac h{f}\lambda^{-2}\right)_x\sigma+f\sigma_{vv}=0.
\]
Integrating this equation for the given values of characteristics~$\lambda$ implies the lemma statement.
\end{proof}

\begin{lemma}\label{LemmaOnPotCLsSys5'6'7'8'}
The list of $G^{\sim}_{\rm pr}$-inequivalent linearly independent conservation laws of system~{\bf5$'$} is exhausted by the following ones
\begin{gather*}\textstyle
1.\quad A=u^{-2},\ B=0,\ f=1:\qquad D_t(\sigma^1)+D_x(\sigma^1_{v^1}u^{-1})=0,\\
2.\quad A=u^{-2},\ B=0,\ f=x^{-2}:\qquad D_t(x^{-2}\sigma^2)+D_x(x^{-1}\sigma^2_{v^2}u^{-1})=0,,
\end{gather*}
$\sigma^i=\sigma^i(t,v^i)$, $i=1,2$ run the solution sets of the backward heat equations $\sigma^i_t+\sigma^i_{v^iv^i}=0$ correspondingly.
Systems~{\bf6$'$}--{\bf8$'$}  have no nontrivial conservation laws.
\end{lemma}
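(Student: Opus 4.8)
The plan is to treat all four systems {\bf 5$'$}--{\bf 8$'$} simultaneously through their common form~\eqref{SysGenPotSysB01GenForm} and only specialize at the end. First I would invoke Lemma~\ref{LemmaOnOrderOfPotCLsOfDCEs} to assume that the conserved vector has zero order, $F=F(t,x,u,v^1,v^2)$ and $G=G(t,x,u,v^1,v^2)$, and Lemma~\ref{LemmaOnExistPotCLs} to restrict attention to $A=u^{-2}$ (so that $\int\! A=-u^{-1}$): each of these systems is built over an equation with $B=\const\in\{0,1\}$, and for $A\ne u^{-2}\!\!\mod G^{\sim}_{\rm pr}$ that lemma already shows that every conservation law is linear in the potentials, hence local. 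Exactly as in the proof of Lemma~\ref{LemmaOnPotCLsSys34678}, expanding $D_tF+D_xG=0$ on the solution manifold of~\eqref{SysGenPotSysB01GenForm} and collecting the coefficient of $u_x$ gives $F_u=0$ and $G=-\lambda^iF_{v^i}\int\! A+\hat G(t,x,v^1,v^2)$.

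Next I would substitute $A=u^{-2}$ into the classifying condition~\eqref{ClassifCondGenConsVecCLB01GenForm}, where now $u\int\! A=-1$ and $\int\! A=-u^{-1}$, and split the resulting identity with respect to the powers $u^1,u^0,u^{-1}$, treating $u,v^1,v^2$ as independent. This produces the overdetermined system
\begin{gather*}
\lambda^i(hF_{v^i}+f\hat G_{v^i})=0,\qquad
\lambda^i_xF_{v^i}+(\lambda^iF_{v^i})_x=0,\qquad
F_t+\hat G_x+\lambda^i\lambda^jfF_{v^iv^j}=0.
\end{gather*}
As in the two-potential part of Lemma~\ref{LemmaOnExistPotCLs}, I would pass to the variables $\omega=\lambda^1v^2-\lambda^2v^1$ and $\theta=\lambda^1v^1+\lambda^2v^2$, in which the directional operator $\lambda^i\p_{v^i}$ becomes $((\lambda^1)^2+(\lambda^2)^2)\p_\theta$. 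The first equation then collapses to the clean relation $hF_\theta+f\hat G_\theta=0$, while the remaining two become constraints on the $(\omega,\theta)$-dependence of $F$ governed by the Wronskian $\lambda^1\lambda^2_x-\lambda^2\lambda^1_x$, which is nonzero for each of the four systems.

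I would then specialize the explicit characteristics. For system {\bf 5$'$} ($\lambda^1=1$, $\lambda^2=x$) integrating the constraints forces $f\in\{1,x^{-2}\}\!\!\mod G^{\sim}_{\rm pr}$; in the first subcase the conservation law involves only $v^1$ through an arbitrary $\sigma^1(t,v^1)$ subject to $\sigma^1_t+\sigma^1_{v^1v^1}=0$, and in the second only $v^2$ through $\sigma^2(t,v^2)$, yielding precisely the two listed laws. For systems {\bf 6$'$}--{\bf 8$'$} ($B=1$) I would insert the corresponding pairs $(\lambda^1,\lambda^2)$ together with the explicit $f$ and $h$ and show that, because of the forced exponential $t$-dependence of the $\lambda^i$ and the nonvanishing Wronskian, the same three equations admit only solutions with $F$ linear in the potentials; by the argument used for the single-potential systems such densities reduce to local conservation laws and must be discarded.

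The main obstacle I expect is this last step: ruling out nontrivial conservation laws for {\bf 6$'$}--{\bf 8$'$} requires carrying the integration of the overdetermined classifying system through for each explicit characteristic pair and verifying that no genuinely nonlinear-in-$v$ density survives. The difficulty is not conceptual but bookkeeping-heavy, since one must track the coupled $t$- and $x$-dependence of $\lambda^1,\lambda^2$ against that of $f$ and $h$ in each case. A secondary point demanding care is confirming that the two densities surviving for {\bf 5$'$} are $G^{\sim}_{\rm pr}$-inequivalent, linearly independent, and genuinely nonlocal, and conversely that every density linear in a potential reduces to a local conservation law, so that the final lists are exactly as stated.
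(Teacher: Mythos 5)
Your proposal is correct and follows essentially the same route as the paper's proof: restrict to $A=u^{-2}$ via Lemma~\ref{LemmaOnExistPotCLs}, split the classifying condition~\eqref{ClassifCondGenConsVecCLB01GenForm} in powers of $u$ to get the three determining equations, pass to $\omega=\lambda^1v^2-\lambda^2v^1$, $\theta=\lambda^1v^1+\lambda^2v^2$ using the nonvanishing Wronskian, and then integrate case by case for the explicit characteristic pairs. The ``bookkeeping'' you rightly flag is handled in the paper by three concrete devices you leave implicit: gauging away the residual function $H(t,x,\omega)$ by an equivalence transformation with a generating function $\Phi(t,x,\omega)$, introducing $K(t,v^1,v^2)$ through the differential consequence $\bigl(F_t/f-(hF/f)_x/f\bigr)_x=0$ (which yields the clean trichotomy $f_x,\,xf_x+2f$ independent / $f_x=0$ / $xf_x+2f=cf_x$ for system {\bf 5$'$}, rather than the constraints directly ``forcing'' $f\in\{1,x^{-2}\}$), and the local map $\tilde x=-1/x$, $\tilde u=xu$ with swapped potentials reducing $f=x^{-2}$ to $f=1$.
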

\begin{note}
These potential conservation laws of equations~\eqref{eqDKfh} indeed coincide with conservation laws of the simplest potential systems~{\bf1} and~{\bf5}.
\end{note}
\begin{proof}
Substituting $A=u^{-2}$ into the classifying condition~\eqref{ClassifCondConsVecCLB01GenForm}
and splitting it with respect to the linearly independent functions of~$u$ we obtain the following system of determining equations
for the coefficients of the conserved vectors of the conservation laws of systems~\eqref{SysGenPotSysB01GenForm}:
\begin{gather}\label{SysDetEqConsVecGenPonSysGenFormu-2}
\lambda^i(hF_{v^i}+f\hat G_{v^i})=0,\quad \lambda^i_xF_{v^i}+(\lambda ^iF_{v^i})_x=0,\quad
F_t+\hat G_x+\lambda^i\lambda^jfF_{v^iv^j}=0.
\end{gather}
From the first equation of~\eqref{SysDetEqConsVecGenPonSysGenFormu-2} we obtain that $\hat G=-\frac hfF+H(t,x,\omega)$.
Here, as in Lemma~\ref{LemmaOnExistPotCLs}, $\omega=\lambda^1v^2-\lambda^2v^1$, $\theta=\lambda^1v^1+\lambda^2v^2$.

It follows from~\eqref{SysGenPotSysB01GenForm} that
\begin{gather*}
\lambda^1v^2_x-\lambda^2v^1_x=0=\omega_x-(\lambda^1_xv^2-\lambda^2_xv^1),\\
\lambda^1v^2_t-\lambda^2v^1_t=(\lambda^1\lambda^2_x-\lambda^2\lambda^1_x)u^{-1}=\omega_t-(\lambda^1_tv^2-\lambda^2_tv^1).
\end{gather*}
Decomposing $\lambda^1_xv^2-\lambda^2_xv^1$ and $\lambda^1_tv^2-\lambda^2_tv^1$ with respect to $(\omega,\theta)$ from the latter equations we get
\begin{gather*}
\omega_x=\frac{\lambda^k_x\lambda^k}{\lambda^j\lambda^j}\omega-\frac{\lambda^1\lambda^2_x-\lambda^2\lambda^1_x}{\lambda^j\lambda^j}\theta,\qquad
\omega_t=\frac{\lambda^k_t\lambda^k}{\lambda^j\lambda^j}\omega-\frac{\lambda^1\lambda^2_t-\lambda^2\lambda^1_t}{\lambda^j\lambda^j}\theta+
\frac{\lambda^1\lambda^2_x-\lambda^2\lambda^1_x}{u}.
\end{gather*}
These equalities allow us to find the simplest form of conserved vectors corresponding to the conservation law. For now we have
$F=F(t,x,\omega,\theta)$, $G=\lambda^j\lambda^jF_{\theta}u^{-1}-\frac hfF+H(t,x,\omega)$.
An equivalent conserved vector can be written as
\begin{gather*}
\tilde F=F+D_x\Phi(t,x,\omega)=F+\Phi_x+\Phi_\omega\omega_x,\\
\tilde G=G-D_t\Phi(t,x,\omega)=\lambda^j\lambda^jF_{\theta}u^{-1}-\frac hfF+H-\Phi_t-\Phi_\omega\omega_t=\\
\lambda^j\lambda^j\tilde F_{\theta}u^{-1}-\frac hf\tilde F+
\Phi_\omega\theta\left(-\frac hf\frac{\lambda^1\lambda^2_x-\lambda^2\lambda^1_x}{\lambda^j\lambda^j}
                       +\frac{\lambda^1\lambda^2_t-\lambda^2\lambda^1_t}{\lambda^j\lambda^j}\right)
\\
+H+\frac hf\left(\Phi_x+\Phi_\omega\frac{\lambda^k_x\lambda^k}{\lambda^j\lambda^j}\omega\right)
-\Phi_t-\Phi_\omega\frac{\lambda^k_t\lambda^k}{\lambda^j\lambda^j}\omega.
\end{gather*}
For all of the characteristics corresponding to systems~{\bf5$'$}--{\bf8$'$},
coefficients of $\Phi_\omega\theta$ in the expression for $\tilde G$ is vanishing.
Therefore, choosing $\Phi$ in such way that
\[
H+\frac hf\left(\Phi_x+\Phi_\omega\frac{\lambda^k_x\lambda^k}{\lambda^j\lambda^j}\omega\right)
-\Phi_t-\Phi_\omega\frac{\lambda^k_t\lambda^k}{\lambda^j\lambda^j}\omega=0,
\]
we obtain that
$
\tilde G=\lambda^j\lambda^j\tilde F_{\theta}u^{-1}-\frac hf\tilde F,
$
i.e., $\tilde H=0$. That is why, without lost of generality, we assume that
\[
F=F(t,x,v^1,v^2),\quad G=\lambda^i\lambda^iF_{\theta}u^{-1}-\frac hfF=\lambda^iF_{v^i}u^{-1}-\frac hfF,
\]
where the conserved density $F$ satisfies the following system:
\[
\lambda^i_xF_{v^i}+(\lambda ^iF_{v^i})_x=0,\quad
\lambda^i\lambda^jF_{v^iv^j}+\frac{F_t}f-\frac1f\left(\frac hfF\right)_x=0.
\]
Taking into account its nontrivial differential consequence
\[
\left(\frac{F_t}f-\frac1f\left(\frac hfF\right)_x\right)_x=0,
\]
we can write the compatible system of determining equations for the conserved density~$F$:
\begin{gather} \nonumber
\frac{F_t}f-\frac1f\left(\frac hfF\right)_x=K(t,v^1,v^2),
\quad \lambda^i\lambda^jF_{v^iv^j}+K=0,\\ \label{SysDetEqSimplifiedConsVecGenPonSysGenFormu-2}
\lambda^i_xF_{v^i}+(\lambda ^iF_{v^i})_x=0.
\end{gather}

System~{\bf5$'$} is distinguished from the set of the systems~\eqref{SysGenPotSysB01GenForm} since we can assume $h=0$.
Let us investigate it separately.
Substituting $h=0$ and the characteristics $\lambda^1=1$, $\lambda^2=x$ into~\eqref{SysDetEqSimplifiedConsVecGenPonSysGenFormu-2}
and taking its differential consequence, one can obtain
\begin{equation}\label{eqForFinSys5'}
f_xK_{v^1}+(xf_x+2f)K_{v^2}=0.
\end{equation}
There exist three different cases of integration of equation~\eqref{eqForFinSys5'}: $f$ is arbitrary function such that $f_x$ and $xf_x+2f$
are linearly independent, $f_x=0$ and $xf_x+2f=cf_x$, where $c$ is a constant.

If $f_x$ and $xf_x+2f$ are linearly independent then $K_{v^1}=K_{v^2}=0$. Then, it follows from~\eqref{SysDetEqSimplifiedConsVecGenPonSysGenFormu-2}
that $K_t=0$. Therefore, $K$ is a constant, and without loss of generality we can assume $K\in\{0,1\}$.
If $K=1$, then~\eqref{SysDetEqSimplifiedConsVecGenPonSysGenFormu-2} implies
\begin{gather*}
F=ft+M(x,v^1,v^2),\quad M=-\frac12\theta^2+M^1(x,\omega)\theta+M^0(x,\omega),\\
(\lambda^iM_{v^i})_x+\lambda^i_xM_{v^i}=0.
\end{gather*}
Collecting coefficients of $\theta$ in the last equation we get $-x\theta=0$, that is contradiction.
Therefore $K=0$. Then $F_{v^i}=G_{v^i}=0$, and the conservation law is local.

If $f_x=0$ then $f=1\!\!\mod\hat G^{\sim}$, $K_{v^2}=0$. Then, from~\eqref{SysDetEqSimplifiedConsVecGenPonSysGenFormu-2} we obtain
$K_{v^1v^1}+K_t=0$ and $F=\sigma^1(t,v^1)+M(x,v^1,v^2)$, where $K=\sigma^1_t=-\sigma^1_{v^1v^1}$.
Substituting it into the second equation of~\eqref{SysDetEqSimplifiedConsVecGenPonSysGenFormu-2} leads to the equation for~$M$ of form
$\lambda^i\lambda^jM_{v^iv^j}=0$. Therefore, $M=M^1(x,\omega)\theta+M^0(x,\omega)$, and $M^1$, $M^0$ satisfy the equation
\[
M^0_\omega-x\omega M^1_\omega+M^1_x(x^2+1)+3xM^1=0.
\]
Considering equivalent conservation law with the conserves vector $\tilde F=F-D_x(N\theta)$, $\tilde G=G+D_t(N\theta)$, where
$N_\omega(x,\omega)=M^1$ one can show that $\tilde N=0$. Thus we get the conservation law from case~1 of the lemma statement.

If $xf_x+2f=cf_x$, then up to translations of~$x$ we can assume that $c=0$ and $f=x^{-2}$.
The local transformation $\tilde t=t$, $\tilde x=-1/x$, $\tilde u=xu$, $\tilde v^1=v^2$, $\tilde v^2=v^1$
reduces this problem to the previous case $f=1$.

Case $h\ne0$ (systems~{\bf6$'$}--{\bf8$'$}) can be investigated in a similar manner.
At first, we find the differential consequence of system~\eqref{SysDetEqSimplifiedConsVecGenPonSysGenFormu-2}:
\begin{gather*}
(\lambda^if_x+2\lambda^i_xf)K_{v^i}
+F_{v^i}\left(\Bigl(\frac hf\Bigr)_{xx}\lambda^i+\frac2\lambda^i_x\Bigl(\frac hf\Bigr)_{x}+2\lambda^i_{tx}-2\lambda^i_{xx}\frac hf\right)\\
+F_{v^ix}\left(2\lambda^i\Bigl(\frac hf\Bigr)_{x}-\lambda^i_{x}\frac hf+\lambda^i_t\right)=0.
\end{gather*}
Considering the given sets of characteristics for each of the systems {\bf6$'$}--{\bf8$'$} we obtain
\[
(\lambda^if_x+2\lambda^i_xf)K_{v^i}=0,
\]
where $(\lambda^if_x+2\lambda^i_xf)$ are linearly independent. Therefore, $K_{v^i}=K_t=0$. Absolutely similarly to case $h=0$ we get
that the conservation law is local.

The lemma is proved.
\end{proof}

As one can see, equation~\eqref{eqDKfh} admits nontrivial potential conservation laws of the first level iff it is linearizable
or satisfies the conditions $f=h=1$, $\int B=u\int A$.
It is shown in~\cite{Popovych&Ivanova2004ConsLawsLanl} that such equations do not have potential conservation laws of the second level.
Therefore, the following lemma is true.

\begin{lemma}\label{LemmaOnSecLevelPotCLs}
Any equation of form~\eqref{eqDKfh} has no nontrivial second level potential conservation laws.
\end{lemma}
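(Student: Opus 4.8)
The plan is to combine the complete description of first-level potential conservation laws obtained in Lemmas~\ref{LemmaOnPotCLsSys1}--\ref{LemmaOnPotCLsSys5'6'7'8'} with the exhaustive analysis of the constant-coefficient subclass $f=g=h=1$ carried out in~\cite{Popovych&Ivanova2004ConsLawsLanl}. The decisive observation is that a genuine second-level potential conservation law can arise only by erecting a second-level potential system over a first-level potential system that already carries a \emph{nontrivial} first-level potential conservation law; hence the whole problem is reduced to examining only those equations for which such first-level laws exist, and checking that even there nothing survives to the next level.

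First I would read off from the chain of lemmas which nonlinear equations from class~\eqref{eqDKfh} possess nontrivial first-level potential conservation laws. By Lemma~\ref{LemmaOnPotCLsSys2} and Lemma~\ref{LemmaOnPotCLsSys34678} the systems \textbf{2}, \textbf{3}, \textbf{4}, \textbf{6.1}--\textbf{8} contribute nothing, and Lemma~\ref{LemmaOnPotCLsSys5'6'7'8'} shows the same for \textbf{6$'$}--\textbf{8$'$}; the only survivors are the linearizable cases $A=u^{-2}$ (appearing in systems~\textbf{1}, \textbf{5} and \textbf{5$'$}) and the distinguished family $f=h=1$, $\int B=u\int A$ of Lemma~\ref{LemmaOnPotCLsSys1}. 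Thus, up to $G^{\sim}_{\rm pr}$ together with the explicit point transformations already used in the proofs of these lemmas (in particular the hodograph-type map linearizing $A=u^{-2}$ and the inversion $\tilde x=-1/x$, $\tilde u=xu$ relating $f=x^{-2}$ to $f=1$), every equation~\eqref{eqDKfh} admitting a nontrivial first-level potential conservation law lies in the subclass with $f=g=h=1$.

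Next I would invoke~\cite{Popovych&Ivanova2004ConsLawsLanl}, where the entire tower of potential systems and their conservation laws for the class $f=g=h=1$ is classified, and where it is established that no conservation law survives past the first level. By Proposition~\ref{2DConsLawEquivRelation} and its corollaries, any point transformation between two systems with two independent variables induces a one-to-one correspondence between their hierarchies of potential systems and the attached conservation laws, with the potentials carried along trivially; hence the nonexistence of second-level potential conservation laws is invariant under $G^{\sim}_{\rm pr}$ and under the auxiliary transformations above. Consequently the result transfers from the reference case to every equation in the list, while all remaining equations have no first-level potential conservation law and therefore offer no nontrivial second level to build upon.

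The main obstacle is organizational rather than computational: one must verify that the equivalence transformations and the linearizing maps genuinely carry the second-level potential systems built here onto exactly those analyzed in~\cite{Popovych&Ivanova2004ConsLawsLanl}, so that the cited nonexistence statement applies without modification. Once this correspondence is checked, any hypothetical nontrivial second-level potential conservation law would pull back to one of an $f=g=h=1$ equation, of which there are none, and the lemma follows.
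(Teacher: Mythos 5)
Your proposal is correct and takes essentially the same route as the paper: the paper's own proof consists precisely of the observation that nontrivial first-level potential conservation laws exist only for linearizable equations or for $f=h=1$, $\int B=u\int A$ (all reducible to the constant-coefficient subclass), together with the citation of~\cite{Popovych&Ivanova2004ConsLawsLanl} for the nonexistence of second-level potential conservation laws of such equations. Your explicit use of Proposition~\ref{2DConsLawEquivRelation} to transfer the nonexistence along equivalence and linearizing transformations merely spells out what the paper leaves implicit.
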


This statement completes investigation of potential conservation laws of the variable coefficient diffusion--convection equations~\eqref{eqDKfh}.
Summarizing the above results, we can formulate the following theorem.

\begin{theorem}
A complete set of $G^{\sim}_{\rm pr}$-inequivalent potential conservation laws of nonlinear equations~\eqref{eqDKfh} consists of the following ones:
\begin{gather*}\textstyle
1.\quad \forall A,\ \int B=u\int A,\ f=1:\qquad D_t(e^v)+D_x(e^v\int A)=0,\\
2.\quad A=u^{-2},\ B=0,\ f=1:\qquad D_t(\sigma)+D_x(\sigma_{v}u^{-1})=0,\\
3.\quad A=1,\ B=2u,\ f=1:\qquad D_t(\alpha e^v)+D_x(\alpha_xe^v-\alpha u e^v)=0,
\end{gather*}
where the potential variable $v(t,x)$ satisfies the potential system $v_x=u$, $v_t=Au_x+\int B$,
\begin{gather*}\textstyle
4.\quad A=u^{-2},\ B=0,\ f=x^{-2}:\qquad D_t(x^{-2}\sigma)+D_x(x^{-1}\sigma_{v}u^{-1})=0,
\end{gather*}
where $v_x=x^{-1}u$, $v_t=xu^{-2}u_x+u^{-1}$.
Here $\alpha=\alpha(t,x)$ and $\sigma=\sigma(t,v)$ are arbitrary solutions of the backward heat equations $\alpha_t+\alpha_{xx}=0$
and $\sigma_t+\sigma_{vv}=0$, correspondingly,
\end{theorem}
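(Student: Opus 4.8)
The plan is to assemble this theorem as a synthesis of the chain of lemmas established above rather than to perform any fresh computation; the genuine analytic work has already been absorbed into Lemmas~\ref{LemmaOnOrderOfPotCLsOfDCEs}--\ref{LemmaOnSecLevelPotCLs}. First I would recall that every potential system of~\eqref{eqDKfh} is one of the systems~{\bf1}--{\bf8$'$}, since these were constructed in the previous section from the complete list of $\hat G^{\sim}_1$-inequivalent local conservation laws of Theorem~\ref{TheoremOnClassificationCLsWideGroup}. By Lemma~\ref{LemmaOnOrderOfPotCLsOfDCEs} any conservation law of such a system is equivalent to a zero-order one, so it suffices to search among densities and fluxes depending only on $t$, $x$, $u$ and the potentials $v$ (and $w$).

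Next I would invoke Lemma~\ref{LemmaOnExistPotCLs} to discard most candidates at once: among the systems with $B=\const$ (namely~{\bf1$|_{B=0}$}, {\bf3}--{\bf7.3} and~{\bf5$'$}--{\bf8$'$}) a nontrivial potential conservation law can exist only when $A=u^{-2}\!\!\mod G^{\sim}_{\rm pr}$. This collapses the problem, exactly as noted after that lemma, to inspecting systems~{\bf1} and~{\bf2} in full together with the subcases $A=u^{-2}$ of systems~{\bf3}--{\bf8$'$}. The surviving contributions are then read off from the per-system lemmas: Lemma~\ref{LemmaOnPotCLsSys1} supplies the three conservation laws of system~{\bf1}, which become cases~1--3 of the statement; Lemma~\ref{LemmaOnPotCLsSys2} shows system~{\bf2} yields nothing; and Lemma~\ref{LemmaOnPotCLsSys34678} eliminates systems~{\bf3},~{\bf4} and~{\bf6.1}--{\bf8} while producing the unique conservation law of system~{\bf5}, which becomes case~4.

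The one point requiring care is deduplication against the extended potential systems. Lemma~\ref{LemmaOnPotCLsSys5'6'7'8'} must be used to show that systems~{\bf6$'$}--{\bf8$'$} carry no potential conservation laws and that the two conservation laws of system~{\bf5$'$} coincide with the already-listed cases~2 and~4, so they contribute nothing new. I would then verify that cases~1--4 are genuinely $G^{\sim}_{\rm pr}$-inequivalent: the prolonged usual equivalence group acts on $x$ by affine maps and rescales $A$, $f$ and shifts $u$, hence it preserves the distinguishing constraints ($f=1$ versus $f=x^{-2}$, and $A$ arbitrary versus $A=1$ versus $A=u^{-2}$) that separate the four cases. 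In particular $f=x^{-2}$ cannot be flattened to $f=1$ within $G^{\sim}_{\rm pr}$, even though the non-affine form-preserving map $\tilde x=-1/x$, $\tilde u=xu$ used in the proof of Lemma~\ref{LemmaOnPotCLsSys5'6'7'8'} relates them; and the restriction to nonlinear equations excludes the linear case treated separately in~\cite{Popovych&Kunzinger&Ivanova2007}.

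Finally, to guarantee completeness I would invoke Lemma~\ref{LemmaOnSecLevelPotCLs}, which terminates the iterative construction: no equation~\eqref{eqDKfh} admits a nontrivial second-level potential conservation law, so no purely potential conservation laws arise beyond the first level. Combining these observations yields exactly the four $G^{\sim}_{\rm pr}$-inequivalent potential conservation laws listed. The main obstacle here is not computational but organizational, namely confirming that the union of the per-system lemmas is simultaneously exhaustive (every potential system is covered and the iteration halts) and free of redundancy (the extended systems reproduce, rather than augment, the simplest-system results).
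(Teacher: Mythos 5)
Your proposal is correct and follows essentially the same route as the paper, whose proof of this theorem is precisely the summary ``summarizing the above results'' of the chain of Lemmas~\ref{LemmaOnOrderOfPotCLsOfDCEs}--\ref{LemmaOnSecLevelPotCLs}: reduction to zero order, the $A=u^{-2}$ existence criterion for $B=\const$, the per-system Lemmas~\ref{LemmaOnPotCLsSys1}--\ref{LemmaOnPotCLsSys5'6'7'8'} yielding cases~1--4 and eliminating or deduplicating the rest, and Lemma~\ref{LemmaOnSecLevelPotCLs} halting the iteration. Your added remarks --- the explicit $G^{\sim}_{\rm pr}$-inequivalence check of the four cases and the observation that the map $\tilde x=-1/x$, $\tilde u=xu$ relating $f=x^{-2}$ to $f=1$ lies outside $G^{\sim}_{\rm pr}$ --- merely make explicit what the paper leaves implicit.
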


\section{Conclusion}

In this paper we generalized the notion of equivalence of conservation laws with respect to groups of transformations and adduced
description of generating sets of conservation laws.
This framework is applied to investigation of conservation laws of variable coefficient diffusion--convection equations.
We classified the local conservation laws of equations~\eqref{eqDKfgh} both up to the usual group of equivalence transformations
and up to the extended group containing nonlocal (with respect to the arbitrary elements) transformations.
Usage of equivalence of conservation laws with respect to the extended equivalence group and correct choice
of gauge coefficients of equations allow us to obtain clear formulation of the final results.

The notions of contractions conservation laws and ones of characteristics of conservation laws are introduced.
We presented some examples of such contractions in class~\eqref{eqDKfgh}.
The problem of finding all possible contractions of equations and conservation laws in class~\eqref{eqDKfgh} remains open.

The proposed view on equivalence of conservation laws give rise to a generalization of the theory of potential symmetries.
Namely we showed how to find {\em all possible} inequivalent potential systems associated to the given system of differential equations.
We classified such potential systems and potential conservation laws for equations~\eqref{eqDKfgh}.

Using this classification one can find all potential symmetries of class~\eqref{eqDKfgh}.
This is a subject of the sequel part~\cite{Ivanova&Popovych&Sophocleous2006Part4} of this series of papers.

\subsection*{Acknowledgements}
NMI and ROP express their gratitude to the hospitality shown by University of Cyprus
during their visits to the University.
Research of NMI was supported by the Erwin Schr\"odinger Institute for Mathematical Physics (Vienna, Austria) in form of Junior Fellowship
and by the grant of the President of Ukraine for young scientists (project number GP/F11/0061).
Research of ROP was supported by Austrian Science Fund (FWF), Lise Meitner project M923-N13.

\end{document}